\newcommand{\REM}[1]{}
\newcommand {\A}{\mathcal{A}}
\newcommand {\modG}{\tilde{G}}
\newcommand {\modA}{\tilde{\mathcal{A}}}
\newcommand {\modE}{\tilde{E}}
\newcommand {\p}{\mathcal{P}}
\newcommand{\EE}{\mathcal{E}}
\newcommand{\odd}{\mathcal{O}}
\newcommand{\U}{\mathcal{U}}
\newcommand{\X}{\mathcal{X}}
\newcommand{\Y}{\mathcal{Y}}
\newcommand{\tightpair} {tight-pair }
\newcommand{\oddone}{\mathcal{O}_1}
\newcommand{\evenone}{\mathcal{E}_1}
\newcommand{\unone}{\mathcal{U}_1}
\newcommand{\oddtwo}{\mathcal{O}_2}
\newcommand{\eventwo}{\mathcal{E}_2}
\newcommand{\untwo}{\mathcal{U}_2}
\newcommand{\fonly}{$\A_f$}
\newcommand{\sonly}{$\A_s$ }
\newcommand{\fands}{$\A_{f/s}$ }
\begin{document}
\title{Popular matchings:  structure and cheating strategies \thanks{This work was supported in part by NSF grant CCF-0830737.}}
\author{}

\author{Meghana Nasre}
\institute{University of Texas at Austin,  USA}
\maketitle
\begin{abstract}
We consider the cheating strategies for the popular matchings problem.
Let $G = (\A \cup \p, E)$ be a bipartite graph where $\A$ denotes a set of agents, $\p$ denotes a set of posts
and the edges in $E$ are ranked. Each agent ranks a subset of posts in an order of preference, possibly involving ties.
A matching $M$ is popular if there exists no matching $M'$ such that the number of agents that
prefer $M'$ to $M$ exceeds the number of agents that prefer $M$ to $M'$.
Consider a centralized market where agents submit their preferences and a central authority matches
agents to posts according to the notion of popularity. 
Since a popular matching need not be unique, we assume that the central authority
chooses an arbitrary popular
matching. Let $a_1$ be the sole manipulative agent who is aware of the true
preference lists of all other agents.
The goal of $a_1$ is to falsify her preference list to get 
{\em better always}, that is, in the falsified instance
(i) every popular matching matches $a_1$ to a post that is at least as 
good as the most-preferred post that she gets when she was truthful, and 
(ii) some popular matching matches $a_1$ to a post better than the most-preferred post $p$ that 
she gets when she was truthful, assuming that $p$ is not one of $a_1$'s (true) most-preferred posts.
We show that the optimal cheating strategy for a single agent to get {\em better always} can be computed 
in $O(m+n)$ time when preference lists
are all strict and  
in $O(\sqrt{n}m)$ time
when preference lists are allowed to contain ties.
Here $n = |\A| + |\p|$ and $m = |E|$.
Next, we consider the set of agents, their preference lists and
the popular matchings algorithm as a non-cooperative game. We show a necessary
and sufficient condition for the true preference lists of the agents to
be an equilibrium of this game when each agent wishes to get {\em better always}.

To compute the cheating strategies, we develop a {\em switching graph} characterization of the popular
matchings problem involving ties. The switching graph characterization was studied for the case of strict lists
by McDermid and Irving (J. Comb. Optim. 2011) and  was open for the case of ties. We show an $O(\sqrt{n}m)$ algorithm to compute the
set of {\em popular pairs} using the switching graph.
These results are of independent interest and answer a part of the open questions posed by McDermid and Irving.
\end{abstract}
\section{Introduction}
We consider the cheating strategies for the popular matchings problem.
Let $G = (\A \cup \p, E)$ be a bipartite graph where $\A$ denotes a set of agents, $\p$ denotes a set of posts,
and the edges in $E$ are ranked. Each agent ranks a subset of posts in an order of preference, possibly involving ties.
This ranking of posts by an agent is called the preference list of the agent.
An agent $a$ prefers post $p_i$ to post $p_j$ if the rank of post $p_i$ is
smaller than the rank of post $p_j$ in $a$'s preference list.
An agent $a$ is indifferent between posts $p_i$ and $p_j$ if they have the same rank on $a$'s preference list.
When agents can be indifferent between posts, the preference lists are said to contain ties, otherwise the preference lists
are strict.
A matching $M$ of $G$ is a subset of edges, no two of which share an end point.
For a matched vertex $u$, let $M(u)$ denote its partner in the matching $M$.
An agent $a$ prefers a matching $M$ to another matching $M'$ if (i) $a$ is matched in $M$ but unmatched in $M'$,
or (ii) $a$ prefers $M(a)$ to $M'(a)$.
\begin{definition}
A matching $M$ is {\em more popular than} $M'$ if the number of agents that prefer $M$ is greater than the number of agents that prefer $M'$.
A matching $M$ is {\em popular} if there is no matching $M'$ that is more popular than $M$.
\end{definition}

There exist simple instances that do not admit any popular matching -- however, when an instance admits a popular
matching, there may be more than one popular matching. 
Abraham~et~al.~\cite{AIKM07} characterized the instances that admit
popular matchings and gave efficient algorithms to compute a popular matching if one exists.

\noindent {\bf Our problem.} Consider a centralized matching market
where each agent $a \in \A$ submits a preference over a subset of posts and a central authority
matches agents to posts using the criteria of popularity. Let $a_1$ be the sole manipulative agent  who is aware of the true
preference lists of all other agents and the preference lists of $a \in \A \setminus \{a_1\}$ remain
fixed throughout. The goal of $a_1$ is
clear: she wishes to falsify her preference list so as to improve the post that she gets 
matched to as compared to the post she got when she was truthful.
Since there may be more than one popular matching in an instance, we assume
that the central authority chooses an arbitrary popular matching.
Let $G = (\A \cup \p, E)$ denote the instance where ranks on the edges represent true
preferences of all the agents. Let $H$ denote the instance obtained by falsifying the preference list of
$a_1$ alone. We assume that $G$ admits a popular matching and $a_1$ falsifies in order
to create an instance $H$ which also admits a popular matching.
Note that it may be possible for $a_1$ to
falsify her preference list such that $H$ does not admit any popular matching. But we do not consider
such a falsification.

Agent $a_1$ wishes to falsify her preference list to ensure that
(i) every popular matching in $H$ matches her to a post that is at least as 
good as the most-preferred post that she gets matched to in $G$, and 
(ii) some popular matching in $H$ matches $a_1$ to a post better than the most-preferred post $p$ that 
she gets matched to in $G$, assuming that $p$ is not $a_1$'s true first choice post.
We term this strategy of $a_1$ as `{\em better always}' strategy.
\vspace{-0.1in}
\subsection{Our contributions}
\vspace{-0.1in}
\begin{itemize}
\item Let $a_1$ be the sole manipulative agent  who wishes to get {\em better always}.
The optimal strategy for $a_1$ can be computed in $O(m+n)$ time when preference lists are
all strict and in $O(\sqrt{n}m)$ time when preference lists are allowed to contain ties.
\item Next, consider the set of agents, their preference lists and the popular matchings algorithm as a non-cooperative game.
We show a necessary and sufficient condition for the true preference lists to be an equilibrium of the game assuming
that every agent wishes to get {\em better always}.
\item To compute the cheating strategies, we develop a {\em switching graph} characterization of the popular
matchings problem involving ties. The switching graph characterization was studied for the case of strict lists
by McDermid and Irving~\cite{MI11} and such a characterization was not known for the case of ties. Using the switching graph, 
we show an $O(\sqrt{n}m)$ time algorithm to compute the
set of {\em popular pairs}.
An edge $(a,p) \in E$ is a popular pair if there exists a popular matching $M$  in $G$ such
that $ (a, p) \in M$. 
We also show that counting the total number of popular matchings in an instance with ties
is $\#$P-Complete. 
The switching graph characterization is of independent interest and answers a part of the open questions in \cite{MI11}.
\end{itemize}
\subsection{Related work}
The work in this paper is motivated by the work of Teo~et~al.~\cite{TST01} where they study the strategic
issues of the stable marriage problem~\cite{GS62}. The stable marriage problem is a generalization
of our problem where both the sides of the bipartition (usually referred to as men and women) rank members
of the opposite side in order of their preference.
Teo~et~al.~\cite{TST01} study the strategic issues of the stable marriage problem where women are required to give
complete preference lists and there is a sole manipulative woman. Further, she is aware of the true preference
lists of all the other women. Teo~et~al.~\cite{TST01} compute an optimal cheating strategy for a single woman under this model.
Huang~\cite{Huang07} studies the strategic issues of the stable room-mates problem~\cite{GS62} under
a similar model. 
In the same spirit, we study the strategic issues of the popular matchings problem.

The notion of popular matchings was introduced by G\"{a}rdenfors~\cite{Gar75} in the context of the stable marriage~\cite{GS62}.
Abraham et al.~\cite{AIKM07} studied the problem for one-sided preference lists and gave a characterization
of instances which admit a popular matching. Subsequent to this result, the popular matchings problem has received
a lot of attention \cite{Mah06} \cite{McC08}  \cite{KMN11} \cite{HuangK11} \cite{Kavitha12}.
However, to the best
of our knowledge none of them is motivated by the strategic issues of the popular matchings problem.

\REM{
{\noindent \em Organization of the paper:} In Section~\ref{sec:prelims} we review the background, in Section~\ref{}
we develop the switching graph characterization for the popular matchings problem with ties. In Section~\ref{sec:cheating-prelims}
we give some intuition and prove some lemmas useful for our cheating strategies.
In Section~\ref{sec:single-agent} we formulate the cheating strategies for a single agent. In Section~\ref{}
we consider the non-cooperative game involving the set of agents, their preferences and the popular matchings algorithm.
}
\section{Background}
\label{sec:prelims}
We first review the following well known properties of maximum matchings in bipartite graphs.
Let $G = (\A \cup \p, E)$ be a bipartite graph and let $M$ be a maximum matching in $G$. 
The matching $M$ defines a partition of the vertex set $\A \cup \p$ into three disjoint sets:
a vertex $v \in \A \cup \p$ is \emph {even} (resp. \emph {odd}) if there is an even (resp. odd) length alternating path in
$G$ w.r.t. $M$ from an unmatched vertex to $v$.
A vertex $v$ is \emph {unreachable} if there is no alternating path from an unmatched vertex to $v$.
Denote by $\EE$, $\odd$ and $\U$ the sets of even, odd, and unreachable vertices, respectively, in $G$.
The following lemma is well known in matching theory; its proof can be found in \cite{GGL95new} or \cite{IKMMP04}.

\begin{lemma}[\cite{GGL95new} Dulmage Mendelsohn]
\label{lemma:node-classification}
Let $\EE$, $\odd$ and $\U$ be the sets of vertices defined by a maximum matching $M$ in $G$. Then,
\begin {itemize}
\item [(a)] $\EE$, $\odd$ and $\U$ are pairwise disjoint, and independent of the maximum matching $M$ in $G$.
\item [(b)] In any maximum matching of $G$, every vertex in $\odd$ is matched with a vertex in
$\EE$, and every vertex in $\U$ is matched with another vertex in $\U$.
The size of a maximum matching is $|\odd| + |\U|/2$.
\item [(c)] No maximum matching of $G$ contains an edge between a vertex in $\odd$ and a vertex
in $\odd \cup \U$.
Also, $G$ contains no edge between a vertex in $\EE$ and a vertex in $\EE \cup \U$.
\end {itemize}
\end{lemma}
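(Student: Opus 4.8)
The plan is to prove the three parts for a single fixed maximum matching $M$ first, and then treat independence of the chosen $M$ separately, since that is the real crux. Throughout, the engine is Berge's characterization: $M$ is maximum iff $G$ contains no $M$-augmenting path. I would first record that a free vertex is even (via the length-$0$ path) and then establish disjointness of $\EE$ and $\odd$: if a vertex $v$ were reachable both by an even alternating path $P$ (which must end at $v$ through a matching edge) and by an odd alternating path $Q$ (ending through a non-matching edge), then $P$ and $Q$ enter $v$ through different edges, and $P \oplus Q$ contains an $M$-augmenting path between two free vertices, a contradiction. Next I would prove (b) by propagating parities across matching edges: extending an odd path to $v$ by the matching edge $v\,M(v)$ shows $M(v) \in \EE$, so every vertex of $\odd$ is matched into $\EE$; conversely a matched even vertex is reached through its own matching edge, forcing its partner into $\odd$. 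A vertex $u \in \U$ cannot be free (free $\Rightarrow$ even), and if $M(u)$ were reachable, a short case analysis (take the prefix of, or extend, the witnessing path by $u\,M(u)$) would make $u$ reachable too; hence $\U$ is matched within itself. Counting matched pairs then yields $|M| = |\odd| + |\U|/2$.

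For the edge conditions (c), the matching-edge clause is immediate from (b): since every $M$-edge incident to $\odd$ lands in $\EE$, no maximum matching contains an edge inside $\odd \cup \U$ that meets $\odd$. For the clause about all edges of $G$, I would prove the single statement that every neighbour of an even vertex is odd: given $u \in \EE$ and $uw \in E$, if $uw \notin M$ I extend the even path reaching $u$ by $uw$ to reach $w$ through a non-matching edge (so $w \in \odd$), and if $uw \in M$ then $w$ is the predecessor of $u$ on an even path and is itself reached through a non-matching edge (again $w \in \odd$). Thus $N(\EE) \subseteq \odd$, which forbids any $G$-edge between $\EE$ and $\EE \cup \U$.

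The main obstacle is independence, part (a). Here I would prove the $M$-free characterization $\EE = \{v : v \text{ is exposed by some maximum matching of } G\}$. For the inclusion $\subseteq$, if $v \in \EE$ is reached by an even alternating path $P$ from a free vertex, then $M \oplus P$ is again a maximum matching and it exposes $v$. For $\supseteq$, suppose $v$ is exposed by a maximum matching $M'$ and analyse $D = M \oplus M'$: because neither matching admits an augmenting path, every path component of $D$ has even length with one endpoint free in $M$ and the other free in $M'$; tracing the component containing $v$ from its $M$-free end gives an even $M$-alternating path to $v$, so $v \in \EE$. This description of $\EE$ refers only to $G$, hence is independent of $M$. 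Combined with the previous paragraph it gives $\odd = N(\EE)$ (each odd vertex neighbours its even partner, and each neighbour of an even vertex is odd) and $\U = (\A \cup \p) \setminus (\EE \cup \odd)$, so all three sets depend only on $G$. I expect the delicate point to be exactly the structural analysis of $D = M \oplus M'$, namely arguing that its components are even alternating paths and cycles and then reading off the required even $M$-alternating path to an arbitrary exposed vertex.
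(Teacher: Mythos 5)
First, a point of comparison: the paper itself offers no proof of this lemma --- it is quoted as the Dulmage--Mendelsohn decomposition with the proof deferred to the cited references --- so your attempt can only be measured against the standard argument. Most of your plan follows that standard route and is sound. In particular, the exposability characterization $\EE=\{v : v \mbox{ is exposed by some maximum matching of } G\}$, with $M\oplus P$ giving one inclusion and the component analysis of $D=M\oplus M'$ giving the other, is exactly the right way to obtain independence in part (a), and your derivations of (b), of $\odd=N(\EE)$, and of (c) from these ingredients are correct in outline.

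The genuine gap is your disjointness step. The assertion that $P\oplus Q$ contains an $M$-augmenting path is not a valid inference: edges common to $P$ and $Q$ cancel in the symmetric difference, and the augmenting path you need may be forced to use exactly those edges. Concretely, let $\A=\{u_2,a_1,a_2\}$, $\p=\{u_1,p_1,p_2\}$, $E=\{u_2p_1,\,a_1p_1,\,a_1p_2,\,a_2p_2,\,a_1u_1,\,a_2p_1\}$ and $M=\{a_1p_1,a_2p_2\}$. Then $Q=\langle u_2,p_1,a_1,p_2\rangle$ is an odd alternating path and $P=\langle u_1,a_1,p_1,a_2,p_2\rangle$ an even alternating path, both from free vertices to $v=p_2$, yet $P\oplus Q$ is the single path $\langle u_2,p_1,a_2,p_2,a_1,u_1\rangle$, which is not alternating (consecutive unmatched edges at $p_1$ and at $a_1$) and contains no augmenting path; the graph's actual augmenting path $\langle u_2,p_1,a_1,u_1\rangle$ uses the cancelled edge $a_1p_1$. (This $M$ is of course not maximum, but that is the point: inside your proof by contradiction you cannot appeal to maximality to justify the step, since maximality is what you are contradicting, so the step must be valid for alternating paths per se, and it is not.) A telltale symptom is that your disjointness argument never uses bipartiteness, whereas disjointness is simply false in general graphs: in a triangle $a,b,c$ with only $bc$ matched, the matching is maximum and both $b$ and $c$ are even and odd. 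The standard repair is a first-intersection switching argument: walk along $Q$ from its free endpoint to the first vertex $w$ lying on $P$, then switch onto $P$ toward $P$'s free endpoint; in a bipartite graph the side of the bipartition containing $w$ forces the parity of $w$'s position on both paths, which is precisely what guarantees alternation at $w$ and makes the concatenation an augmenting path. Alternatively, reorder your own proof: establish the exposability characterization of $\EE$ and the identity $\odd=N(\EE)$ first (neither needs disjointness, though your extension steps should also note, again via bipartite parity, that the vertex being appended cannot already lie on the path), and then disjointness follows --- if $v\in\EE\cap\odd$, take a maximum matching $M''$ exposing $v$; since $\odd$ is the same for $M''$ as for $M$, the odd $M''$-alternating path ending at the $M''$-free vertex $v$ is $M''$-augmenting, a contradiction.
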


We now review the characterization of the popular matchings problem from \cite{AIKM07}.
As was done in \cite{AIKM07},
we create a unique last-resort post $\ell(a)$ for each agent $a$.
In this way, we can assume that every agent is matched, since any unmatched agent $a$ can be
paired with $\ell(a)$.
For an agent $a$, let $f(a)$ be the set of rank-1 posts for $a$. To define $s(a)$, let us consider
the graph  $G_1 = (\A \cup \p, E_1)$
on rank-1 edges in $G$ and let $M_1$ be any maximum matching in $G_1$. Let $\odd_1, \EE_1, \U_1$ define the 
partition of vertices $\A \cup \p$ with respect to $M_1$ in $G_1$. For any agent $a$, let $s(a)$ denote the
set of most preferred posts which belong to $\EE_1$ by the above partition.
Abraham~et~al.~\cite{AIKM07} proved the following theorem.
\begin{theorem}[\cite{AIKM07}]
\label{thm:pop-mat}
A matching $M$ is popular in $G$ iff 

(1) $M \cap E_1$ is a maximum matching of $G_1 = (\A\cup\p, E_1)$, and 

(2) for each agent $a$, $M(a) \in \{f(a) \cup s(a)\}$.

\end{theorem}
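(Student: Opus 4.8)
The plan is to prove the two implications separately, after recording two structural facts about the rank-1 graph $G_1$. First, every last-resort post $\ell(a)$ has $a$ as its only neighbour and carries no rank-1 edge, so it is exposed in every maximum matching $M_1$ of $G_1$ and therefore lies in $\EE_1$; hence $s(a)\neq\emptyset$ for each agent and condition (2) is always satisfiable. Second, I would read off Lemma~\ref{lemma:node-classification}(c) for $G_1$: no edge of $G_1$ joins two vertices of $\EE_1$, nor an $\EE_1$-vertex to a $\U_1$-vertex, so every rank-1 edge incident to an even vertex has its other endpoint in $\odd_1$. This localises where $f(a)$ and $s(a)$ sit relative to the decomposition (in particular any post $a$ prefers to every post of $s(a)$, other than a rank-1 post, must lie in $\odd_1\cup\U_1$), and it is the engine for both directions.

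For necessity I would argue the contrapositive, establishing (2) first and then (1). Suppose $M$ is popular but some agent $a$ has $M(a)=p\notin f(a)\cup s(a)$; then $p$ is strictly worse than $f(a)$. Depending on whether $p$ is worse than all of $s(a)$, or strictly between $f(a)$ and $s(a)$ (in which case the first paragraph forces $p\in\odd_1\cup\U_1$), I would promote $a$ either onto a post of $s(a)$ or onto a rank-1 post, by rerouting along a suitable alternating path of $G_1$ taken with respect to $M_1$; the internal agents of this path stay on rank-1 posts and are indifferent, while $a$ strictly improves, contradicting popularity. Knowing that a popular $M$ must satisfy (2), I would then suppose $M\cap E_1$ is not a maximum matching of $G_1$ and take an augmenting path for $M\cap E_1$: flipping it promotes an agent onto a rank-1 post, and the only agent at risk of demotion is the one $M$ assigns to the far exposed post, which by (2) already occupies an even post of its own $s(\cdot)$-set and can be re-promoted along a second even alternating path, so again a strictly more popular matching appears.

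For sufficiency, which I expect to be the main obstacle, I would take an arbitrary matching $M'$ and analyse the connected components of $M\oplus M'$, each a path or an even alternating cycle. Because (2) already places every agent on $f(a)$ or $s(a)$ under $M$, an agent can strictly prefer $M'$ only if $M$ matched it to a post of $s(a)$ and $M'$ gives it something strictly better, and every such promotion is fed by an $M'$-edge of rank better than $s(a)$. The crux is the per-component inequality $(\#\text{ preferring }M')\le(\#\text{ preferring }M)$: here I would invoke condition (1), the maximality of $M\cap E_1$ in $G_1$, to argue that no component can promote agents onto rank-1 posts for free, so each strict gain for $M'$ is paid for by an agent whom $M'$ pushes off a rank-1 post or demotes below its $M$-post, and the even/odd placement from the first paragraph rules out chains producing a net surplus of $M'$-preferrers. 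Summing over components gives that no $M'$ is more popular than $M$. The delicate point throughout is the bookkeeping on a path component with two exposed ends, where an off-by-one between who is promoted and who is displaced is exactly what conditions (1) and (2) are jointly designed to forbid.
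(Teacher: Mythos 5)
The paper never proves Theorem~\ref{thm:pop-mat}; it is imported as background from \cite{AIKM07}, so your attempt has to be judged against the argument given there. Your plan assembles the right ingredients (the Dulmage--Mendelsohn decomposition of $G_1$, the localisation of $f(a)$, $s(a)$ and all strictly-better-than-$s(a)$ posts inside $\oddone\cup\unone$, and a component-wise analysis of $M\oplus M'$ for sufficiency), but the necessity direction has a concrete hole. When you ``promote'' an agent $a$ by rerouting along an alternating path of $G_1$ with respect to $M\cap E_1$, the far end of that path is a post that is exposed in $M\cap E_1$ but \emph{not necessarily exposed in $M$}: it may be occupied in $M$ through a non-rank-1 edge, and its occupant $b$ is then evicted. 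Your count ``internal agents indifferent, $a$ strictly improves'' is then $+1$ for $a$ and $-1$ for $b$ --- a tie, which does not contradict popularity of $M$. The repair used in \cite{AIKM07} is a \emph{second} promotion with explicit bookkeeping: since $b$ sat on a non-rank-1 edge, it strictly prefers any $q\in f(b)$; match $b$ to $q$ and evict the occupant of $q$, leaving that occupant unmatched, for a net of $+2-1=+1$. Your proof of (1) gestures at this (``re-promoted along a second even alternating path''), but flipping that second path exposes its last agent, a $-1$ you never count, and one must also argue the two paths can intersect without spoiling the total; your proof of (2) does not address the evicted occupant at all.

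There is also a structural problem with proving (2) \emph{before} (1). The case $M(a)=p$ strictly between $f(a)$ and $s(a)$ is essentially free once (1) is known: such a $p$ lies in $\oddone\cup\unone$, hence by Lemma~\ref{lemma:node-classification}(b) it is covered by \emph{every} maximum matching of $G_1$, in particular by $M\cap E_1$, contradicting that $p$ is matched in $M$ along a non-rank-1 edge --- no construction needed. In your order, you must instead build a more popular matching in this case, and there the evicted occupant of the rank-1 post you hand to $a$ may itself sit on a rank-1 edge, so the double-promotion repair is unavailable and I do not see how your argument closes. Finally, for sufficiency your strategy is the standard one, but the crux is exactly what you assert rather than prove: one must show that every chain of displacements launched by an $M'$-promotion stays on posts of $\oddone\cup\unone$ (rank-1 neighbours of even agents are odd, those of unreachable agents are odd or unreachable, by Lemma~\ref{lemma:node-classification}(c)), that such posts are all covered by $M\cap E_1$ (this is where condition (1) enters), and hence that each chain terminates at a distinct agent who strictly prefers $M$, with no chain able to wrap around a cycle. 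Writing that charging argument out is the hard half of the theorem, and the sentence ``the even/odd placement \ldots rules out chains producing a net surplus'' is a restatement of it, not a proof.
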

The algorithm for solving the popular matching problem is as follows:
each $a \in \A$ determines the sets $f(a)$ and $s(a)$. An $\A$-complete
matching (a matching that matches all agents)
that is maximum in $G_1$ and that matches each $a$ to a post  in
$\{f(a) \cup s(a)\}$ needs to be determined. If no such matching exists, then $G$ does not admit a popular
matching. Abraham~et~al.~\cite{AIKM07} gave an $O(\sqrt{n}m)$ time
algorithm to compute a popular matching in $G$ which is presented as  Algorithm~\ref{algo:pop-matching}.
Steps 7--11 are added by
us and will be used to define the switching graph in the next section.
Abraham~et~al.~\cite{AIKM07} also showed a simpler characterization for the popular matchings 
in case of strict lists which results in an $O(m+n)$ time algorithm to return a popular matching
if one exists.

Let $G' = (\A \cup \p, E')$ denote the graph in which every agent $a$ has edges incident to $\{f(a) \cup s(a)\}$.
Step~4 of Algorithm~\ref{algo:pop-matching} deletes edges from $G'$ which cannot be present in any maximum matching
of $G_1$.
We extend this further and in Step~9 delete edges 
from  $G'$ which cannot be present in any popular matching in $G$.
For this, let us 
partition the vertex set $\A \cup \p$  as $\oddtwo, \eventwo$ and $\untwo$ with respect to a popular matching $M$ in $G'$.
Since any popular matching $M$ is a maximum matching in $G'$, it is easy to see that $M$ cannot
contain edges of the form $\oddtwo \oddtwo$ and $\oddtwo \untwo$ (by Lemma~\ref{lemma:node-classification}(c)).
However, note that since
$M$ matches every agent, it implies that $\A \cap \eventwo = \emptyset$ and $\p \cap \oddtwo = \emptyset$.
Thus, there are no $\oddtwo \oddtwo$ edges in the graph
$G'$. Therefore, 
any edge $(a,p)$ deleted in Step~9 is of the form $a \in \oddtwo$
and $p \in \untwo$. We can now make the following claim. 


\begin{claim}
\label{claim:no-del-Step9}
Let $a$ be an agent such that $a \in \untwo$. Then, in Step~9 of Algorithm~\ref{algo:pop-matching},
no edge incident on $a$ gets deleted.
Let $a$ be an agent such that $a \in \evenone$. Then, in Step~4 of Algorithm~\ref{algo:pop-matching},
no edge incident on $a$ gets deleted.
\end{claim}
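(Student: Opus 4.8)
The plan is to read off both statements from the node classification of Lemma~\ref{lemma:node-classification}(c), applied to $G'$ for the first part and to $G_1$ for the second, once the set of edges removed by each step has been pinned down.

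For the first statement I would use the characterization obtained just before the claim: every edge deleted in Step~9 has the form $(a',p')$ with $a' \in \oddtwo$ and $p' \in \untwo$. Let $a \in \untwo$ be an agent. Because no post lies in $\oddtwo$ (we have $\p \cap \oddtwo = \emptyset$), each edge incident on $a$ joins $a$ to a post $p$ with $p \in \eventwo \cup \untwo$. Applying Lemma~\ref{lemma:node-classification}(c) to the popular (hence maximum) matching of $G'$, there is no edge between a vertex of $\eventwo$ and a vertex of $\untwo$; since $a \in \untwo$, this forces $p \in \untwo$, so every edge at $a$ is an $\untwo\untwo$ edge. Such an edge is not of the deleted form $\oddtwo\untwo$, as its agent endpoint $a$ is unreachable rather than odd. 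Hence no edge incident on $a$ is deleted in Step~9.

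For the second statement I first observe that Step~4 acts only on the rank-$1$ edges (the edges of $G_1$): among these it discards exactly the ones lying in no maximum matching of $G_1$, while the $s(a)$-edges, having rank larger than $1$, are not edges of $G_1$ and are left intact. By Lemma~\ref{lemma:node-classification}(c) the rank-$1$ edges lying in no maximum matching of $G_1$ are precisely the $\odd_1\odd_1$ and $\odd_1\U_1$ edges. Now take $a \in \evenone$. A rank-$1$ edge at $a$ runs to some post $p$, and since $a \in \EE_1$ the same lemma forbids $p \in \EE_1 \cup \U_1$, forcing $p \in \odd_1$; thus every such edge is an $\EE_1\odd_1$ edge, which is neither $\odd_1\odd_1$ nor $\odd_1\U_1$ and is therefore not deleted. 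As the $s(a)$-edges at $a$ are also untouched, no edge incident on $a$ is removed in Step~4.

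The point that I expect to be the main obstacle is the characterization used in the previous paragraph, namely that the rank-$1$ edges killed by Step~4 are exactly the $\odd_1\odd_1$ and $\odd_1\U_1$ edges; equivalently, that every $\EE_1\odd_1$ edge lies in some maximum matching of $G_1$. Lemma~\ref{lemma:node-classification}(c) supplies only the easy direction (these edges never occur in a maximum matching). For the converse I would show directly that an $\EE_1\odd_1$ edge $(a,p)$ is allowed: fix an even alternating path from an exposed vertex to $a$. If $a$ is itself exposed, replacing the matched edge at $p$ by $(a,p)$ yields a maximum matching through $(a,p)$; if $p$ lies on the even path, the sub-path from $p$ to $a$ together with $(a,p)$ is an alternating cycle whose toggling produces such a matching; otherwise one first rotates along the even path to expose $a$ and then performs the swap. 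This ensures $\EE_1\odd_1$ edges survive Step~4 independently of how the pruning is realized, which is all the second statement requires.
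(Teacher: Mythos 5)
Your two conclusions are correct and the deductions you actually use are sound, but your reading of Step~4 contains a real error, and it is what drives you to spend most of your effort on a non-issue. Step~4 is a purely syntactic rule: it removes precisely the edges of $G'$ with one endpoint in $\oddone$ and the other in $\oddone \cup \unone$. It is \emph{not} the rule ``discard the rank-1 edges lying in no maximum matching of $G_1$,'' and the equivalence you assert is false: a $\unone\unone$ edge of $G_1$ can lie in no maximum matching of $G_1$, yet Step~4 never touches $\unone\unone$ edges. (Take agents $a_1,a_2,a_3$ and posts $p_1,p_2,p_3$ with rank-1 edges $(a_i,p_i)$ for all $i$ together with $(a_1,p_2)$ and $(a_2,p_3)$; the perfect matching $\{(a_1,p_1),(a_2,p_2),(a_3,p_3)\}$ is the unique maximum matching, every vertex is in $\unone$, and the two extra edges lie in no maximum matching, but Step~4 deletes nothing.) Fortunately your part-2 argument only invokes the sound direction --- deleted edges are $\oddone\oddone$ or $\oddone\unone$ edges, and an edge incident on $a \in \evenone$ can never have that shape --- so the conclusion stands; but your entire final paragraph, proving that every $\evenone\oddone$ edge lies in some maximum matching of $G_1$, answers a question the claim never asks. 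What you identified as ``the main obstacle'' is not an obstacle at all, because the pruning is realized exactly as the algorithm states it, not via the semantic property you substituted for it.

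Once Step~4 and Step~9 are read as stated, both halves of the claim are one-line consequences of the pairwise disjointness of the vertex classes, which is how the paper treats them (it gives no separate proof): the paper establishes just before the claim that every edge deleted in Step~9 has its agent endpoint in $\oddtwo$ and its post endpoint in $\untwo$, so no edge incident on an agent $a \in \untwo$ is deleted (since $\untwo \cap \oddtwo = \emptyset$); and every edge deleted in Step~4 has both endpoints in $\oddone \cup \unone$, so no edge incident on an agent $a \in \evenone$ is deleted. For the same reason, your part-1 detour through Lemma~\ref{lemma:node-classification}(c) to pin the post endpoint into $\untwo$ is unnecessary --- disjointness of $\untwo$ and $\oddtwo$ alone finishes that case.
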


\REM{
\begin{claim}
\label{claim:no-del-Step4}
Let $a$ be an agent such that $a \in \evenone$. Then, in Step~4 of Algorithm~\ref{algo:pop-matching},
no edge incident on $a$ gets deleted.
\end{claim}
}

\begin{algorithm}
\begin{algorithmic}[1]
\REQUIRE $G = (\A \cup \p, E)$.
\STATE Construct the graph $G' = (\A\cup\p, E')$, where $E' = \{(a,p) : a \in \A$ and $p \in f(a) \cup s(a)\}$.
\STATE Construct the graph $G_1 = (\A \cup \p, E_1)$ and let $M_1$ be any maximum matching in $G_1$.
\STATE Partition $\A \cup \p$ as $\oddone, \evenone, \unone$ with respect to $M_1$ in $G_1$.
\STATE Remove any edge in $G'$ between a node in $\oddone$ and a node in $\oddone \cup \unone$.\\
\STATE Determine a maximum matching $M$ in $G'$ by augmenting $M_1$.
\STATE Return $M$ if it is $\A$-complete, otherwise return \emph {``no popular matching''}.
\IF {$G$ admits a popular matching}
\STATE Partition $\A \cup \p$ as $\oddtwo, \eventwo, \untwo$ with respect to $M$ in $G'$.
\STATE Remove any edge in $G'$ between a node in $\oddtwo$ and a node in $\untwo$.
\STATE Denote the resulting graph as $G'' = (\A \cup \p, E'')$.
\ENDIF
\end{algorithmic}
\caption{$O(\sqrt{n}m)$-time algorithm for the popular matching problem ~\cite{AIKM07} (Steps 1--6).}
\label{algo:pop-matching}
\end{algorithm}

\begin{definition}
\label{def:choices}
For an agent $a$, let $choices(a)$ be the set of posts $p$ such that $(a, p)$ is an edge in $G''$.
\end{definition}

It is easy to see that for any $a \in \A$, $choices(a) \subseteq \{f(a) \cup s(a) \}$. Further,
if $M$ is a popular matching in $G$, then $M(a) \in choices(a)$.

\section{The switching graph characterization}
\label{sec:switching-graph}
In this section we develop the {\em switching graph} for the popular matchings problem with ties.
In case of strict lists,
McDermid and Irving~\cite{MI11} defined a switching graph $G_M = (\p, E_M)$ as a
directed graph on the posts of $G$ and the edge set $E_M$ was determined by a popular matching $M$ in $G$. 
\REM{
They exploited the switching graph to obtain an arbitrary popular matching $M'$ in $G$ from a given popular
matching $M$. They also used the switching graph to develop efficient algorithms for several problems like
computing popular pairs, and counting the total number of popular matching, to name a few. 
}
In fact, a similar graph was defined even before that by Mahdian~\cite{Mah06} (again for
strict lists) to study existence of popular matchings in random instances.
We use the notation and terminology from \cite{MI11}.

Let $G$ be an instance of the popular matchings problem with ties and let $M$ be a popular matching in $G$.
The switching graph $G_M = (\p, E_M)$ is a directed weighted graph on the posts $\p$ of $G$
and is defined with respect to a popular matching $M$ in $G$. 
The edge set $E_M$ is defined using the pruned graph $G'' = (\A \cup \p, E'')$ constructed
in Step~10 of Algorithm~\ref{algo:pop-matching}.
There exists an edge from $p_i$ to $p_j$ (with $p_i \neq p_j)$ iff for some $a \in \A$, $p_i = M(a)$ and $(a, p_j) \in E''$. 
The weight of an edge $w(M(a), p_j)$ is defined as:
\begin{eqnarray*}
w(M(a), p_j) &= 0&   \ \ \ \mbox{if $a$ is indifferent between $M(a)$ and $p_j$} \\
&= -1& \ \ \ \mbox {if $a$ prefers $M(a)$ to $p_j$} \\
&= +1& \ \ \ \mbox {if $a$ prefers $p_j$ to $M(a)$}.
\end{eqnarray*}
It is easy to see that the graph $G_M = (\p, E_M)$ can be constructed in $O(\sqrt{n}m)$ time using Algorithm~\ref{algo:pop-matching}.

Consider a vertex $p$ in $G_M$. Call $p$ a sink vertex in $G_M$ if out-degree of $p$ is zero in $G_M$.
The following lemma characterizes sinks in $G_M$.
\begin{lemma}
A post $p$ is a sink vertex in $G_M$ if and only if $p$ is unmatched in $M$.
\end{lemma}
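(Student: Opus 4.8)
The plan is to first recast the sink condition combinatorially. By the definition of $G_M$, the out-edges leaving a post $q$ are exactly the pairs $(q,p_j)$ with $p_j \in choices(a)\setminus\{q\}$, where $a = M^{-1}(q)$; in particular out-edges can emanate only from \emph{matched} posts. Since the remark after Definition~\ref{def:choices} gives $q \in choices(a)$ whenever $q=M(a)$, a matched post $q=M(a)$ is a sink precisely when $choices(a)=\{q\}$, i.e.\ when $|choices(a)|=1$. The reverse implication of the lemma is then immediate: if $q$ is unmatched in $M$ then no agent has $q=M(a)$, so $q$ has no out-edges and is a sink. For the forward implication I would prove the contrapositive, which by this reformulation reduces to a single statement: every agent $a$ satisfies $|choices(a)|\ge 2$ (recall that a popular matching matches all agents via their last-resort posts). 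Granting this, a matched post $q=M(a)$ has a choice $p_j\neq q$, hence an out-edge, hence is not a sink.

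The first ingredient is that $a$ already has two admissible posts, $|f(a)\cup s(a)|\ge 2$. Since $\ell(a)$ is isolated in $G_1$ it is exposed by $M_1$, hence even, so $s(a)\neq\emptyset$. If $|f(a)|\ge 2$ we are done; otherwise $f(a)=\{p\}$ is the unique $G_1$-neighbour of $a$, and then $p$ is matched by every maximum matching of $G_1$ (else $a$ and $p$ would both be exposed and $M_1$ augmentable). A vertex matched by every maximum matching is never even, so $p\notin\evenone$; since $s(a)\subseteq\evenone$ this gives $f(a)\cap s(a)=\emptyset$ and again $|f(a)\cup s(a)|\ge 2$.

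The second, and main, ingredient is to show that at least two of these edges survive Steps~4 and~9 and so lie in $G''$. Because $M$ matches every agent, $\A\cap\eventwo=\emptyset$, so $a\in\oddtwo$ or $a\in\untwo$. If $a\in\oddtwo$, a shortest odd alternating path (in the graph after Step~4, where $M$ lives) from an exposed post to $a$ reaches $a$ through a non-matching edge from an even post $q'\in\eventwo$ with $q'\neq M(a)$; both $(a,q')$ and $(a,M(a))$ have their post-endpoint in $\eventwo$, so neither is an $\oddtwo$--$\untwo$ edge and both survive Step~9, giving $|choices(a)|\ge 2$. If $a\in\untwo$, then by Claim~\ref{claim:no-del-Step9}(a) no edge at $a$ is removed in Step~9, so it suffices to exhibit two edges at $a$ surviving Step~4, and this is where I expect the real work. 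I would run a case analysis on the status of $a$ in the $G_1$-partition, using Lemma~\ref{lemma:node-classification} and Claim~\ref{claim:no-del-Step9}(b): if $a\in\evenone$ no edge at $a$ is deleted in Step~4, and $|f(a)\cup s(a)|\ge 2$ supplies two; if $a\in\oddone$ the odd-path parity argument applied to $M_1$ in $G_1$ yields two distinct even rank-$1$ posts of $a$ (the match $M_1(a)$ and the path-predecessor), both in $s(a)$, whose edges are $\oddone$--$\evenone$ and hence survive Step~4; and if $a\in\unone$ the matching edge $(a,M_1(a))$ (an $\unone$--$\unone$ edge) together with an $s(a)$-edge (an $\unone$--$\evenone$ edge) both survive. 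In every case $|choices(a)|\ge 2$, completing the forward direction. The delicate point I would check most carefully is this last analysis: confirming that the parity argument in $G_1$ genuinely produces two \emph{distinct} even rank-$1$ neighbours of an odd agent, and that neither of the two exhibited edges falls among the $\oddone$--$(\oddone\cup\unone)$ deletions of Step~4.
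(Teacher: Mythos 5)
Your proposal is correct, but there is nothing in the paper to compare it against: the paper states this lemma bare, treating it as immediate from the definitions, and never supplies a proof. Your reduction identifies exactly where the non-trivial content lies. The backward direction (unmatched $\Rightarrow$ sink) is indeed trivial, and the forward direction is equivalent to showing that every agent $a$ retains some post other than $M(a)$ in $choices(a)$ after the deletions of Steps~4 and~9, i.e.\ that no agent's edge set in $G''$ collapses to the single matching edge. Both of your ingredients check out: $s(a)\neq\emptyset$ because $\ell(a)$ is isolated in $G_1$ (note this silently uses the standing assumption that each agent's true list is non-empty, so that $\ell(a)$ is not itself rank-1; without that assumption the lemma is actually false, since an agent whose only acceptable post is $\ell(a)$ makes $\ell(a)$ a matched sink); when $f(a)$ is a singleton, its unique element is matched by every maximum matching of $G_1$ and hence is not in $\evenone$, giving $f(a)\cap s(a)=\emptyset$; and your survival analysis is sound --- for $a\in\oddtwo$ the predecessor $q'$ on an odd alternating path lies in $\eventwo$, is distinct from $M(a)\in\eventwo$, and both edges avoid the $\oddtwo$--$\untwo$ deletions of Step~9, while for $a\in\untwo$ the split by the $G_1$-class of $a$, using Claim~\ref{claim:no-del-Step9} and Lemma~\ref{lemma:node-classification}, produces two distinct surviving posts in every sub-case. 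One simplification worth noting: any edge $(a,p'')$ with $p''\in s(a)$ automatically survives Step~4, because $p''\in\evenone$ and Step~4 only deletes edges whose post-endpoint lies in $\oddone\cup\unone$; combined with Claim~\ref{claim:no-del-Step9} this settles the $\untwo$ branch outright whenever $s(a)\neq\{M(a)\}$, confining your three-way case analysis on $\evenone$, $\oddone$, $\unone$ to that single corner case.
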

Let $\X$ be a maximal weakly connected component of $G_M$. Call $\X$ a {\em sink component} if $\X$ contains one or more
sink vertices otherwise call $\X$ a {\em non-sink component}. 

For a path $T$ (resp. cycle $C$) in $G_M$, the weight of the path $w(T)$ (resp. $w(C)$)
is the sum of the weights on the edges in $T$ (resp. $C$). (Whenever we refer to paths and cycles in
$G_M$ we imply directed paths and directed cycles respectively.) A path $T = \langle p_1, \ldots, p_k\rangle$ in $G_M$
is called a {\em switching path}
if $T$ ends in a sink vertex and $w(T) = 0$. 
Similarly, a cycle $C = \langle p_1,  \ldots, p_k, p_1\rangle$ in $G_M$ is called a {\em switching cycle} if $w(C) = 0$.
Let $\A_T =  \{a_i : M(p_i) = a_i, \mbox { for $i = 1 \ldots k$ }\}$
and  denote by $M' = M \cdot T$  the matching obtained by {\em applying} the switching path to
$M$, that is, for $a_i \in \A_T$, $M'(a_i) = p_{i+1}$ whereas for $a \notin \A_T$, $M'(a) = M(a)$.
Similarly, for a switching cycle $C$, 
define $\A_C = \{a_i : M(p_i) = a_i, \mbox  {for $i = 1 \ldots k$ } \}$
and  denote by $M' = M\cdot C$ the matching
obtained by {\em applying} the switching cycle to $M$, that is, for $a_i \in \A_C$, $M'(a_i) = p_{i+1} \mod k$ whereas for $a \notin \A_C$,
$M'(a) = M(a)$. 

\begin{example}
\label{ex:switching-graph}
\end{example}
Consider an instance $G$ where $\A = \{a_1, \ldots, a_7\}$ and $\p = \{p_1, \ldots, p_9\}$. The preference
lists of the agents are shown in Figure~\ref{fig:example-ex2}(a). The preference lists can be read as follows: agent $a_1$
ranks posts $p_1, p_2, p_3$ as her rank-1, rank-2 and rank-3 posts respectively and the two posts
$p_6$ and $p_7$ are tied as her
rank-4 posts. For every agent $a$, the posts which are bold denote the set $f(a)$, whereas the posts which are
underlined denote the set $s(a)$. The instance $G$ admits a popular matching;  $M$ and
$M'$ shown  below are both  popular in $G$.
\begin{eqnarray}
M = \{(a_1, p_6), (a_2, p_1), (a_3, p_8), (a_4, p_2), (a_5, p_3), (a_6, p_9), (a_7, p_4)\} \label{eq:M} \\
M' = \{(a_1, p_6), (a_2, p_1), (a_3, p_8), (a_4, p_2), (a_5, p_4), (a_6, p_3), (a_7, p_5)\} \label{eq:M'}
\end{eqnarray}
Figure~\ref{fig:example-ex2}(b) shows the switching graph $G_M$ with respect to the
popular matching $M$. We note that the edges $(a_4, p_3)$ and $(a_1, p_1)$ get deleted
in Step~4 and Step~9 of Algorithm~\ref{algo:pop-matching}, respectively. Hence the switching
graph $G_M$ does not have the edges $(M(a_4) = p_2, p_3)$ and $(M(a_1) = p_6, p_1)$ respectively.
Consider the switching path $T = \langle p_9, p_3, p_4, p_5 \rangle$ in $G_M$.
By {\em applying} $T$ to $M$ we get
$M' = M \cdot T$ (see Equation~(\ref{eq:M'}))
which is also popular in $G$.
\begin{figure}[ht]
\begin{minipage}[b]{0.45\linewidth}
\centering
\begin{equation*}
\setlength{\arraycolsep}{0.5ex}\setlength{\extrarowheight}{0.25ex}
\begin{array}{@{\hspace{1ex}}c@{\hspace{1ex}}
              @{\hspace{1ex}}c@{\hspace{1ex}}
              @{\hspace{1ex}}c@{\hspace{1ex}}
              @{\hspace{1ex}}c@{\hspace{1ex}}
              @{\hspace{1ex}}c@{\hspace{1ex}}
              @{\hspace{1ex}}c@{\hspace{1ex}}}
    a_{1}: & {\bf p_1} \ & p_2 \ & p_3 \ & (\underline{p_6}, \underline{p_7}) \ &  \\[.5ex] 
    a_{2}: & \bf{p_1} \ & p_2 \ & \underline{p_8}  \ &\\[.5ex] 
    a_{3}: & \bf{p_1} \ & \underline{p_8} \ &{}  \ & {}\\[.5ex] 
    a_{4}: & \bf{(p_2, p_3)} \ &  p_1 \ & \underline{p_8} \ &   \ & {}\\[.5ex] 
    a_{5}: & \bf{p_3} \ &  (p_2, \underline{p_4}) &   \ & {}\\[.5ex] 
    a_{6}: & \bf{p_3} \ & \underline{p_9} \ & p_1   \ & {}\\[.5ex] 
    a_{7}: & \bf{(\underline{p_4}, \underline{p_5})} \ & p_1 \ &   \ & {}\\[.5ex] 
\end{array}
\end{equation*}
(a)
\end{minipage}
\begin{minipage}[b]{0.45\linewidth}
\centering

\psfrag{p1}{$p_1$}
\psfrag{p2}{$p_2$}
\psfrag{p3}{$p_3$}
\psfrag{p4}{$p_4$}
\psfrag{p5}{$p_5$}
\psfrag{p6}{$p_6$}
\psfrag{p7}{$p_7$}
\psfrag{p8}{$p_8$}
\psfrag{p9}{$p_9$}
\psfrag{x}{$-1$}
\psfrag{y}{$+1$}
\psfrag{z}{$0$}

\includegraphics[scale=0.4]{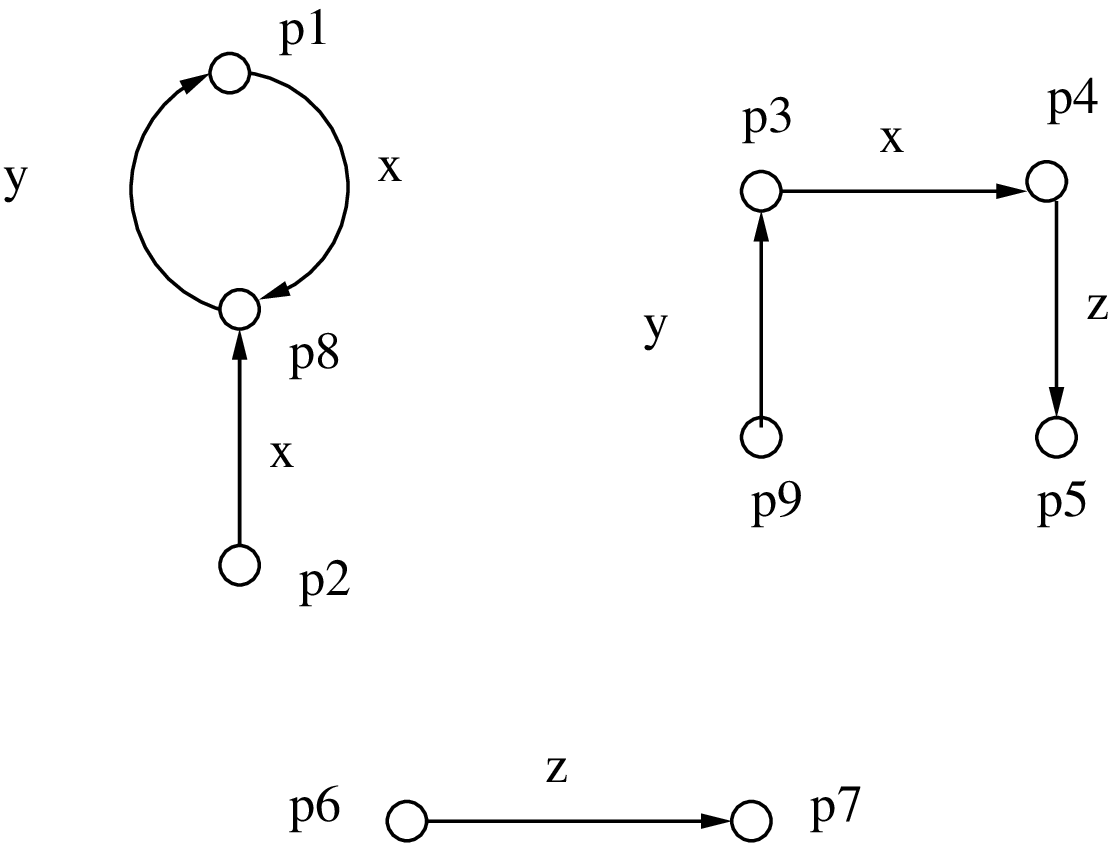}

(b)
\end{minipage}
\caption{(a) Preference lists of agents $\{a_1, \ldots, a_7\}$. The posts which are bold denote $f(a)$ and the
posts which are underlined denote $s(a)$.
(b) Switching graph $G_M$ with respect to the popular matching $M$ in $G$.}
\label{fig:example-ex2}
\end{figure}

\subsection{Some useful properties}
\label{sec:properties}
In this section we prove some useful properties of the switching graph $G_M$. 
Recall that the vertices $\A \cup \p$ are partitioned as $\oddone, \evenone, \unone$ w.r.t.
a maximum matching $M_1$ in $G_1$ (see Step~3 of Algorithm~\ref{algo:pop-matching}).
Further, the vertices $\A \cup \p$  are partitioned as $\oddtwo, \eventwo, \untwo$ w.r.t.
a popular matching $M$ in $G'$ (see Step~8 of Algorithm~\ref{algo:pop-matching}).

\begin{property}\label{prop0}
All sink vertices of $G_M$ belong to the set $\evenone$.
\end{property}
\begin{proof}
Assume for the sake of contradiction that $p$ is a sink vertex in $G_M$ and $p \in \oddone \cup \unone$. Recall that the
sink vertices of $G_M$ are unmatched posts in the popular matching $M$ in $G$. Since $M$ is a popular matching, it implies
that $M$ is a maximum matching on rank-1 edges in $G$. However, every maximum matching on rank-1 edges of $G$ matches
every vertex in $\oddone \cup \unone$. Thus, if $p$ is unmatched in $M$ and $p \in \oddone \cup \unone$,
it implies that $M$ is not a maximum matching on rank-1 edges of $G$, a contradiction.
\end{proof}

\begin{property}\label{prop1}
Every post $p$ belonging to a sink component has a path to a sink and hence belongs to the set $\eventwo$.
Every post belonging to a non-sink component belongs to the set $\untwo$.
\end{property}
\begin{proof}
We prove that a post $p$ belongs to a sink component of $G_M$ iff $p \in \eventwo$.
Let $p$ be a post such that $p \in \eventwo$. Then $p$ is either unmatched in $M$
or $p$ has an even length alternating
path starting at an unmatched vertex $p'$ with respect to $M$ in $G'$. If $p$ is
unmatched, then $p$ is a sink vertex in $G_M$ and hence we are done. Else let
$\langle p=p_1, a_1, \ldots, p_k, a_k, p_{k+1}= p'\rangle$ denote the alternating
path  and for every $1 \le i \le k$, we have
 $M(p_i) = a_i$. Note that every unmatched edge $(a_i, p_{i+1})$
is of the form $\oddtwo \eventwo$ and hence none of these unmatched edges
get deleted in Step~9 of Algorithm~\ref{algo:pop-matching}.
Therefore, it is easy to see that the path $\langle p=p_1, p_2, \ldots, p_{k+1}=p'\rangle$ is
present in $G_M$ and hence $p$ belongs to the sink component that contains $p'$.

To prove the other direction
let $\X$ be a sink component in $G_M$ and let $p$ be some vertex in $\X$.
If $p \in \eventwo$, it is easy to see that $p$ has a path to some sink vertex in $\X$
and we are done, else $p \in \untwo$. Recall
that $\oddtwo \cap \p = \emptyset$. Since $p$ belongs to $\X$, there exists some vertex
$p'$ in $\X$ such that
$p' \in \eventwo$ and $p'$ has a path to $p$ in $G_M$. Let $\langle p'= p_1, p_2, \ldots, p_k= p\rangle$
denote a path of minimal length. Note that for each $1\le i \le k$, $p_i$ is matched in $M$ and let $M(p_i) = a_i$.
By the minimality of the path, we know that for $2 \le i \le k$, $p_i \in \untwo $.  However,  $p_1 \in \eventwo $
implies $a_1 \in \oddtwo$. Thus the presence of edge $(p_1, p_2)$ in $G_M$ implies that there is a $\oddtwo \untwo$ edge in the graph $G''$
which should have been deleted by step~9 of Algorithm~\ref{algo:pop-matching}. Hence such an edge cannot be present in $G_M$
contradicting the fact that $p \in \untwo$.
This implies that every vertex contained in a sink component of $G_M$ has a path to some sink in the component.

The above proof immediately implies that a post $p$ belongs to a non-sink component iff $p \in \untwo$.
This finishes the proof of Property~\ref{prop1}.
\end{proof}

\begin{property} \label{prop-table-weights}
For an edge $(p_i, p_j)$ in $G_M$, the weight  $w(p_i, p_j)$ is determined by which partition
$p_i$ and $p_j$ belong to when 
 vertices are partitioned 
as $\oddone, \evenone, \unone$.
$w(p_i, p_j)$ can be determined using Table~\ref{tab:edge-weights}.

\begin{table}[ht]
\centering
\begin{tabular}{ |c || c | c | c |}
\hline
\backslashbox{$p_i$}{$p_j$}   & $\oddone$ & $\evenone$ & $\unone$ \\
\hline
$\oddone$ & $0$ & $-1$ & $\times$ \\
\hline
$\evenone$ & $+1$ & $0$ & $\times$ \\
\hline
$\unone$ & $\times$ & $-1$ &  $0$\\
\hline
\end{tabular}
\caption{Table shows $w(p_i, p_j)$ for an edge $(p_i, p_j)$ in $G_M$. The weight is determined by the partition of vertices as $\oddone, \evenone, \unone$. The $\times$ denotes that
such an edge is not present in $G_M$.}
\label{tab:edge-weights}
\end{table}
\end{property}

\begin{proof}
To prove Property~\ref{prop-table-weights} we justify the entries in Table~\ref{tab:edge-weights}.
Let $(p_i, p_j)$ be an edge in $G_M$ and let $M(p_i) = a$.
The weight on the edge $(p_i, p_j)$ is determined by the relative ranks of $p_i$ and $p_j$ in $a$'s preference list.
We note that a post $p \in \oddone \cup \unone$ has only rank-1 edges incident
on it in the graph $G'$.
Hence if  $p_i \in \oddone \cup \unone$, then $a$ treats $p_i$ as her rank-1 post.
\begin{itemize}
\item  $p_i \in \oddone$: $a$ treats $p_i$ as her rank-1 post and since posts in $\oddone$ remain matched along agents in $\evenone$, it
implies that $a \in \evenone$.
\begin{itemize}
\item $p_j \in \oddone$: $a$ treats $p_j$ as her rank-1 post, thus, $w(p_i, p_j) = 0$.
\item $p_j \in \evenone$: We show that $a$ treats $p_j$ as her non-rank-1 post and hence
$w(p_i, p_j) = -1$. Assume for the sake of contradiction that $a$ treats $p_j$ as a rank-1 post.
It implies that there is a $\evenone \evenone$ edge
in the graph $G_1$, a contradiction (by part (c) Lemma~\ref{lemma:node-classification}).
\item $p_j \in \unone$: We show that such an edge cannot exist in $G_M$. Recall
that posts in $\unone$ have only rank-1 edges incident on them, hence $a$ treats $p_j$ as her rank-1 post.
This implies that there is a $\evenone \unone$ edge in $G_1$ a contradiction (by part (c) of Lemma~\ref{lemma:node-classification}).
\end{itemize}
\item  $p_i \in \evenone$: Here we consider two cases:

(i) $a$ treats $p_i$ as her rank-1 post: In this case, we note that $s(a) \subseteq f(a)$ and hence
$a$ has only rank-1 edges incident on it in the graph $G'$ and all these edges are incident on posts which belong to $\evenone$.
Thus the only case possible is, $p_j \in \evenone$ and $w(p_i, p_j) = 0$.

(ii) $a$ treats $p_i$ as her non-rank-1 post: We first note that $a \in \evenone$ because agents in $\oddone \cup \unone$ remain
matched along rank-1 edges in every popular matching. Consider the three different cases for $p_j$.

\begin{itemize}
\item $p_j \in \oddone$: $a$ treats $p_j$ as her rank-1 post and hence $w(p_i, p_j) = +1$.
\item $p_j \in \evenone$: We show that $a$ treats $p_j$ as her non-rank-1 post and hence $w(p_i, p_j) = 0$. Assume
for the sake of contradiction that $a$ treats $p_j$ as her rank-1 post. Then there exists an $\evenone \evenone$ edge
in $G_1$ a contradiction (by part (c) of Lemma~\ref{lemma:node-classification}).
\item $p_j \in \unone$: We show that such an edge cannot exist in $G_M$. If such an edge exists there is an $\evenone \unone$ edge
in $G_1$ a contradiction (by part (c) of Lemma~\ref{lemma:node-classification}).
\end{itemize}

\item  $p_i \in \unone$: $a$ treats $p_i$ as her rank-1 post and since posts in $\unone$ remain matched along agents in $\unone$, it
implies that $a \in \unone$.
\begin{itemize}
\item $p_j \in \oddone$: $a$ treats $p_j$ as her rank-1 post however such an edge gets
deleted as an $\oddone \unone$ edge in Step~4 of Algorithm~\ref{algo:pop-matching}. Thus such an edge cannot be present in $G_M$.
\item $p_j \in \evenone$:  We show that $a$ treats $p_j$ as her non-rank-1 edge and hence $w(p_i, p_j) = -1$.
Assume for the sake of contradiction that $a$ treats $p_j$ as a rank-1 post then, it implies that there is a $\unone \evenone$ edge
in the graph $G_1$, a contradiction (by part (c) of  Lemma~\ref{lemma:node-classification}).
\item $p_j \in \unone$: $a$ treats $p_j$ as her rank-1 post and therefore $w(p_i, p_j) = 0$.
\end{itemize}
\end{itemize}

\end{proof}

\begin{property}\label{prop2}
Every path $T$ in $G_M$ has  $w(T) \in \{-1, 0, +1\}$. Every cycle $C$ in $G_M$ has  $w(C) = 0$. There exists no
path $T$ in $G_M$ ending in a sink vertex with $w(T) = +1$.
\end{property}
\begin{proof}
It is easy to observe that if the edges have weights according to Table~\ref{tab:edge-weights}, then
every path in $G_M$ has weight belonging to $\{-1, 0, +1\}$. Further every cycle has to have
weight $0$. It remains to argue that in $G_M$ there exists no  path $T$ of weight $+1$ which ends in a sink.
For contradiction, assume that such a path exists in $G_M$ and consider
applying the path $T$ to $M$ to obtain the matching $M' = M \cdot T$. The number of agents
that prefer $M'$ to $M$ is exactly one more than the number of agents that prefer $M$ to $M'$. Thus $M'$ is more
popular than $M$, contradicting the fact that $M$ was a popular matching.
\end{proof}

\begin{property}\label{prop3}
For any switching path $T$ (or switching cycle $C$) in $G_M$, the matching $M'=M \cdot T$ ($M' = M \cdot C$ resp.) is a popular matching in $G$.
Every popular matching $M'$ in $G$ can be obtained from $M$ by applying to $M$ one or more vertex disjoint switching paths and switching cycles
in each of a subset 
of sink components of $G_M$ together with one or more vertex disjoint switching cycles in each of a subset of the non-sink components of $G_M$.
\end{property}
\begin{proof}
Let $T =  \langle p_1, p_2, \ldots, p_k\rangle$ be a switching path in $G_M$ with $p_k$ unmatched in $M$
and $M' = M \cdot T$ denote the matching obtained by applying the path $T$ to $M$.
Let $\A_T = \cup_{i=1}^{k-1} \{M(p_i) = a_i\}$.
We observe that for every $a_i \in \A_T$, $M'(a_i) \in \{f(a_i) \cup s(a_i)\}$ because edges of $G_M$
are derived from a subset of graph $G' = (\A \cup \p, E')$ (refer to Algorithm~\ref{algo:pop-matching}). Further for any $a \notin \A_T$,
$M'(a) = M(a)$. Finally note that since $w(T) = 0$, for every agent that got demoted from her rank-1 post
there exists a
unique agent who got promoted to her rank-1 post in $M'$. Thus, $M'$ is
a maximum matching on rank-1 edges of $G$. It is therefore clear that $M'$ satisfies both the properties defined by
Theorem~\ref{thm:pop-mat} and hence $M'$ is a popular matching in $G$.
An exactly similar argument proves that for any switching cycle $C$, $M \cdot C$ is also a popular matching in $G$.

Let $M'$ be any popular matching in $G$.
Consider $M \oplus M'$, this is set of vertex disjoint even length paths and even length cycles in $G$.
Let $T_G =\langle p_1, a_1, \ldots,p_k, a_k, p_{k+1}\rangle$ be any even length path in $M \oplus M'$  with $p_{k+1}$
unmatched in $M$ and $p_1$ unmatched in $M'$.
For every $1 \le i \le k$, let
 $M(p_i) = a_i$.
Note that every unmatched edge $(a_i, p_{i+1})$
is of the form $\oddtwo \eventwo$ and hence none of these unmatched edges
get deleted in Step~9 of Algorithm~\ref{algo:pop-matching}.
Therefore, it is easy to see that the path $\langle p=p_1, p_2, \ldots, p_{k+1}=p'\rangle$ is
present in $G_M$ and it ends in a sink.
 Note that $w(T)$ cannot be strictly positive since $M$ is a popular matching. Similarly,
$w(T)$ cannot be strictly negative. This is because since both $M$ and $M'$ are popular, $w(T) \le -1$
implies that there exists another path $T'_G$ (or a cycle $C'_G$) in $M \oplus M'$, whose corresponding path $T'$ (resp. cycle $C'$)
in the graph $G_M$ has a positive weight. However, this again contradicts the fact that $M$ is a popular matching.
Thus, the path $T$ has weight $0$ and ends in a sink and hence is a switching path. A similar argument shows
that every cycle in $M \oplus M'$ has a corresponding switching cycle in $G_M$. Applying these switching paths
and cycles to $M$ gives us the desired matching $M'$, thus completing the proof.
\end{proof}

Recall the definition of $choices(a)$ for an agent as given by Definition~\ref{def:choices}.
We now define the notion of a {\em \tightpair}, that is, a set of agents $\A_1$ and a set
of posts $\p_1$ with $|\A_1| = |\p_1|$. Further, for every $a \in \A_1$ we have $ choices(a) \subseteq \p_1$.
We 
show that a \tightpair exists whenever there is a non-sink component in the
switching graph $G_M$.
\begin{lemma}
\label{lem:tight-sets}
Let $\Y$ be a  non-sink component  in $G_M$ and $q \in \Y$.
Let,
\begin{center}
 $\p_{q} = {q} \cup \{p:  \mbox{ $q$ has a path to $p$ in $G_M$ } \}$
\end{center}
Then there exists a set of agents $\A_{q}$ such that
(i) $|\A_q| = |\p_q|$, and
(ii) for every $a \in \A_q$,  $choices(a) \subseteq \p_q$.
\end{lemma}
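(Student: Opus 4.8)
The plan is to take $\A_q$ to be exactly the set of agents that $M$ matches to the posts of $\p_q$, and then derive both properties from the fact that $\p_q$ is closed downward under reachability in $G_M$. First I would note that $\p_q\subseteq\Y$: every post reachable from $q$ by a directed path lies in the same weakly connected component, and by maximality of $\Y$ that component is $\Y$ itself. Since $\Y$ is a non-sink component it contains no sink vertices, and by the sink characterization a post is a sink of $G_M$ precisely when it is unmatched in $M$. Hence every post of $\p_q$ is matched in $M$, so defining
\[
\A_q=\{\,M(p): p\in\p_q\,\}
\]
is legitimate.

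For part (i), since $M$ is a matching the map $p\mapsto M(p)$ is injective on $\p_q$, so $|\A_q|=|\p_q|$ immediately.

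For part (ii), I would fix $a\in\A_q$, say $a=M(p_i)$ with $p_i\in\p_q$, and take any $p'\in choices(a)$, i.e.\ $(a,p')\in E''$. If $p'=p_i=M(a)$ then $p'\in\p_q$ trivially. Otherwise the pair $p_i=M(a)$ together with $(a,p')\in E''$ is exactly the witness required by the definition of $E_M$ for an edge $p_i\to p'$ in $G_M$. Because $p_i\in\p_q$, either $p_i=q$ or $q$ has a directed path to $p_i$; appending the edge $p_i\to p'$ yields a directed path from $q$ to $p'$, so $p'\in\p_q$. Thus $choices(a)\subseteq\p_q$, which is precisely (ii). Put differently, the out-neighbours of $p_i$ in $G_M$ are in bijection with $choices(M(p_i))\setminus\{p_i\}$, and reachability from $q$ through $p_i$ propagates to each of them.

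This is essentially a bookkeeping argument once the correct candidate $\A_q$ is identified, so I do not anticipate a genuine obstacle. The one step that must be handled carefully is the claim that every post of $\p_q$ is matched in $M$ (otherwise $\A_q$ would be smaller than $\p_q$ and (i) would fail); this is exactly where I invoke that $\Y$ is a non-sink component together with the characterization of sinks as the unmatched posts. The rest reduces to unpacking the definition of the edges of $G_M$ in terms of $choices(\cdot)$ and using transitivity of reachability.
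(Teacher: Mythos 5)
Your proposal is correct and follows essentially the same route as the paper: define $\A_q$ as the set of $M$-partners of the posts in $\p_q$, observe that every post in $\p_q$ is matched (since a non-sink component has no sinks and sinks are exactly the unmatched posts), and use the definition of $E_M$ to propagate reachability from $q$ to every post in $choices(a)\setminus\{M(a)\}$. The only difference is that you spell out the matched-posts step and the closure-under-out-edges step explicitly, which the paper leaves implicit.
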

\begin{proof}
Let $\A_q = \cup_{p \in \p_q} M(p)$.
Since every $p \in \p_q$ is matched, we note that $|\A_q| = |\p_{q}|$.
Consider any agent $a \in \A_q$;  then $M(a) \in \p_q$ and note that $M(a) \in choices(a)$. Further,
note that, for every $p' \in choices(a) \setminus \{M(a)\}$, we have an edge $(M(a), p')$ in $G_M$. Thus,
every such $p'$ also belongs to $\p_q$. This proves that for every $a \in \A_q$, $choices(a) \subseteq \p_q$.
\end{proof}
\subsection{Generating popular pairs and counting popular matchings}
\label{sec:gen-pop-pairs}
Let $G = (\A \cup \p, E)$ be an instance of
the popular matchings problem. Define
\begin{eqnarray}
\label{eqn:pop-pairs}
PopPairs = \{ (a, p) \in E: \mbox { $M$ is a popular matching in $G$ and $M(a) = p$} \}.
\end{eqnarray}
Using the switching graph defined in the previous section, it is easy to compute the set $PopPairs$
in $G$. Let $G_M$ be the switching graph with respect to a popular matching $M$ in $G$. 
From \textit{Property~\ref{prop3}} we can conclude that an edge $e = (a, p)$ is a popular
pair if and only if (i) $e \in M$  or, (ii) the edge $(M(a), p)$ belongs to some switching path in $G_M$ or, (iii)
the edge $(M(a), p)$ belongs to some switching cycle in $G_M$.

We note that edges satisfying condition (i) can be marked in $O(\sqrt{n}m)$ time using Algorithm~\ref{algo:pop-matching}
and edges satisfying conditions (ii) and (iii) can be marked in linear time in the size of the switching graph.
Thus, we conclude the following theorem.
\begin{theorem}
\label{thm:pop-pairs}
The set of popular pairs for an instance $G = (\A \cup \p, E)$ of the popular matchings problem 
with ties can be computed in $O(\sqrt{n}m)$ time. 
\end{theorem}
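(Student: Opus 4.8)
The plan is to convert the characterization of popular pairs stated just above the theorem into three concrete marking tasks on the switching graph $G_M$ and to bound the cost of each. By Algorithm~\ref{algo:pop-matching} a popular matching $M$ and the pruned graph $G''$ are obtained in $O(\sqrt{n}m)$ time, and $G_M=(\p,E_M)$ is then built within the same bound, with $|\p|=O(n)$ vertices and $O(m)$ edges. Condition (i)---that $e\in M$---is recorded while running the algorithm, again within $O(\sqrt{n}m)$. Hence it suffices to show that the edges meeting conditions (ii) and (iii) can be marked in time linear in the size of $G_M$, i.e.\ in $O(n+m)$; the $O(\sqrt{n}m)$ bound for constructing $M$ and $G_M$ then dominates the total running time.

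For condition (iii) I would exploit \textit{Property~\ref{prop2}}, which guarantees that every directed cycle of $G_M$ has weight $0$. Consequently every cycle is already a switching cycle, and an edge $(p_i,p_j)$ lies on a switching cycle precisely when $p_i$ and $p_j$ lie on a common directed cycle, that is, in the same strongly connected component of $G_M$. Computing the strongly connected components (by Tarjan's algorithm) costs $O(n+m)$, after which marking every edge whose endpoints share a component is immediate.

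Condition (ii) is the substantive case: I must identify the edges that lie on some weight-$0$ path ending at a sink. \textit{Property~\ref{prop2}} again restricts the weights, since every path has weight in $\{-1,0,+1\}$ and no path ending at a sink has weight $+1$. For each post $v$ I would compute two bounded sets: $W_{\mathrm{sink}}(v)\subseteq\{-1,0\}$, the attainable weights of paths from $v$ to a sink, obtained by a backward sweep from the sinks; and $W_{\mathrm{end}}(v)\subseteq\{-1,0,+1\}$, the attainable weights of paths ending at $v$ (the empty path contributing weight $0$), obtained by a forward sweep. An edge $e=(u,v)$ of weight $w_e$ then lies on a switching path iff there exist $w_1\in W_{\mathrm{end}}(u)$ and $w_2\in W_{\mathrm{sink}}(v)$ with $w_1+w_e+w_2=0$, which is checked in $O(1)$ per edge, since the concatenation of the two witnessing paths through $e$ is a weight-$0$ walk ending at a sink.

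The main obstacle I anticipate is that $G_M$ is not acyclic, so the two sweeps cannot simply follow a topological order, and a witnessing weight-$0$ walk through $e$ need not be a simple path. The resolution is once more \textit{Property~\ref{prop2}}. First, I would recast each sweep as a reachability computation over the $O(n)$ \emph{vertex-weight states} $(v,w)$ with $w\in\{-1,0,+1\}$, where an edge $(u,v)$ of weight $w_e$ induces the transition $(u,w)\to(v,w+w_e)$ whenever $w+w_e\in\{-1,0,+1\}$: seeding every state $(v,0)$ and exploring forward yields $W_{\mathrm{end}}$, while seeding every sink at weight $0$ and exploring backward yields $W_{\mathrm{sink}}$. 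A weight-$0$ cycle maps each state $(v,w)$ to itself and hence creates no new state, so a single graph search over the $O(n)$ states and $O(m)$ transitions suffices. Second, the same weight-$0$ property lets me pass from a witnessing walk to an actual simple switching path containing $e$, by excising the weight-$0$ cycles the walk may traverse; and any edge lying on such a cycle is already marked under condition (iii), so no marking is lost. Combining the three steps, every edge of $PopPairs$ is identified in $O(n+m)$ time beyond the construction of $M$ and $G_M$, giving the claimed $O(\sqrt{n}m)$ total and completing the proof.
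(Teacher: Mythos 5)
Your proposal is correct, and its skeleton matches the paper's proof exactly: both reduce the theorem to the three-way characterization from \textit{Property~\ref{prop3}}, both charge condition (i) to the $O(\sqrt{n}m)$ run of Algorithm~\ref{algo:pop-matching}, and both handle condition (iii) identically, by computing strongly connected components and using the fact that every cycle has weight $0$ (\textit{Property~\ref{prop2}}). Where you genuinely diverge is condition (ii). The paper exploits the structural information in Table~\ref{tab:edge-weights} together with \textit{Property~\ref{prop0}} (sinks lie in $\evenone$) to conclude that any weight-$0$ path ending at a sink must \emph{begin} at a vertex of $\evenone$; this collapses condition (ii) to a single plain DFS from the vertices in $\evenone \cap \eventwo$, marking every edge encountered. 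You instead run a generic two-sided weighted-reachability computation: a forward sweep for the attainable path weights $W_{\mathrm{end}}(u)$, a backward sweep from the sinks for $W_{\mathrm{sink}}(v)$, both over $O(n)$ vertex-weight states and $O(m)$ transitions, followed by an $O(1)$ check per edge. Both are linear in the size of $G_M$, so the complexity claim is unaffected. The trade-off: the paper's argument is shorter but leans on the potential-like structure of the edge weights (in fact, by Table~\ref{tab:edge-weights} the weight of any path depends only on the $\oddone/\evenone/\unone$ classes of its endpoints, which is what makes the single-DFS shortcut sound); your argument needs only the weaker \textit{Property~\ref{prop2}} (weights bounded, all cycles weight $0$), so it would survive even if that finer structure were unavailable. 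You also make explicit a step the paper glosses over: a witnessing weight-$0$ walk through an edge need not be a simple path, and your observation that excising weight-$0$ cycles either preserves the edge or leaves it covered by condition (iii) is precisely the missing justification; the paper's DFS-based marking implicitly relies on the same fact.
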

\begin{proof}
From \textit{Property~\ref{prop3}} we can conclude that an edge $e = (a, p)$ is a popular
pair if and only if (i) $e \in M$  or, (ii) the edge $(M(a), p)$ belongs to some switching path in $G_M$ or, (iii)
the edge $(M(a), p)$ belongs to some switching cycle in $G_M$. We show how each of these
conditions can be efficiently verified.
\begin{itemize}
\item The condition (i) can be checked in $O(\sqrt{n}m)$ time by running Algorithm~\ref{algo:pop-matching} and obtaining
a popular matching $M$.
\item In order to check condition (iii), recall that every cycle in $G_M$ has weight $0$ and is therefore a
switching cycle. Hence this condition can be checked in linear time in the size of the switching graph
by identifying strongly connected components of $G_M$. Every edge belonging to a strongly connected component is a popular pair.
\item In order to check condition (ii), recall
that a switching path is a path which has weight $0$ and ends in a sink. Therefore such paths can be found only in
sink components of $G_M$ or equivalently paths beginning at vertices in $\eventwo$. Further,
any sink vertex in $G_M$ has to be a vertex in $\evenone$ according to the partition on rank-1 edges of $G$.
Using the weights on the edges given by Table~\ref{tab:edge-weights}, it is easy to see that
any $0$ weight path ending in  a sink has to begin at a vertex $p \in \evenone$. Thus,
a simple depth-first search beginning at vertices in $\evenone \cap \eventwo$ and marking all edges that we encounter
as popular pairs takes care of condition (ii). It is easy to see that this procedure also takes time linear in the size of $G_M$.
\end{itemize}
\end{proof}

We now show that given an instance of the popular matchings problem with ties,
the problem of counting the number of popular matchings is 
$\#$P-Complete. 
\begin{theorem}
\label{thm:counting-hard}
Given an instance $G = (\A \cup \p, E)$ of the popular matchings problem with ties, counting the total
number of popular matchings in $G$ is $\#$P-Complete.
\end{theorem}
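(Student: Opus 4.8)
The plan is to establish $\#$P-Completeness by exhibiting a parsimonious (or at least count-preserving) reduction from a known $\#$P-Complete problem, together with membership in $\#$P. Membership is the easy direction: given a candidate matching $M$, we can verify in polynomial time whether $M$ is popular using Theorem~\ref{thm:pop-mat} (check that $M \cap E_1$ is a maximum matching of $G_1$ and that each agent is matched within $\{f(a) \cup s(a)\}$), so the number of popular matchings is the count of accepting witnesses of a polynomial-time verifier. The substance is therefore the hardness reduction.

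For the reduction, I would reduce from counting perfect matchings in a bipartite graph (equivalently, the permanent of a $0/1$ matrix), which is the canonical $\#$P-Complete problem of Valiant. The key structural tool is already available in the excerpt: by \emph{Property~\ref{prop3}}, every popular matching in $G$ arises from a fixed popular matching $M$ by applying vertex-disjoint switching paths and switching cycles, one subset chosen in each sink component and vertex-disjoint switching cycles in each non-sink component, and distinct choices yield distinct popular matchings. This means the number of popular matchings factors as a product over the weakly connected components of $G_M$ of the number of ways to apply switchings within that component. So if I can engineer an instance with ties whose switching graph $G_M$ contains, as one component, a gadget whose switching-cycle/switching-path structure encodes exactly the perfect matchings of an arbitrary bipartite graph $B$, while all other components contribute a fixed multiplicative constant, then counting popular matchings computes (a known multiple of) the permanent of $B$.

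Concretely, given a bipartite graph $B = (U \cup W, F)$ whose perfect matchings we wish to count, I would build a popular-matchings instance $G$ with ties so that each vertex of $B$ becomes a post, each edge of $B$ is realized through an agent who is indifferent (using a tie) between the two posts it connects, and the weights assigned by the switching-graph construction (Table~\ref{tab:edge-weights}) all come out to $0$ so that \emph{every} directed cycle in the relevant component is automatically a switching cycle. The crucial point exploited here is that ties force weight-$0$ edges, which is exactly what makes the switching structure rich enough to encode arbitrary cycle covers; this is why the counting problem is hard with ties even though it is polynomial for strict lists. I would arrange the true popular matching $M$ to correspond to one fixed perfect matching of $B$, so that the alternating cycles of $M \oplus M'$ for other perfect matchings $M'$ of $B$ are in bijection with disjoint unions of switching cycles in $G_M$, making the count of popular matchings equal the number of perfect matchings of $B$ (up to a controlled factor from auxiliary gadget components).

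The main obstacle I anticipate is the gadget design: I must simultaneously (i) guarantee that $G$ admits a popular matching at all and that the sets $f(a), s(a)$ behave as intended, (ii) ensure the node-classification into $\oddone,\evenone,\unone$ and $\oddtwo,\eventwo,\untwo$ places the gadget edges so that Table~\ref{tab:edge-weights} assigns weight $0$ everywhere in the encoding component, and (iii) prevent spurious switching paths or cycles (e.g. edges leaking between the gadget and last-resort posts) that would inflate or corrupt the count. Verifying the exact bijection between perfect matchings of $B$ and admissible switching families, and confirming that the reduction is polynomial-time and parsimonious (so the $\#$P-hardness is genuine rather than merely Turing), is where the real work lies; the membership argument and the high-level factorization via \emph{Property~\ref{prop3}} are comparatively routine.
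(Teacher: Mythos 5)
Your overall instinct---use ties to force all edges to rank~$1$ so that popular matchings become perfect matchings of the source graph---points in the right direction, and your $\#$P-membership argument via Theorem~\ref{thm:pop-mat} is fine. But what you have written is a plan, not a proof: the entire substance of the hardness direction is the gadget construction and the bijection between perfect matchings of $B$ and admissible families of disjoint switching cycles, and you explicitly defer both (``where the real work lies''). As it stands, nothing has actually been reduced. Moreover, the encoding you sketch (posts $=$ vertices of $B$, one agent per edge of $B$, indifferent between its two endpoints) does not obviously yield the correspondence you want: since $|E(B)| > |V(B)|$ for the graphs of interest, many agents cannot be matched along rank-$1$ edges, so last-resort posts and the sets $s(a)$ enter the picture, popular matchings of the constructed instance no longer biject with perfect matchings of $B$, and your anticipated obstacles (i)--(iii) are genuine unresolved difficulties rather than routine checks. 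The appeal to Property~\ref{prop3} and the component-wise factorization of the count, while correct, does nothing to close this gap.

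The paper's proof shows that all of this machinery is unnecessary, provided the source problem is chosen correctly: it reduces from counting perfect matchings in \emph{$3$-regular} bipartite graphs, shown $\#$P-Complete by Dagum and Luby~\cite{DL92}. Given a $3$-regular bipartite graph $H$, simply take $G = H$ and declare every edge to be rank $1$, so each agent's preference list is a single tie. Regularity guarantees that $H$ admits a perfect matching, hence maximum matchings of $G_1 = G$ are perfect; consequently $G$ admits a popular matching, and Theorem~\ref{thm:pop-mat} yields directly that the popular matchings of $G$ are exactly the perfect matchings of $H$ (every agent is then matched within $f(a)$, and any popular matching must be a maximum, i.e.\ perfect, matching on rank-$1$ edges). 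The reduction is parsimonious and its correctness is a two-line verification---no switching graph, no weight table, no component analysis. The idea your proposal is missing is precisely this: by reducing from a restricted class of instances that always admit a perfect matching, existence of popular matchings and exactness of the correspondence come for free, which is what you were attempting to engineer by hand with gadgets.
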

\begin{proof}
In order to prove the completeness
result, we reduce from the problem of counting the number of perfect matchings in $3$-regular
bipartite graphs. This problem was shown to be $\#$P-Complete by Dagum and Luby~\cite{DL92}.
Let $H = (\A \cup \p,E )$ be a $3$-regular bipartite graph. We construct an instance $G = H$ of the popular
matching problem by assigning all the edges in $E$ as rank-1 edges. It is well-known that a $k$-regular bipartite
graph admits a perfect matching and hence it is easy to see that every perfect matching in $H$ is a popular
matching in $G$ and vice versa. Thus, the theorem statement follows.
\end{proof}

\section{Cheating strategies -- preliminaries}
\vspace{-0.1in}
\label{sec:cheating-prelims}

In this section we set up the notation useful
in formulating our cheating strategies.
We begin by partitioning the set of agents $\A$ depending on the posts that a particular
agent gets matched to when each  agent is truthful,
that is, in the instance $G$.
\begin{eqnarray*}
&\mbox { \fonly } &= \{a: \mbox {every popular matching in $G$ matches $a$ to one of her rank-1 posts} \}\\
&\mbox { \sonly } &= \{a: \mbox {every popular matching in $G$ matches $a$ to one of her non-rank-1 posts} \}\\
&\mbox { \fands } &= \A \setminus ( \mbox { \fonly } \cup \mbox { \sonly } ).
\end{eqnarray*}
The set $\A_{f/s}$ denotes the set of agents $a$ such that $a$ gets matched to one of her rank-1 posts in some popular matching in
$G$, whereas to one of her non-rank-1 posts in some other popular matching in $G$.
It is easy to see that the above partition can be readily obtained once we have the
set of popular pairs $PopPairs$ (defined by Equation~(\ref{eqn:pop-pairs})).

Let $a_1$ be the sole manipulative agent who is aware
of the true preference lists of all other agents.
Let $\mathcal{L} = P_1, P_2, \ldots, P_t, \ldots, P_l$ denote the true preference list of $a_1$ where $P_i$ denotes the set of
$i$-th rank posts of $a_1$.
Since we will be working with another instance $H$ obtained by falsifying the preference list of $a_1$,
we now qualify the sets
$f(a)$ and $s(a)$ for every agent with the instance under consideration.
For an agent $a$, let $f_G(a)$ and $s_G(a)$ denote sets $f(a)$ and $s(a)$ respectively for agent $a$ in $G$.
We note that $f_G(a_1) = P_1$. Assume that
$s_G(a_1) \subseteq P_t$ is the set of $t$-th ranked posts of $a_1$, where $t > 1$.

Recall the strategy -- {\em better always}
defined for a single manipulative agent.
If agent $a_1 \in \A_f$, then she does not have any incentive to manipulate
her preference list. Thus, in this case we are done and $\mathcal{L}$ is
her optimal strategy. We therefore focus on $a_1 \in \A_s \cup \A_{f/s}$.
Let $H$ denote the instance obtained by falsifying the preference list of $a_1$ alone.
\begin{itemize}
\item If $a_1 \in \A_s$, then in order to get {\em better always} her goal
is to force at least some popular matching in $H$ to match her to a post
which she strictly prefers to her $t$-th ranked post (that is,
$s_G(a_1)$).
\item If $a_1 \in \A_{f/s}$, then in order to get {\em better always}
her goal is to force every popular matching in $H$ to match her to one of her true
rank-1 posts.
\end{itemize}

Denote by $H\succ G$ with respect to $a_1$ if agent $a_1$ is {\em better always}
in $H$. It is instructive to consider examples to get intuition regarding the
cheating strategies.

\begin{example}
\label{app:ex:sonly}
\end{example}
Consider the instance $G$ as shown in Figure~\ref{fig:example-ex2}
and let $a_5$ be the manipulative agent.
It can be seen that $a_5 \in \A_{f/s}$ in $G$. Now consider the instance $H$ where $a_5$ alone
falsifies her preference list such that
 $p_3$ is her rank-1 post and $p_8$ as her rank-2 post.
\begin{center}
$a_5: p_3 \hspace{0.1in} p_8$
\end{center}
It is easy to
verify that every popular matching in $H$ matches $a_5$ to $p_3$ which is her true rank-1 post.
The idea for an $\A_{f/s}$ agent $a$ is to choose a post in $s_H(a)$ (here $p_8$) to which $a$ can
never be matched in a popular matching in $H$. We will show that such a post can be chosen whenever there exists a non-sink component in
the switching graph and therefore a  \tightpair   (in this case
$\p_1 = \{p_8, p_1\}$ and $\A_1 = \{a_2, a_3\}$).

\begin{example}
\label{ex:need-modified}
\end{example}
Consider the instance $G$ shown in Figure~\ref{fig:example-ex2} and
let $a_1$ be the manipulative agent.
Every popular matching in $G$ matches $a_1$ to either $p_6$ or $p_7$ and therefore $a_1 \in \A_s$. 
Let $H$ denote the instance where $a_1$ submits the preference list as follows:
$p_3$ is her
rank-1 post whereas $p_8$ is her rank-2 post.
\begin{center}
$a_1: p_3 \hspace{0.1in} p_8$
\end{center}
It can be verified that every
popular matching in $H$ matches $a_1$ to $p_3$. The intuition here is that, a post to which $a_1$ wishes to get matched
(here $p_3$), should have a path to an unmatched post or roughly belong to
a sink component of $G_M$.  We also choose a post in $s_H(a_1)$ (in this case $p_8$) to which $a_1$ can never get
matched in any popular matching in $H$.
However, in this example, this is not the best that $a_1$ can get by falsifying. Let $a_1$ falsify her preference list as below
and let $H$ denote the falsified instance.
\begin{center}
$a_1: p_2 \hspace{0.1in} p_8$
\end{center}
Consider the matching $M'' = \{(a_1, p_2), (a_2, p_1), (a_3, p_8), 
(a_4, p_3), (a_5, p_4), (a_6, p_9), (a_7, p_5)\}$ in $H$. It can be verified that
$M''$  is popular in $H$  and in fact
every popular matching in $H$ matches $a_1$ to $p_2$. However,
our intuition that $p_2$ should belong to a sink component does not hold. This
is because the edge $(a_4, p_3)$ which got deleted in Step~4 of Algorithm~\ref{algo:pop-matching}
is being used after $a_1$ falsifies her preference list.
In order to deal with such cases we will work with a slightly modified instance as
defined in Section~\ref{sec:modified-instance}.

\subsection{$s(a)$  for other agents remains unchanged}
\label{subsec:sposts-dontchange}
Let $H$ denote the instance obtained by falsifying the preference list of $a_1$ alone.
Since the rest of the agents are truthful,
for every agent $a \in \A \setminus \{a_1\}$, we have $f_H(a) = f_G(a)$.
However, since $s_H(a)$ depends on the rank-1 posts of the rest of the agents,
it may
be the case that when $a_1$ falsifies her preference list, $s_H(a) \neq s_G(a)$  for an agent $a \in \A \setminus \{a_1\}$.
We claim that if $a_1$ falsifies her preference list only to improve the rank of the post that
she gets matched to, the rest of the agents do not change their $s(a)$.
Recall
that by definition, $s_H(a)$ is the set of most preferred posts of $a$ which are {\em even} in the graph $H_1$ (the graph
$H$ on rank-1 edges).
Theorem~\ref{lem:partition-invariant} states the claim; its proof requires the following lemmas.

\begin{lemma}
Let $a_1 \in \A_s \cup \A_{f/s}$ when she is truthful. Then, $f_G(a_1) \subseteq (\oddone)_G$.
\label{lem:f-in-odd1}
\end{lemma}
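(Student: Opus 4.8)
The plan is to exhibit, from the hypothesis $a_1 \in \A_s \cup \A_{f/s}$, a popular matching that leaves $a_1$ exposed on the rank-1 edges, and then to read off the partition class of $a_1$ and of its rank-1 neighbours directly from the Dulmage--Mendelsohn lemma (Lemma~\ref{lemma:node-classification}) together with the popularity characterisation (Theorem~\ref{thm:pop-mat}).

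First I would note that $a_1 \in \A_s \cup \A_{f/s}$ precisely means $a_1 \notin \A_f$, so by definition there exists a popular matching $M$ in $G$ with $M(a_1)$ a \emph{non-rank-1} post of $a_1$. The only property I will use is that the edge $(a_1, M(a_1))$ is not a rank-1 edge, hence $(a_1, M(a_1)) \notin E_1$. Next, by Theorem~\ref{thm:pop-mat}(1) the restriction $M \cap E_1$ is a maximum matching of $G_1$. Since $a_1$ has the unique partner $M(a_1)$ in $M$ and that edge lies outside $E_1$, the agent $a_1$ is unmatched in the maximum matching $M \cap E_1$. By Lemma~\ref{lemma:node-classification}(b), in every maximum matching of $G_1$ each vertex of $\oddone \cup \unone$ is matched; therefore the exposed vertex $a_1$ must lie in $\evenone$. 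By Lemma~\ref{lemma:node-classification}(a) the classes are independent of the chosen maximum matching, so this is genuinely a statement about the fixed partition and not an artefact of the particular matching $M \cap E_1$.

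Finally I would take an arbitrary $p \in f_G(a_1)$. The edge $(a_1, p)$ is a rank-1 edge, so $(a_1, p) \in E_1$, and we have just shown $a_1 \in \evenone$. By Lemma~\ref{lemma:node-classification}(c), $G_1$ contains no edge between a vertex of $\evenone$ and a vertex of $\evenone \cup \unone$; hence the endpoint $p$ cannot lie in $\evenone \cup \unone$, which forces $p \in \oddone$. As $p$ was an arbitrary rank-1 post of $a_1$, this yields $f_G(a_1) \subseteq (\oddone)_G$, as claimed.

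There is no substantive obstacle here; the argument is purely a combination of the two clauses of Theorem~\ref{thm:pop-mat} with parts (b) and (c) of Lemma~\ref{lemma:node-classification}. The single point deserving a word of care is the step asserting that a non-rank-1 match of $a_1$ really leaves $a_1$ \emph{exposed} in $M \cap E_1$, rather than matched by some other rank-1 edge of $M$ --- but a matching assigns $a_1$ a unique partner, so once that partner is a non-rank-1 post, $a_1$ carries no rank-1 matched edge at all, and the conclusion is immediate.
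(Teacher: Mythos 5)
Your proof is correct, and it is organized differently from the paper's, so a comparison is worthwhile. Both arguments rest on the same two pillars, Theorem~\ref{thm:pop-mat}(1) and the Dulmage--Mendelsohn lemma (Lemma~\ref{lemma:node-classification}), but the paper argues by contradiction with a case split: it supposes some $q \in f_G(a_1)$ lies in $(\evenone \cup \unone)_G$, infers $a_1 \in (\oddone \cup \unone)_G$, and then treats the two cases separately, using that an agent in $(\oddone)_G$ has $s_G(a_1) \subseteq f_G(a_1)$ (hence no non-rank-1 edges in $G'$), while an agent in $(\unone)_G$ matched to a non-rank-1 post would violate maximality on rank-1 edges; either way $a_1$ would belong to $\A_f$, a contradiction. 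You instead argue directly: the hypothesis $a_1 \in \A_s \cup \A_{f/s}$ hands you a popular matching $M$ whose restriction $M \cap E_1$ is a maximum matching of $G_1$ that leaves $a_1$ exposed, so Lemma~\ref{lemma:node-classification}(b) immediately places $a_1 \in (\evenone)_G$, and Lemma~\ref{lemma:node-classification}(c) then forces every rank-1 neighbour of $a_1$ into $(\oddone)_G$. Your route buys simplicity: no case analysis, and no appeal to the auxiliary fact that odd agents satisfy $s(a) \subseteq f(a)$, so the proof is self-contained given the two cited results. The paper's route buys a reusable intermediate claim --- that every agent in $(\oddone \cup \unone)_G$ is matched along a rank-1 edge in every popular matching --- which it invokes again later (for instance in the proof of Property~\ref{prop-table-weights}), so stating and proving it explicitly at this point earns its keep.
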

\begin{proof}
Suppose not. Then, let $q \in f_G(a_1)$ such that, $q \in (\evenone \cup \unone)_G$. This implies
that $a_1 \in (\oddone \cup \unone)_G$. Note that any agent in $(\oddone \cup \unone)_G$ remains matched
along her rank-1 edge in every popular matching of $G$. This is because if $a_1 \in (\oddone)_G$, then
$s_G(a_1) \subseteq f_G(a_1)$ and therefore $a_1$ has no non-rank-1 edges incident on it in the graph $G'$.
On the other hand, if $a_1 \in (\unone)_G$ and if $a_1$ gets matched to a non-rank-1 post in a popular
matching $M$, then $M$ is not a maximum matching on rank-1 edges of $G$.
Therefore in each case, $a_1$ remains matched along a rank-1 edge in every popular matching in $G$
contradicting the fact that $a_1 \in \A_s \cup \A_{f/s}$.
Thus, $f_G(a_1) \subseteq (\oddone)_G$.
\end{proof}

\begin{lemma}
Let $H$ be such that $H \succ G$ w.r.t. $a_1$. Then $f_H(a_1) \subseteq (\oddone \cup \unone)_G$.
\label{lem:f-in-ou}
\end{lemma}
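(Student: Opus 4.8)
The plan is to argue by contradiction: suppose some $q \in f_H(a_1)$ has $q \in (\evenone)_G$, and derive that $H \not\succ G$. By Lemma~\ref{lem:f-in-odd1} we already know that $a_1 \in (\evenone)_G$ with $f_G(a_1) \subseteq (\oddone)_G$; in particular $a_1$ is not adjacent to $q$ in $G_1$, and there is a maximum matching of $G_1$ leaving $a_1$ unmatched. The first structural fact I would establish is that the even agent $a_1$ and the even post $q$ can be left unmatched \emph{simultaneously} by one maximum matching $M_1^0$ of $G_1$. This follows from a parity argument: starting from a maximum matching that exposes $a_1$, the vertex $q$ (being even, independently of the matching) is reached by an even-length alternating path from an exposed vertex, which is necessarily a \emph{post} and hence distinct from $a_1$; since $a_1$ is exposed it cannot lie in the interior of that path, so flipping the path exposes $q$ while keeping $a_1$ exposed.

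Given such an $M_1^0$, the set $M_1^0 \cup \{(a_1,q)\}$ is a matching of $H_1$ strictly larger than $M_1^0$, while the maximum matching size of $H_1$ cannot exceed that of $G_1-a_1$ by more than one because only the single agent $a_1$ has altered its edges; since $a_1$ is even, $G_1-a_1$ has the same maximum matching size as $G_1$. Hence the maximum matching size of $H_1$ is exactly one more than that of $G_1$, every maximum matching of $H_1$ saturates $a_1$, i.e. $a_1 \in (\oddone \cup \unone)_H$, and $a_1$ is matched along a rank-$1$ edge of $H$ in every popular matching. Moreover $M_1^0 \cup \{(a_1,q)\}$ is itself a maximum matching of $H_1$ matching $a_1$ to $q$; since $H$ admits a popular matching, Theorem~\ref{thm:pop-mat} together with Algorithm~\ref{algo:pop-matching} lets me extend \emph{any} maximum matching of $H_1$ to an $\A$-complete popular matching of $H$, and extending this one yields a popular matching $M_H$ of $H$ with $M_H(a_1)=q$. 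Because $q \in (\evenone)_G$ is disjoint from $(\oddone)_G \supseteq f_G(a_1)$, the post $q$ is not a true rank-$1$ post of $a_1$; this already settles the case $a_1 \in \A_{f/s}$, whose goal is to be matched to a true rank-$1$ post in every popular matching of $H$.

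For $a_1 \in \A_s$ I would use that, by the definition of $s_G$, an even post $q$ has true rank at least $t$, so $M_H(a_1)=q$ is never a strict improvement, and it remains to rule out that some \emph{other} declared post delivers the strict improvement demanded by clause~(ii) of \emph{better always}. The decisive observation is that, once the maximum matching size of $H_1$ has increased by one, $a_1$ can only be matched in $H_1$ to posts that are exposable in $G_1-a_1$ (deleting the $a_1$-edge from a maximum matching of $H_1$ leaves a maximum matching of $G_1-a_1$ exposing $a_1$'s partner); every post that would improve on rank $t$ has true rank $<t$ and therefore lies in $(\oddone \cup \unone)_G$, so I must show that such a post stays non-exposable in $G_1-a_1$ and hence is unreachable for $a_1$. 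I expect this last step to be the main obstacle: it is a cross-instance comparison of the Dulmage--Mendelsohn decompositions of $G_1$ and $G_1-a_1$, and the genuine subtlety (already visible in Lemma~\ref{lemma:node-classification}(c)) is that deleting an even vertex can change the classification of others. The plan for handling it is to exploit that $a_1$ is adjacent in $G_1$ only to $f_G(a_1)\subseteq(\oddone)_G$, so its removal can only alter alternating reachability \emph{through} those rank-$1$ posts; I would argue that an improvement post, reached only by odd alternating paths that do not route through the capacity freed at $f_G(a_1)$, remains odd or unreachable in $G_1-a_1$, so the even post $q$ truly traps $a_1$ and clause~(ii) cannot be met. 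In either case $H\not\succ G$, which establishes the contrapositive and completes the proof.
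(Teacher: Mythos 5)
Your route is, at bottom, the paper's route: add the edge $(a_1,q)$ whose endpoints are both even in $G_1$, conclude that the maximum-matching size on rank-$1$ edges jumps by exactly one, and then use the fact that a popular matching of $H$ must be maximum on the rank-$1$ edges of $H$ to pin down where $a_1$ can be matched. Your first two paragraphs are correct (the simultaneous-exposure construction is a fine substitute for the paper's augmenting-path argument, and the counting that forces every maximum matching of $H_1$ to saturate $a_1$ is right). The genuine problem is the step you flag as ``the main obstacle'': it is not an obstacle, and the plan you sketch for it --- a cross-instance comparison of the Dulmage--Mendelsohn decompositions of $G_1$ and $G_1-a_1$, tracking which alternating paths ``route through the capacity freed at $f_G(a_1)$'' --- is misdirected and, as written, is not a proof. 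The resolution is one line, and you already wrote down the key fact yourself. Write $\nu(\cdot)$ for the size of a maximum matching. If $M'$ is a popular matching of $H$ and $p=M'(a_1)$, then the restriction of $M'$ to rank-$1$ edges of $H$, with the edge $(a_1,p)$ deleted, is a matching consisting solely of rank-$1$ edges of $G$ (the other agents are truthful), of size $\nu(H_1)-1=\nu(G_1)$; so it is a \emph{maximum matching of $G_1$} that leaves $p$ exposed. By Lemma~\ref{lemma:node-classification}(b), every vertex of $(\oddone\cup\unone)_G$ is matched in every maximum matching of $G_1$, so $p\in(\evenone)_G$ and hence $p$ has true rank at least $t$ (or is off $a_1$'s true list altogether). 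There is no need to treat $G_1-a_1$ as a separate instance with its own decomposition: since $a_1\in(\evenone)_G$ we have $\nu(G_1-a_1)=\nu(G_1)$, so every maximum matching of $G_1-a_1$ already \emph{is} a maximum matching of $G_1$, and Lemma~\ref{lemma:node-classification} applies verbatim. This is exactly how the paper argues its corresponding case: it observes that $M_1'\setminus\{(a_1,q)\}$ is a maximum matching of $G_1$ leaving $q$ unmatched, contradicting $q\in(\oddone\cup\unone)_G$.

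A second, smaller flaw: your claim that $M_1^0\cup\{(a_1,q)\}$ extends to a popular matching $M_H$ of $H$ with $M_H(a_1)=q$ is not justified. Step~5 of Algorithm~\ref{algo:pop-matching} augments the rank-$1$ matching inside the graph of $f/s$-edges, and an augmenting path may pass through $q$ and $a_1$ and reassign $a_1$ to some other post, so ``extend any maximum matching of $H_1$ preserving $a_1$'s partner'' needs an argument you do not give. Fortunately it is also unnecessary: once you know (by the one-line argument above) that \emph{every} popular matching of $H$ matches $a_1$ along a rank-$1$ edge of $H$ to a post lying in $(\evenone)_G$, both cases close simultaneously --- for $a_1\in\A_{f/s}$, clause (i) of better-always fails because such a post is never a true rank-$1$ post of $a_1$ (as $f_G(a_1)\subseteq(\oddone)_G$ by Lemma~\ref{lem:f-in-odd1}), and for $a_1\in\A_s$, clause (ii) fails because no popular matching of $H$ gives $a_1$ a post of true rank smaller than $t$. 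So: delete the extension claim, replace the ``obstacle'' paragraph by the maximum-matching observation, and your proof becomes complete --- and essentially identical to the paper's, just with its case analysis folded into a single claim.
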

\begin{proof}
Assume for contradiction that there exists an instance
$H \succ G$ w.r.t $a_1$ and let $f_H(a_1) \cap (\evenone)_G = \{q_1, \ldots, q_k\}$.
Note that the rank of each $q_i$ in $a_1$'s preference
list is at least $t$.
We show that every popular matching in $H$ matches $a_1$ to one of $q_i$ and hence the rank
of the most preferred post that $a_1$ gets in $H$ is at least $t$,
a contradiction to  $H \succ G$ w.r.t. $a_1$.

We first show that if $f_H(a_1) \cap (\evenone)_G \neq \phi$, then the size of the
maximum matching on rank-1 edges of $H$ is strictly larger than the size of the maximum
matching on rank-1 edges of $G$.
Let $G_1$ be the graph on rank-1 edges of $G$ and let $M_1$ be a maximum matching in $G_1$ that leaves $a_1$ unmatched.
Note that such a matching exists because $f_G(a_1) \subseteq (\oddone)_G$ which implies that $a_1 \in (\evenone)_G$.
Consider the graph $H_1$, that is, the graph on rank-1 edges of $H$. Note that $M_1$ is a matching in $H_1$ as no other agent
changes her preference list.
Since each $q_i \in (\evenone)_G$ and $a_1 \in (\evenone)_G$, the addition of the edge $(a_1, q_i)$ creates an augmenting
path with respect to $M_1$ in the graph $H_1$. Thus, we get another matching, say $M_2$, in $H_1$ such that $|M_2| = |M_1 + 1|$.

Now consider a popular matching $M'$ in $H$ and let $M_1'$ denote the matching $M'$ restricted
to rank-1 edges of $H$.
Since $M'$ has to be a maximum matching on rank-1 edges of $H$, it is clear that $|M_1'| = |M_2|$.
Further since $M'(a_1) \in \{f_H(a_1) \cup s_H(a_1)\}$ let us consider the following cases:
\begin{itemize}
\item $M'(a_1) \in \{q_1, \ldots, q_k\}$: In this case we have the desired contradiction and we are done.
\item $M'(a_1) = q$ where $q \in  f_H(a_1) \cap (\oddone \cup \unone)_G$: If $q \in f_G(a_1)$,
then the edge $(a_1, q) \in G_1$ and therefore $M_1'$ is in fact a matching in $G_1$. Now
since $|M_1'| = |M_1|+1$, it contradicts the fact that $M_1$ was a maximum matching in $G_1$. Therefore assume that,
$q \notin f_G(a_1)$. Note that $M_2' = M_1' \setminus \{(a_1, q)\}$ is in fact a maximum
matching of $G_1$ since no other agents changed their preferences. However, $M_2'$ leaves $q$ unmatched
which contradicts the fact that $q \in (\oddone \cup \unone)_G$, since every vertex in
$(\oddone \cup \unone)_G$ remains matched in any maximum matching of $G_1$.
\item $M'(a_1) \in s_H(a_1)$: If $s_H(a_1) \subseteq f_H(a_1)$, the previous cases have already handled this.
In the other case assume that $s_H(a_1) \cap f_H(a_1) = \phi$. This implies that $M_1'$ leaves $a_1$ unmatched.
Thus, $M_1'$ is also a matching in $G_1$
since no other agents changed their preferences. However, $|M_1'| = |M_1| + 1$ which contradicts the fact that
$M_1$ was a maximum matching in $G_1$.
\end{itemize}
This finishes the proof of the lemma.
\end{proof}

\begin{lemma}
\label{lem:max-inG1-max-inH1}
Let $M_1$ be a maximum matching in $G_1$ such that $M_1$ leaves $a_1$ unmatched. Then,
in any instance $H$ such that $H \succ G$ w.r.t. $a_1$,
$M_1$ is a maximum matching in $H_1$.
\end{lemma}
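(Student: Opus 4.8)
The plan is to prove the two halves separately: first that $M_1$ is \emph{some} matching in $H_1$, and then that it is a \emph{maximum} one. The first half is immediate: since $M_1$ leaves $a_1$ unmatched, every edge of $M_1$ is incident only on agents in $\A\setminus\{a_1\}$, whose rank-$1$ posts are identical in $G$ and in $H$ (the rest of the agents are truthful). Hence every edge of $M_1$ is a rank-$1$ edge of $H$, so $M_1$ is a matching in $H_1$. The work is in showing it is maximum there, and for this I would argue by contradiction via augmenting paths.

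The two facts I would lean on are: (a) by Lemma~\ref{lem:f-in-ou}, $f_H(a_1)\subseteq(\oddone\cup\unone)_G$; and (b) by Lemma~\ref{lemma:node-classification}(b), every vertex of $(\oddone\cup\unone)_G$ is matched in the maximum matching $M_1$ of $G_1$. Together these say that each of $a_1$'s \emph{new} rank-$1$ posts is already matched by $M_1$. Now suppose for contradiction that $M_1$ is not maximum in $H_1$; then there is an $M_1$-augmenting path $P$ in $H_1$. Because $G_1$ and $H_1$ agree on all edges not incident on $a_1$ and $M_1$ is maximum in $G_1$, the path $P$ must use an edge at $a_1$ (otherwise $P$ would already be augmenting in $G_1$). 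Since $a_1$ is $M_1$-unmatched, it cannot be an internal vertex of an alternating path, so $a_1$ is an endpoint of $P$; write $P=\langle a_1, q, M_1(q),\ldots,p'\rangle$ with $(a_1,q)$ the unmatched first edge, $q\in f_H(a_1)$, and $p'$ the other (unmatched) endpoint.

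The finish uses fact (b): $q$ is matched in $M_1$, so the path proceeds from $q$ along its matched edge $(q,M_1(q))$. Deleting the first edge yields $P'=\langle q, M_1(q),\ldots,p'\rangle$, which avoids $a_1$ and therefore lies entirely in $G_1$. Read from $p'$ to $q$, $P'$ is an even-length alternating path from the $M_1$-unmatched post $p'$ to $q$, which exhibits $q\in(\evenone)_G$. This contradicts fact (a) together with the disjointness of $(\oddone)_G,(\evenone)_G,(\unone)_G$ guaranteed by Lemma~\ref{lemma:node-classification}(a). Hence no augmenting path exists and $M_1$ is maximum in $H_1$.

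I expect the last paragraph to be the crux: one must argue cleanly that any augmenting path in $H_1$ is forced to emanate from $a_1$, and that truncating its first edge certifies the new rank-$1$ post $q$ as an even vertex of $G_1$, clashing with Lemma~\ref{lem:f-in-ou}. The invariance of the Dulmage--Mendelsohn partition (Lemma~\ref{lemma:node-classification}(a)) is what makes this clash rigorous, since it lets me compare the classification of $q$ in $G_1$ against the new adjacency coming from $H$.
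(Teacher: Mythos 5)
Your proof is correct and takes essentially the same route as the paper's: assume an augmenting path in $H_1$ with respect to $M_1$, note it must emanate from the unmatched vertex $a_1$, truncate its first edge to obtain an even-length alternating path in $G_1$ certifying the new rank-1 post $q \in f_H(a_1)$ as a vertex of $(\evenone)_G$, contradicting Lemma~\ref{lem:f-in-ou}. You are simply more explicit than the paper about the steps it leaves implicit (that $M_1$ is a matching in $H_1$, that any augmenting path must pass through $a_1$, and that $q$ is matched in $M_1$ by Lemma~\ref{lemma:node-classification}(b)).
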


\begin{proof}
We first note that such a maximum matching $M_1$ in $G_1$ which leaves $a_1$ unmatched exists, because
$f_G(a_1) \subseteq (\oddone)_G$, hence $a_1 \in (\evenone)_G$. Assume that $M_1$ is not a maximum matching
in $H_1$.
Then there exists an augmenting path $\langle a_1, p_1, \ldots, a_k, p_k\rangle $ in $H_1$ with respect to $M_1$ where
both $a_1$ and $p_k$ are unmatched.
However, using the path $\langle p_k, a_k, \ldots, p_1 \rangle$,
we have an even length alternating path from $p_k$ to $p_1$ contradicting the fact that $p_1 \in (\oddone \cup \unone)_G$.
Thus $M_1$ is a maximum matching in $H_1$.
\end{proof}

\begin{theorem}
\label{lem:partition-invariant}
Let $H$ be an instance such that $H \succ G$ w.r.t. $a_1$. Then, (i) $(\evenone)_G \cap \p = (\evenone)_H \cap \p$
and therefore $s_H(a) = s_G(a)$ for every $a \in \A \setminus \{a_1\}$ and, (ii) $(\oddone)_G \cap \A = (\oddone)_H \cap \A$.
\end{theorem}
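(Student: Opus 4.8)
The plan is to compute both rank-$1$ partitions with respect to a single shared maximum matching and then exploit the fact that $G_1$ and $H_1$ differ only in the edges incident to $a_1$, together with a parity argument that neutralizes those differences. First I would invoke Lemma~\ref{lem:f-in-odd1} to get $f_G(a_1)\subseteq(\oddone)_G$, so that $a_1\in(\evenone)_G$ and (as already observed in Lemma~\ref{lem:max-inG1-max-inH1}) there is a maximum matching $M_1$ of $G_1$ leaving $a_1$ unmatched. By Lemma~\ref{lem:max-inG1-max-inH1} this same $M_1$ is a maximum matching of $H_1$, and since $M_1$ leaves $a_1$ unmatched, $a_1$ is even in both $G_1$ and $H_1$. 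Because the partition is independent of the chosen maximum matching (Lemma~\ref{lemma:node-classification}(a)), I may compute both the $G$- and the $H$-partition using $M_1$; the only edges in which $G_1$ and $H_1$ disagree are those incident to $a_1$.

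The key structural observation is that, because $a_1$ is unmatched, it can appear on an $M_1$-alternating path from an unmatched vertex only as the starting endpoint: an internal occurrence would force $a_1$ to be matched, and an unmatched (hence even) vertex is reached only by the trivial length-$0$ path, so $a_1$ cannot be a positive-length endpoint either. Consequently every alternating path that does \emph{not} start at $a_1$ avoids all edges incident to $a_1$, and hence is simultaneously an $M_1$-alternating path in $G_1$ and in $H_1$. Call these the \emph{$a_1$-free} witnesses; they certify exactly the same even/odd labels in both graphs. The only witnesses that can differ between $G_1$ and $H_1$ are the alternating paths emanating from $a_1$.

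Here the parity argument does the real work, and this is the step I expect to be the crux -- precisely because $f_G(a_1)$ and $f_H(a_1)$ may be entirely different edge sets, so one cannot argue by merely adding edges to a fixed graph. Since $a_1\in\A$ and both $G_1$ and $H_1$ are bipartite with parts $\A$ and $\p$, every $M_1$-alternating path from $a_1$ reaches a post at odd distance and an agent at even distance. Thus a path from $a_1$ can witness a post only as \emph{odd} and an agent only as \emph{even}; it can never make a post even nor an agent odd. Combining this with the previous paragraph: if a post $p$ is even in $G_1$ its even witness cannot start at $a_1$ (that would be an odd-length path to a post), so it is $a_1$-free, and $a_1$-free witnesses coincide in $G_1$ and $H_1$; hence $(\evenone)_G\cap\p=(\evenone)_H\cap\p$, which is part~(i). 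Since every $a\neq a_1$ keeps its preference list and $s(a)$ is the set of most-preferred posts lying in the even set, this immediately gives $s_H(a)=s_G(a)$. Symmetrically, an odd witness of an agent $a\neq a_1$ cannot start at $a_1$ (that would be an even-length path to an agent), so it too is $a_1$-free, whence $(\oddone)_G\cap\A=(\oddone)_H\cap\A$ -- noting that $a_1$ itself is even in both and so lies in neither side -- establishing part~(ii).
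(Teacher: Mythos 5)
Your proof is correct, and while it shares the paper's scaffolding---fix a maximum matching $M_1$ of $G_1$ that leaves $a_1$ unmatched (possible by Lemma~\ref{lem:f-in-odd1}), carry it to $H_1$ via Lemma~\ref{lem:max-inG1-max-inH1}, and transfer alternating-path witnesses between the two graphs, which agree off $a_1$---your key step is genuinely different. The paper proves part~(ii) by contradiction: if $a\in(\oddone)_G$ were even in $H_1$, its even witness must begin at the unmatched agent $a_1$, and the paper then asserts that this forces some $q\in f_H(a_1)$ to lie in $(\evenone)_G$, contradicting Lemma~\ref{lem:f-in-ou}; that assertion compresses a path-splicing argument (one must combine the $a_1$-free tail of the witness with an odd witness of $a$ in $G_1$ and handle possible intersections) which the paper leaves implicit. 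You sidestep this entirely with the bipartite parity observation: an even witness of a \emph{post} and an odd witness of an \emph{agent} must start at an unmatched \emph{post}, hence never at $a_1$, hence are $a_1$-free and transfer verbatim in both directions. This buys three things: no case split on whether $f_H(a_1)=f_G(a_1)$; no direct appeal to Lemma~\ref{lem:f-in-ou} inside the theorem's proof (it enters only through Lemma~\ref{lem:max-inG1-max-inH1}, which you cite); and both inclusions of~(ii) come out symmetrically, whereas the paper's argument literally shows only that $(\oddone)_G\cap\A$ is disjoint from $(\evenone)_H$---ruling out ``even'' but not ``unreachable''---and never addresses the reverse inclusion, gaps that your witness-transfer argument closes for free. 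One small point to tighten: your claim that an unmatched vertex ``is reached only by the trivial length-$0$ path'' deserves its one-line proof---an even-length positive witness ending at $a_1$ would require a matched edge at $a_1$, which does not exist, and an odd-length one would be an augmenting path, contradicting maximality of $M_1$ in the relevant graph; the parenthetical ``(hence even)'' is not by itself a justification.
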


\begin{proof}
The proof uses Lemma~\ref{lem:f-in-odd1}, Lemma~\ref{lem:f-in-ou}, and Lemma~\ref{lem:max-inG1-max-inH1} proved above.
The case when $f_H(a_1) = f_G(a_1)$ is easy, since $H_1 = G_1$ and both (i) and (ii) are trivially true.
Consider the case when
$f_H(a_1) \neq f_G(a_1)$ and let $M_1$ be a maximum matching in $G_1$ such that $M_1$ leaves $a_1$ unmatched.
By Lemma~\ref{lem:max-inG1-max-inH1},
$M_1$ is also a maximum matching in $H_1$. To prove $(\evenone)_G \cap \p = (\evenone)_H \cap \p$, consider the
following two cases:
\begin{itemize}
\item $p \in (\oddone \cup \unone)_G \cap \p$: Assume for contradiction that $p \in (\evenone)_H \cap \p$.
This implies that there exists an even length
alternating path $T$ with respect to $M_1$ in $H_1$ from an unmatched post  in $H_1$. The path $T$
can not contain $a_1$, since $a_1$ is unmatched in $M_1$, thus $T$ is present in $G_1$
contradicting the fact that $p \in (\oddone \cup \unone)_G \cap \p$.
\item $p \in (\evenone)_G \cap \p$: Let $T$ denote the even length alternating path w.r.t. $M_1$
in $G_1$ starting at an unmatched post in $M_1$.
The path $T$ again can not contain $a_1$ and hence exists in $H_1$ thus proving that $p \in (\evenone)_H \cap \p$.
\end{itemize}
Now, since the preference lists of the agents $a \in \A \setminus \{a_1\}$ remain unchanged, it is clear that
$s_H(a) = s_G(a)$.

Finally, consider any $a \in (\oddone)_G \cap \A$. For contradiction, assume that $a \in (\evenone)_H \cap \A$. Then $a$ has
an even length alternating path $T$ w.r.t. $M_1$ in $H_1$ starting at an unmatched agent. This path
has to begin at $a_1$, otherwise, it was already present in $G_1$. However, the existence of the
path $T$ beginning at $a_1$ implies that there exists a post $q \in f_H(a_1)$ such that $q \in (\evenone)_G$.
This contradicts Lemma~\ref{lem:f-in-ou} and thus finishes the proof of the lemma.

\end{proof}

\subsection{An $\A_s$ agent cannot get one of her true rank-1 posts}
\label{sec:sonly-cant-get-rank1}
In this section we  show that if $a_1 \in \A_s$,
then by falsifying her preference list alone, she cannot get matched to one of her true
rank-1 posts in any popular matching in $H$. 
We prove it using Theorem~\ref{thm:sonly-no-f-post} which requires the following lemmas. 

\begin{lemma}
Let $a_1 \in \A_s$, and let $q \in  f_G(a_1)$. Then, $q$ belongs to a non-sink component  of $G_M$ and the
edge $(M(a_1), q)$ is not
contained in a cycle in $G_M$.
\label{lem:sonly-not-in-cycle}
\end{lemma}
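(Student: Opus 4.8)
The plan is to prove the two assertions separately, in each case exhibiting a popular matching in $G$ that matches $a_1$ to the rank-1 post $q$, which contradicts $a_1 \in \A_s$. First I would record the classifications I need. Since $a_1 \in \A_s$, every popular matching (in particular $M$) matches her to a non-rank-1 post, so $M(a_1) \neq q$; moreover $a_1 \in (\evenone)_G$, because any agent in $(\oddone \cup \unone)_G$ is matched to a rank-1 post in every popular matching (cf.\ the proof of Lemma~\ref{lem:f-in-odd1}), and by that same lemma $q \in f_G(a_1) \subseteq (\oddone)_G$. The engine of the whole argument is the single observation that, whenever the edge $(M(a_1), q)$ is present in $G_M$, its weight is $+1$, since $a_1$ prefers her rank-1 post $q$ to her matched post $M(a_1)$ (edge-weight definition, Table~\ref{tab:edge-weights}).

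For the ``not in a cycle'' assertion I would argue directly. Suppose $(M(a_1), q)$ lies on a cycle $C$ of $G_M$. By Property~\ref{prop2} every cycle has weight $0$, so $C$ is a switching cycle, and by Property~\ref{prop3} the matching $M \cdot C$ is popular in $G$. Applying $C$ re-matches the agent matched to $M(a_1)$ --- namely $a_1$ --- to the successor of $M(a_1)$ along $C$, which is $q$; thus $(M \cdot C)(a_1) = q$ is a rank-1 post, contradicting $a_1 \in \A_s$. Hence $(M(a_1), q)$ lies on no cycle.

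For the ``non-sink component'' assertion I would assume, for contradiction, that $q$ lies in a sink component, so that $q \in (\eventwo)_G$ and, by Property~\ref{prop1}, $q$ has a directed path $P$ to a sink. I would first confirm that $(M(a_1), q)$ is actually an edge of $G_M$: the rank-1 edge $(a_1,q)$ is an $\evenone\oddone$ edge, hence survives Step~4 of Algorithm~\ref{algo:pop-matching}, and since $q \in (\eventwo)_G$ it is not an $\oddtwo\untwo$ edge, hence survives Step~9; so $(a_1,q) \in E''$ and the edge $(M(a_1), q)$ exists. Using the no-cycle conclusion already established, $M(a_1)$ cannot lie on $P$, for otherwise the segment of $P$ from $q$ to $M(a_1)$, closed by the edge $(M(a_1),q)$, would be a cycle through that edge. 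Therefore $T = \langle M(a_1)\rangle \cdot P$ is a simple path ending in a sink with $w(T) = 1 + w(P)$; as $T$ ends in a sink, Property~\ref{prop2} forbids $w(T) = +1$, while $w(P) \ge -1$ gives $w(T) \ge 0$, forcing $w(T) = 0$. Thus $T$ is a switching path, $M \cdot T$ is popular by Property~\ref{prop3}, and it matches $a_1$ to $q$, the desired contradiction. Consequently $q \in (\untwo)_G$, i.e.\ $q$ belongs to a non-sink component by Property~\ref{prop1}.

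The step I expect to be most delicate is the construction of the switching path in the non-sink case: one must be sure it is a genuine simple path and that its weight is pinned to exactly $0$. Both issues resolve only because of the no-cycle conclusion (which rules out $M(a_1) \in P$) and Property~\ref{prop2} (which forces the weight); this is why I would prove the no-cycle assertion first and reuse it. A secondary point requiring care is verifying that $(M(a_1),q)$ survives both pruning steps of Algorithm~\ref{algo:pop-matching}, which is exactly where $a_1 \in (\evenone)_G$ and $q \in (\oddone)_G \cap (\eventwo)_G$ are used.
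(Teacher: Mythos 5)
Your proposal is correct and follows essentially the same route as the paper: the no-cycle part is proved exactly as in the paper (every cycle is a switching cycle by Property~\ref{prop2}, so Property~\ref{prop3} would yield a popular matching giving $a_1$ a rank-1 post), and the non-sink part likewise builds the switching path $\langle M(a_1), q, \dots\rangle$ ending at a sink and applies it to $M$ for a contradiction. The only differences are cosmetic refinements on your side --- you pin $w(T)=0$ via the constraints of Property~\ref{prop2} rather than reading $w(T_1)=-1$ off Table~\ref{tab:edge-weights}, and you make explicit two points the paper leaves implicit, namely that $(a_1,q)$ survives Steps~4 and~9 so the edge $(M(a_1),q)$ actually exists in $G_M$, and that the constructed path is simple because the no-cycle part forbids $M(a_1)$ from lying on the path from $q$ to the sink.
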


\begin{proof}
We first show that the edge $(M(a_1), q)$ is not contained in a cycle in $G_M$. Assume for the sake of
contradiction that there exists a cycle $C$ in $G_M$ which contains the edge $(M(a_1), q)$. Since every cycle
in $G_M$ has a weight $0$, the cycle $C$ is a switching cycle and hence we get another popular
matching $M' = M \cdot C$ in which $a_1$ gets matched to $q$. Since $q \in f_G(a_1)$, this contradicts the
fact that $a_1 \in \A_s$.

We now show that every $q \in f_G(a_1)$ belongs to a non-sink component of $G_M$. Assume not. Then,
let there exist some $q \in f_G(a_1)$ such that $q$ belongs to a sink component, say $\X$ of $G_M$. In this
case we show that there exists a switching path $T$ beginning at $M(a_1)$ which uses the edge $(M(a_1), q)$. Using
$T$, we construct another popular matching $M' = M \cdot T$ where $a_1$ gets matched to $q$. Thus, we
get the desired contradiction as $a_1 \in \A_s$.

It remains to prove that the switching path $T$ exists.
Observe that the since $q$ belongs to a sink component $\X$, by {\it Property~\ref{prop1}} there exists a path $T_1$ from $q$ to
some sink $q'$ in $\X$. By \textit{Property~\ref{prop0}}, it is clear that $q' \in \evenone$ and from Lemma~\ref{lem:f-in-odd1},
we know that $q \in \oddone$. Using  Table~\ref{tab:edge-weights} of edge weights, it is clear that the path $T_1$ starting
a vertex in $\oddone$ and ending in a vertex in $\evenone$
has weight $w(T_1) = -1$. Finally note that,
$w(M(a_1), q) = +1$ since $M(a_1) \in s_G(a_1)$ and $q \in f_G(a_1)$.
Thus, we obtain the switching path $T = \langle M(a_1), q, T_1 \rangle$ which ends in the sink $q'$ and has $w(T) = 0$.
This completes the proof of the lemma.
\end{proof}

\begin{lemma}
\label{lem:a1-notin-tightset}
Let $a_1 \in \A_s$ and let $q \in f_G(a_1)$. Let $\p_q$ be defined as
\begin{center}
$ \p_q  = q \cup \{q': \mbox {there is a path from $q$ to $q'$ in $G_M$} \}$
\end{center}
Let $\A_q = \cup_{q' \in \p_q} M(q')$.
Then, $a_1 \notin \A_q$.
\end{lemma}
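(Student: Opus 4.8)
The plan is to argue by contradiction, assuming $a_1 \in \A_q$ and reducing the statement to one about reachability in $G_M$. Since $q$ lies in a non-sink component of $G_M$ (Lemma~\ref{lem:sonly-not-in-cycle}), every post of $\p_q$ is matched in $M$, so $\A_q$ is well defined and, because $M$ is a matching, $a_1 \in \A_q$ holds exactly when $M(a_1) \in \p_q$. Moreover $M(a_1) \in s_G(a_1)$ is a non-rank-1 post while $q \in f_G(a_1)$ is a rank-1 post, so $M(a_1) \neq q$; hence $a_1 \in \A_q$ is equivalent to the existence of a directed path $T_1$ from $q$ to $M(a_1)$ in $G_M$. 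The goal becomes: show no such path exists.

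The structural facts I would lean on are the following. Since $q$ sits in a non-sink component, every post reachable from $q$ — that is, all of $\p_q$ — lies in the same non-sink component and therefore in $\untwo$ by Property~\ref{prop1}. On the other hand $M(a_1)$, being a post, lies in $\eventwo \cup \untwo$ (no post is in $\oddtwo$). I would then split on the classification of $M(a_1)$ in the $G'$-partition.

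In the first case, $M(a_1) \in \eventwo$, Property~\ref{prop1} places $M(a_1)$ in a sink component; since $\p_q \subseteq \untwo$ and $\eventwo \cap \untwo = \emptyset$, this gives $M(a_1) \notin \p_q$, contradicting $a_1 \in \A_q$. In the second case, $M(a_1) \in \untwo$, I would first argue that the edge $(M(a_1), q)$ is actually present in $G_M$: as $M(a_1) \in \untwo$, its partner $a_1$ also lies in $\untwo$ (Lemma~\ref{lemma:node-classification}(b)), so by Claim~\ref{claim:no-del-Step9} no edge incident on $a_1$ is deleted in Step~9; and the rank-1 edge $(a_1,q)$ is an $\evenone$--$\oddone$ edge, since $q \in f_G(a_1) \subseteq (\oddone)_G$ by Lemma~\ref{lem:f-in-odd1} and this forces $a_1 \in (\evenone)_G$, so it also survives Step~4. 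Hence $q \in choices(a_1)$ and $(M(a_1),q) \in E_M$. Appending this edge to a simple path $T_1$ from $q$ to $M(a_1)$ produces a simple directed cycle through $(M(a_1),q)$, contradicting the part of Lemma~\ref{lem:sonly-not-in-cycle} asserting that this edge lies in no cycle. Both cases being contradictory, $a_1 \notin \A_q$.

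The main obstacle, and the reason a one-line cycle argument fails, is that the edge $(M(a_1),q)$ need not exist in general: in Example~\ref{ex:switching-graph} the edge $(M(a_1)=p_6, p_1)$ is deleted in Step~9, so one cannot always close a cycle directly. The case split is designed precisely around this — the sink-component case (where the edge may be absent, as in the example, since there $M(a_1) \in \eventwo$) is settled purely by the partition argument through Property~\ref{prop1}, whereas the non-sink case is exactly the regime in which $a_1 \in \untwo$ lets Claim~\ref{claim:no-del-Step9} guarantee the edge survives, so that the cycle contradiction with Lemma~\ref{lem:sonly-not-in-cycle} becomes available.
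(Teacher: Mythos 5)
Your proposal is correct and follows essentially the same route as the paper's own proof: both reduce the claim to the non-existence of a path from $q$ to $M(a_1)$ in $G_M$, and in the critical case use $M(a_1) \in \untwo$ (hence $a_1 \in \untwo$) together with $q \in (\oddone)_G$ (hence $a_1 \in (\evenone)_G$) and Claim~\ref{claim:no-del-Step9} to show the edge $(M(a_1),q)$ survives Steps~4 and~9, so that such a path would close a cycle through $(M(a_1),q)$, contradicting Lemma~\ref{lem:sonly-not-in-cycle}. The only difference is cosmetic: the paper splits on whether $M(a_1)$ lies in the non-sink component $\Y$ containing $q$, while you split on $M(a_1) \in \eventwo$ versus $\untwo$ via Property~\ref{prop1}, which is the same dichotomy.
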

\begin{proof}
Note that since $a_1 \in \A_s$ and $q \in f_G(a_1)$, by Lemma~\ref{lem:sonly-not-in-cycle}, $q$ belongs to a non-sink component,
say $\Y$, of $G_M$.
If $M(a_1)$ does not belong to $\Y$,
then it is clear that $M(a_1) \notin \p_q$. Otherwise, let $M(a_1)$ belong to $\Y$.
In this case, we show that if there exists a path from
$q$ to $M(a_1)$ in $\Y$, then along with the edge $(M(a_1), q)$ there is a cycle in $G_M$ containing the
edge $(M(a_1), q)$ which is a contradiction by Lemma~\ref{lem:sonly-not-in-cycle}. Thus,
we need to show that the edge $(M(a_1), q)$ does not get deleted in either Step~4 or Step~9
of Algorithm~\ref{algo:pop-matching}. Since $M(a_1)$ belongs to a non-sink component, therefore $M(a_1) \in (\untwo)_G$
which implies that $a_1 \in (\untwo)_G$.
Further, since $q \in f_G(a_1)$, by Lemma~\ref{lem:f-in-odd1}, we know that $q \in (\oddone)_G$.
This implies that $a_1 \in (\evenone)_G$. Thus, from Claim~\ref{claim:no-del-Step9}, it is clear that no edge incident
on $a_1$ gets deleted in either Step~9 or Step~4 of Algorithm~\ref{algo:pop-matching}. Thus, $a_1 \notin \A_q$.
\end{proof}

\begin{lemma}
\label{lem:tight-set-H}
Let $a_1 \in \A_s$ and let $q \in f_G(a_1)$. Then, there exists sets $\A_q$ and $\p_q$ such that
$|\A_q| = |\p_q|$ and
for every $a \in \A_q$ we have $choices_H(a) \in \p_q$.
\end{lemma}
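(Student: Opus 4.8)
The plan is to show that the very sets $\A_q$ and $\p_q$ used in Lemma~\ref{lem:a1-notin-tightset} remain a \tightpair after $a_1$ falsifies, that is, in the instance $H$. By Lemma~\ref{lem:sonly-not-in-cycle} the post $q$ lies in a non-sink component of $G_M$, so Lemma~\ref{lem:tight-sets} already gives $|\A_q| = |\p_q|$ together with $choices_G(a) \subseteq \p_q$ for every $a \in \A_q$; the equality $|\A_q| = |\p_q|$ is a statement about $G$ alone and needs nothing further. The whole content is therefore to upgrade $choices_G(a) \subseteq \p_q$ to $choices_H(a) \subseteq \p_q$.

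First I would record the facts that make the candidate sets stable. By Lemma~\ref{lem:a1-notin-tightset} we have $a_1 \notin \A_q$, so every $a \in \A_q$ keeps her true list in $H$; hence $f_H(a) = f_G(a)$, and by Theorem~\ref{lem:partition-invariant} also $s_H(a) = s_G(a)$. Thus the candidate set $f_H(a) \cup s_H(a)$ equals $f_G(a) \cup s_G(a)$, and since $choices_H(a) \subseteq f_H(a) \cup s_H(a)$ it suffices to show that every candidate post $p \notin \p_q$ is pruned out of $H''$. I would also observe, from Property~\ref{prop1}, that $\p_q$ lies inside the non-sink component, so $\p_q \subseteq (\untwo)_G$ and consequently $\A_q \subseteq (\untwo)_G$; moreover, tracing the admissible edge types of $G_M$ in Table~\ref{tab:edge-weights} out of $q \in (\oddone)_G$ (Lemma~\ref{lem:f-in-odd1}) shows that a vertex in the rank-$1$ class $(\unone)_G$ can never be reached from an $\oddone$ or $\evenone$ vertex, so $\p_q \cap (\unone)_G = \emptyset$, which in turn forces $\A_q \subseteq (\oddone \cup \evenone)_G$.

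The main step is a case analysis on the rank-$1$ class of $a \in \A_q$. If $a \in (\evenone)_G$, then by Claim~\ref{claim:no-del-Step9} no edge incident on $a$ is deleted in Step~4, and because $a \in (\untwo)_G$ none is deleted in Step~9 either; hence $choices_G(a) = f_G(a) \cup s_G(a)$, and $choices_G(a) \subseteq \p_q$ already yields $f_H(a) \cup s_H(a) \subseteq \p_q$, so trivially $choices_H(a) \subseteq \p_q$. If instead $a \in (\oddone)_G$, then $s_G(a) \subseteq f_G(a)$ and every candidate post of $a$ is a rank-$1$ post. Take such a candidate $p \notin \p_q$; since $choices_G(a) \subseteq \p_q$ we get $p \notin choices_G(a)$, so the rank-$1$ edge $(a,p)$ was deleted. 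As $a \in (\untwo)_G$, Claim~\ref{claim:no-del-Step9} rules out a Step~9 deletion, so $(a,p)$ was removed in Step~4, which forces $p \in (\oddone \cup \unone)_G$. I then invoke the rank-$1$ partition invariance of Theorem~\ref{lem:partition-invariant}: $a \in (\oddone)_G$ gives $a \in (\oddone)_H$, and $p \in (\oddone \cup \unone)_G$ gives $p \in (\oddone \cup \unone)_H$; since $(a,p)$ is still a rank-$1$ edge of $H$, it is again deleted in Step~4 of Algorithm~\ref{algo:pop-matching} run on $H$, so $p \notin choices_H(a)$. This exhausts the candidate posts outside $\p_q$ and proves $choices_H(a) \subseteq \p_q$.

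The main obstacle is exactly this last transfer argument: $choices_H$ is defined through the pruned graph $H''$, which I cannot compare to $G''$ directly, so the crux is to show that every candidate edge of an $\A_q$-agent failing to lie in $\p_q$ is lost for a reason (a Step~4 rank-$1$ deletion) that is governed solely by the rank-$1$ partition, and that this partition is invariant under $a_1$'s manipulation. The two supporting observations that make this go through -- that $\A_q$ avoids $(\unone)_G$, so Step~4 deletions concern $\oddone$ agents only, and that $\A_q \subseteq (\untwo)_G$, so Step~9 never interferes -- are where I would spend the most care.
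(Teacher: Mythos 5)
Your proof is correct and follows essentially the same route as the paper's: take the \tightpair from Lemma~\ref{lem:tight-sets}, use Lemma~\ref{lem:a1-notin-tightset} to keep $a_1$ out of $\A_q$, rule out Step~9 deletions via $\A_q \subseteq (\untwo)_G$ and Claim~\ref{claim:no-del-Step9}, and transfer Step~4 deletions from $G$ to $H$ through the partition invariance of Theorem~\ref{lem:partition-invariant}. The only difference is cosmetic: you verify directly that each candidate post outside $\p_q$ is pruned in $H$, whereas the paper packages the same case analysis as the containment $choices_H(a) \subseteq choices_G(a)$.
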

\begin{proof}
Since $a_1 \in \A_s$ and $q \in f_G(a_1)$, from Lemma~\ref{lem:sonly-not-in-cycle}, we know that $q$ belongs to
a non-sink component, say $\Y$, of $G_M$. Therefore, using Lemma~\ref{lem:tight-sets}, we know that there exists a \tightpair
$\A_q$ and $\p_q$ such that $|\A_q| = |\p_q|$ and for each $a \in \A_q$,
we have $choices_G(a) \subseteq \p_q$.
Further, Lemma~\ref{lem:f-in-odd1} implies that $q \in (\oddone)_G$. Now consider any post $q' \in \p_q$.
From the Table~\ref{tab:edge-weights}, it is easy to see that $q' \in (\oddone \cup \evenone)_G$, since
posts in $(\unone)_G$ cannot be reached starting at a post in $(\oddone)_G$. Therefore,
we note that every $a \in \A_q$ is such that $a \in (\evenone \cup \oddone)_G$. This is because
in any popular matching, posts in ($\oddone \cup \evenone$) remain matched to agents in ($\oddone \cup \evenone$).
To prove the lemma it suffices to show that for every $a \in \A_q$, $choices_H(a) \subseteq choices_G(a)$.

By Lemma~\ref{lem:a1-notin-tightset}, $a_1 \notin \A_q$ and therefore,
we know that $\{f_G(a) \cup s_G(a) \} = \{f_H(a) \cup s_H(a) \}$, for
all $a \in \A_q$. If no edges had got deleted in Step~4 and Step~9 of
Algorithm~\ref{algo:pop-matching} then we would be done since $choices_G(a)$
would have been $\{f_G(a) \cup s_G(a) \}$. Observe that since every $a \in \A_q$ belongs to
a non-sink component of $G_M$, therefore $a \in (\untwo)_G$, thus, by Claim~\ref{claim:no-del-Step9}
no edge incident on $a$ gets deleted in Step~9 of Algorithm~\ref{algo:pop-matching} when run on $G$.

It remains to show that if $a' \in \A_q$ and edge $(a',q')$ gets deleted in Step~4 of Algorithm~\ref{algo:pop-matching}
when run on $G$, then $(a', q')$  also gets deleted in Step~4 of Algorithm~\ref{algo:pop-matching} when run on $H$.
If $a' \in (\evenone)_G$, by Claim~\ref{claim:no-del-Step9} no edge incident
on $a'$ gets deleted in Step~4 of Algorithm~\ref{algo:pop-matching} when run on $G$. Finally,
let $a' \in (\oddone)_G$. If the edge $(a', q')$ got deleted in Step~4, then $q' \in (\oddone \cup \unone)_G$.
Thus, by Lemma~\ref{lem:partition-invariant}, $a' \in (\oddone)_H$ and $q' \in (\oddone \cup \unone)_H$, thus the edge $(a', q')$ continues
to get deleted in Step~4 of Algorithm~\ref{algo:pop-matching} when run on $H$.
This completes the proof of the lemma.
\end{proof}
Using the above lemmas we prove the following theorem.
\begin{theorem}
\label{thm:sonly-no-f-post}
Let $a_1 \in \A_s$. Then by falsifying her preference list alone,
she cannot get matched to a post $q \in f_G(a_1)$ in any popular matching in the falsified instance.
\end{theorem}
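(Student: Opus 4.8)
The plan is to reduce the statement to a counting (pigeonhole) argument built on the \tightpair produced by Lemma~\ref{lem:tight-set-H}. Suppose for contradiction that $a_1 \in \A_s$ falsifies her preference list to obtain an instance $H$ (which admits a popular matching) and that some popular matching $M'$ in $H$ matches $a_1$ to a post $q \in f_G(a_1)$. I will derive a contradiction from the existence of such an $M'$.

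First I would fix this $q$ and apply Lemma~\ref{lem:tight-set-H} to obtain sets $\A_q$ and $\p_q$ with $|\A_q| = |\p_q|$ and $choices_H(a) \subseteq \p_q$ for every $a \in \A_q$. By the definition of $\p_q$ we have $q \in \p_q$, and applying Lemma~\ref{lem:a1-notin-tightset} to the very same pair yields $a_1 \notin \A_q$. These two facts are the crux: the post to which $a_1$ is hypothetically matched lies inside $\p_q$, while $a_1$ itself lies outside the agent side $\A_q$ of the tight pair.

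Next I would invoke the analogue for $H$ of the observation following Definition~\ref{def:choices}, namely that any popular matching of $H$ matches each agent to one of her choices; this holds for $H$ for the same reason it holds for $G$, since $choices_H$ is read off from Algorithm~\ref{algo:pop-matching} run on $H$. Hence every $a \in \A_q$ satisfies $M'(a) \in choices_H(a) \subseteq \p_q$. Because every agent is matched (via the last-resort device) and $M'$ is injective, the $|\A_q|$ agents of $\A_q$ occupy $|\A_q|$ distinct posts of $\p_q$; as $|\A_q| = |\p_q|$, these distinct posts exhaust $\p_q$. In particular $q \in \p_q$ is then matched by $M'$ to some agent of $\A_q$, contradicting $M'(a_1) = q$ together with $a_1 \notin \A_q$ and the uniqueness of $q$'s partner in a matching. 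This contradiction establishes the theorem.

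I expect no genuine obstacle at the level of the theorem itself, since the hard structural work is already discharged by Lemmas~\ref{lem:sonly-not-in-cycle}--\ref{lem:tight-set-H}: those lemmas guarantee that $q$ sits in a non-sink component of $G_M$, that the edge $(M(a_1),q)$ lies in no cycle, and that the deletions in Steps~4 and~9 of Algorithm~\ref{algo:pop-matching} behave consistently between $G$ and $H$ so that $choices_H(a) \subseteq \p_q$ survives. The only points requiring care in the write-up are bookkeeping: verifying that Lemma~\ref{lem:a1-notin-tightset} and Lemma~\ref{lem:tight-set-H} refer to the identical sets $\A_q,\p_q$ (they do, since both are built from $\p_q = q \cup \{q' : q \mbox{ has a path to } q' \mbox{ in } G_M\}$ and $\A_q = \cup_{q' \in \p_q} M(q')$), and that $M'(a) \in choices_H(a)$ indeed forbids matching an $\A_q$-agent to a post outside $\p_q$.
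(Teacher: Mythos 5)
Your proof is correct and follows essentially the same route as the paper: assume a popular matching $M'$ in $H$ with $M'(a_1) = q \in f_G(a_1)$, invoke Lemma~\ref{lem:tight-set-H} (together with Lemma~\ref{lem:a1-notin-tightset}) to get the \tightpair $\A_q, \p_q$ with $a_1 \notin \A_q$, and derive a contradiction by counting, since the agents of $\A_q$ must fill all of $\p_q$ while $a_1$ occupies $q \in \p_q$. Your write-up is, if anything, slightly more explicit than the paper's about the pigeonhole step and about where $a_1 \notin \A_q$ comes from, but the argument is the same.
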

\begin{proof}
For contradiction assume that there exists a falsified instance $H$ such that
in a popular matching $M'$ of $H$, agent $a_1$ gets matched to $q \in f_G(a_1)$. By
Lemma~\ref{lem:sonly-not-in-cycle}, the post $q$ belongs to a non-sink component of $G_M$.
Further by Lemma~\ref{lem:tight-set-H}, there exists a set of agents $\A_q$ and a set of posts $\p_q$ such that
$|\A_q| = |\p_q|$, $a_1 \notin \A_q$
and for every $a \in \A_q$, we have $choices_H(a) \subseteq \p_q$.
Thus, if $a_1$ gets matched to $q$ in $M'$, then there is at least one agent $a' \in \A_q$
which does not have a post to be matched in $choices_H(a')$. This contradicts the fact that
$M'$ is a popular matching in $H$.
\end{proof}

\subsection{The modified instance $\modG$}
\label{sec:modified-instance}
As mentioned earlier, we need to define a modified instance, call it $\modG$ to
develop our cheating strategies. Recall from Example~\ref{ex:need-modified}
that 
a rank-1 edge which gets deleted from the graph $G'$ in Algorithm~\ref{algo:pop-matching},
can be used in a popular matching in a falsified instance. Thus, we define $\modG$
from the instance $G$
which has the following properties:
(i) every popular matching in $G$ is a popular matching in $\modG$ and,
(ii) any edge $(a,p)$ that gets deleted in Step~4 of Algorithm~\ref{algo:pop-matching}
when run on $\modG$ also gets deleted in Step~4 when Algorithm~\ref{algo:pop-matching}
is run on $H$ such that $H \succ G$ w.r.t. $a_1$. However, the definition of $\modG$ 
is independent of the agent $a_1$.

The graph $\modG$ is defined as follows: Let $G_1$ be the graph on rank-1 edges of $G$ and let
$\{q_1, \ldots, q_k \}$ be the set of {\em unreachable} posts in $G_1$.
Let us add to the instance $G$, a dummy agent $b$ whose preference
list is of length 1 and has all the {\em unreachable} posts in $G_1$ tied
as her rank-1 posts. That is, the preference list of $b$ can be written as $(q_1, \ldots, q_k)$.
The set of posts as well as the preference lists of all the agents $a \in \A$ remain the same as in $G$.
Formally, $\modG = (\modA \cup \p, \modE)$ where
$\modA = \A  \cup \{b\}$ and $\modE = E \cup \{(b, q_1), \ldots, (b, q_k)\}$ and each $(b, q_i)$ is
a rank-1 edge. By the choice of preference list of $b$, we note that $f_{\modG}(b) = \{q_1, \ldots, q_k\}$
and $s_{\modG}(b) = \ell(b)$, the unique last-resort post that we add for convenience.

We note that even after the addition of agent $b$, a maximum matching $M_1$ in $G_1$ continues
to be maximum matching in $\modG_1$. However, with respect to the partition
of vertices on rank-1 edges in $\modG$, every vertex is either {\em odd } or {\em even}
in $\modG_1$. We  show that addition of $b$ leaves the set  $s(a)$  unchanged for every agent $a \in \A$.

\begin{lemma}
For every $a \in \A$, we have $s_{\modG}(a) = s_G(a)$.
\end{lemma}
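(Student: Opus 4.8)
The plan is to reduce the claim to a statement purely about the even posts of the two rank-1 graphs $G_1$ and $\modG_1$. Since each agent $a \in \A$ has the same preference list in $G$ and in $\modG$, and by definition $s_G(a)$ (resp. $s_{\modG}(a)$) is the set of most preferred posts of $a$ lying in $(\evenone)_G$ (resp. $(\evenone)_{\modG}$), it suffices to show that the sets of even posts coincide, i.e. $(\evenone)_G \cap \p = (\evenone)_{\modG} \cap \p$. Indeed, if the two even-post sets are equal, then the best rank at which $a$ has an even post, and the set of posts of $a$ attaining that rank, are identical in the two instances, which is exactly $s_{\modG}(a) = s_G(a)$.

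First I would fix a maximum matching $M_1$ of $G_1$; by the remark preceding the lemma $M_1$ is also a maximum matching of $\modG_1$, and by Lemma~\ref{lemma:node-classification}(a) the Dulmage--Mendelsohn classification is independent of the chosen maximum matching, so I may read off all three classes using alternating paths with respect to this single $M_1$. Since $\modG_1$ contains $G_1$ as a subgraph, uses the same $M_1$, and the new vertex $b$ is unmatched in $M_1$, any even- (resp. odd-) length alternating path from an unmatched vertex witnessing the status of a post $p$ in $G_1$ survives verbatim in $\modG_1$ without passing through $b$. Hence every post even in $G_1$ is even in $\modG_1$, and every post odd in $G_1$ is odd in $\modG_1$.

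The only posts whose classification could a priori change are the unreachable posts $q_1,\dots,q_k$ of $G_1$. Here I would use the new edges: $b$ is unmatched in $M_1$, so for each $i$ the single edge $(b,q_i)$ yields an odd-length alternating path from the unmatched vertex $b$ to $q_i$ in $\modG_1$, whence $q_i \in (\oddone)_{\modG}$. This is consistent with the stated fact that $\modG_1$ has no unreachable vertices. Combining the three cases, the passage from $G_1$ to $\modG_1$ sends even posts to even, odd posts to odd, and unreachable posts to odd; in particular no post outside $(\evenone)_G$ becomes even. Therefore $(\evenone)_{\modG} \cap \p = (\evenone)_G \cap \p$, and the reduction above gives $s_{\modG}(a) = s_G(a)$ for every $a \in \A$.

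The main obstacle is the third step: one must be certain that the formerly unreachable posts re-enter the partition as \emph{odd} rather than \emph{even}, since an even reclassification of some $q_i$ could enlarge $s(a)$ for an agent who ranks $q_i$ highly. The edge $(b,q_i)$ settles this directly, but it is worth emphasizing that it is precisely the choice of $b$'s preference list -- all unreachable posts of $G_1$ tied at rank~$1$ -- that forces each $q_i$ to be odd; had $b$ instead been matched to one of them, its partner could have become even. The secondary point, namely that no previously even or odd post is perturbed, is exactly the path-persistence argument of the second step.
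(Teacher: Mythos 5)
Your proof is correct and follows essentially the same route as the paper's: reduce to showing $(\evenone)_{\modG} \cap \p = (\evenone)_G \cap \p$, use the fact that a maximum matching $M_1$ of $G_1$ remains maximum in $\modG_1$, and observe that the only change in the Dulmage--Mendelsohn classification is that the unreachable posts of $G_1$ become \emph{odd} in $\modG_1$ via the edges from the unmatched dummy agent $b$. Your version merely spells out the path-persistence and disjointness-of-classes details that the paper leaves as ``easy to see.''
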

\begin{proof}
It suffices to show that $(\evenone)_{\modG} \cap \p = (\evenone)_G  \cap \p$. Let
$M_1$ be a maximum matching in $G_1$. Since $M_1$ is also a maximum matching in $\modG_1$,
partition the vertices of $\modA \cup \p$ w.r.t. $M_1$ in $\modG_1$. It is easy to see that the addition of agent $b$
only makes every post that was {\em unreachable} in $G_1$ as {\em odd} in $\modG_1$. Thus the set of {\em even} posts in
$G_1$ and $\modG_1$ is same which completes the proof.
\end{proof}

Now let $M$ be a popular matching in $G$, then let
$\tilde{M}$ denote the corresponding matching in $\modG$ such that
for every $a \in \A$ we have $\tilde{M}(a) = M(a)$ and $\tilde{M}(b) = \ell(b)$, the unique last-resort post
of $b$.
Note that $\tilde{M}$ is a maximum matching on rank-1 edges in $\modG$ and
for every $a \in \A$, we have $\tilde{M}(a) \in \{f_{\modG}(a) \cup s_{\modG} (a)\}$.
Finally $\tilde{M}(b) \in s_{\modG}(b)$ since $s_{\modG}(b) = \{\ell(b)\}$.
It is clear that $\tilde{M}$ satisfied both the properties of Theorem~\ref{thm:pop-mat} and therefore is
a popular matching in $\modG$.
We can now construct the switching graph $\modG_{\tilde{M}}$ w.r.t. $\tilde{M}$ in $\modG$.
Having made these definitions, we can now prove the following lemmas.

\begin{lemma}
\label{lem:edge-deletion}
Let $(a,p)$ be an edge which gets deleted in Step~4 of Algorithm~\ref{algo:pop-matching} run on $\modG$.
Then $(a,p)$ gets deleted in Step~4 when Algorithm~\ref{algo:pop-matching} is run on
any instance $H$ such that $H \succ G$ w.r.t. $a_1$.
\end{lemma}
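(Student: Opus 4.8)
The plan is to understand exactly what edges get deleted in Step~4 and to show a deletion forced in $\modG$ is also forced in $H$. Step~4 deletes an edge $(a,p)$ precisely when, with respect to the partition on rank-1 edges, it is an $\oddone\oddone$ or $\oddone\unone$ edge. The key structural feature of $\modG$ is that the dummy agent $b$ claims exactly the unreachable posts $\{q_1,\dots,q_k\}$ of $G_1$ as her rank-1 posts, so in $\modG_1$ those posts become \emph{odd} and hence $(\unone)_{\modG_1}\cap\p=\emptyset$. Therefore an edge deleted in Step~4 when run on $\modG$ must be an $\oddone\oddone$ edge (in the $\modG_1$ partition): that is, $(a,p)$ with $a$ an odd agent and $p$ an odd post, where $a$ treats $p$ as a rank-1 post.

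First I would fix a maximum matching $M_1$ of $G_1$ that leaves $a_1$ unmatched; by Lemma~\ref{lem:f-in-odd1} we have $a_1\in(\evenone)_G$, so such an $M_1$ exists, and by Lemma~\ref{lem:max-inG1-max-inH1} it remains a maximum matching in $H_1$. Crucially, $M_1$ is also a maximum matching in $\modG_1$ (the lemma just above notes the dummy $b$ does not increase the matching size). So the same fixed $M_1$ serves as the reference matching for the Dulmage--Mendelsohn partition in $\modG_1$ and in $H_1$, and the partitions are independent of which maximum matching we pick (Lemma~\ref{lemma:node-classification}(a)). Let $(a,p)$ be deleted in Step~4 on $\modG$; as argued it is an $\oddone\oddone$ edge of $\modG_1$, so both $a$ and $p$ are odd in $\modG_1$, meaning each is reachable by an odd-length alternating path from an unmatched vertex in $\modG_1$ with respect to $M_1$.

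Next I would transport these reachability witnesses from $\modG_1$ to $H_1$. The only difference between $\modG_1$ and $H_1$ on the agent side is the dummy $b$ versus $a_1$'s rank-1 set; but any alternating path in $\modG_1$ witnessing oddness of $p$ either avoids $b$ entirely (in which case it lives already in $G_1$, hence in $H_1$ since no other agent changed) or reaches $p$ through $b$. I would show that the path can be taken to avoid $b$: $b$'s only neighbours are the posts $\{q_1,\dots,q_k\}$ that were unreachable in $G_1$, and one checks that an odd vertex's alternating path can always be rerouted to an unmatched post of $G_1$ rather than to the artificially-unmatched $\ell(b)$, so oddness of $p$ is witnessed inside $G_1$ already. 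Then by Theorem~\ref{lem:partition-invariant}, since $(\oddone)_G\cap\A=(\oddone)_H\cap\A$ and $(\evenone)_G\cap\p=(\evenone)_H\cap\p$, an odd agent and odd post of $G_1$ stay odd in $H_1$; and because $a$ treats $p$ as a rank-1 post and $a\neq a_1$ (since $a_1\in(\evenone)_G$ is even, not odd), the edge $(a,p)$ is a rank-1 edge of $H$ between two odd vertices of $H_1$, so it is deleted in Step~4 when run on $H$.

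The main obstacle will be the rerouting argument around the dummy agent $b$: I must argue cleanly that declaring $\{q_1,\dots,q_k\}$ odd via $b$ does not manufacture \emph{new} odd vertices among the original posts that were not already odd in the relevant sense for $H_1$, i.e.\ that the extra oddness created in $\modG_1$ is confined exactly to the former unreachable posts and does not leak to other vertices through $b$. The supporting lemma (that $(\evenone)_{\modG}\cap\p=(\evenone)_G\cap\p$) handles the even/post side, and combining it with the fact that $b$'s deletion-relevant edges are only the $\oddone\oddone$ ones lets me confine the analysis to agents and posts already odd in $G_1$, which Theorem~\ref{lem:partition-invariant} then carries over verbatim to $H$. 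Once that confinement is established, the conclusion that $(a,p)$ is deleted in Step~4 on $H$ follows immediately from the definition of Step~4 together with the invariance of the rank-1 partition between $G$ and $H$.
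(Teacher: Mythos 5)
Your opening reduction (Step~4 deletions in $\modG$ are $\oddone\oddone$ edges of $\modG_1$, necessarily rank-1) and your choice of a fixed maximum matching $M_1$ of $G_1$ leaving $a_1$ unmatched both match the paper's setup. The genuine gap is the ``rerouting'' step at the core of your argument, and it fails precisely on the posts the dummy agent $b$ was introduced to handle. If $p$ was \emph{unreachable} in $G_1$, then $p$ is odd in $\modG_1$ (via the length-one alternating path from $b$, which is the unmatched vertex here --- not $\ell(b)$, which has no rank-1 edges), yet $p$ admits \emph{no} alternating path from any unmatched vertex of $G_1$ whatsoever; so no rerouting can witness its oddness inside $G_1$. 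Such posts really do occur as endpoints of edges deleted in Step~4 on $\modG$: an edge of $G_1$ joining an odd agent to an unreachable post is not forbidden by Lemma~\ref{lemma:node-classification}(c), and in $\modG_1$ it becomes an $\oddone\oddone$ edge and is deleted. For these edges your claim that ``oddness of $p$ is witnessed inside $G_1$ already'' is simply false, and the fallback --- confining attention to vertices ``already odd in $G_1$, which Theorem~\ref{lem:partition-invariant} then carries over verbatim'' --- does not work either, because Theorem~\ref{lem:partition-invariant} says nothing about odd \emph{posts}: part (i) is invariance of even posts, part (ii) is invariance of odd agents. Consequently your final assertion that $(a,p)$ is an edge ``between two odd vertices of $H_1$'' can fail: the post endpoint may well be unreachable in $H_1$.

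What saves the lemma is that Step~4 deletes edges between $\oddone$ and $\oddone\cup\unone$, so the post endpoint only needs to be \emph{non-even} in $H_1$. The correct assembly of your ingredients is: on the agent side, the odd alternating path to $a$ in $\modG_1$ avoids both $a_1$ and $b$ (both are unmatched in $M_1$, hence can never be interior vertices, and neither can be $a$ itself), so it lives in $G_1$, giving $a\in(\oddone)_G$, and Theorem~\ref{lem:partition-invariant}(ii) then gives $a\in(\oddone)_H$; on the post side, $p$ odd in $\modG_1$ means $p\notin(\evenone)_{\modG}$, hence $p\notin(\evenone)_G$ by the lemma that adding $b$ preserves the even posts, hence $p\notin(\evenone)_H$ by Theorem~\ref{lem:partition-invariant}(i), i.e.\ $p\in(\oddone\cup\unone)_H$. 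Since $a\neq a_1$, the rank-1 edge $(a,p)$ is still present when Algorithm~\ref{algo:pop-matching} is run on $H$, and it is deleted in Step~4. (The paper reaches the same conclusion by direct alternating-path surgery rather than by citing Theorem~\ref{lem:partition-invariant}: it shows that a vertex odd in $\modG_1$ cannot become even in $H_1$, by joining a $\modG_1$-path and an $H_1$-path at that vertex into an augmenting path, contradicting the maximality of $M_1$.)
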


\begin{proof}
As mentioned earlier all vertices in $\modG_1$ are either {\em odd} or {\em even}, hence if an
edge $(a,p)$ got deleted in Step~4 of Algorithm~\ref{algo:pop-matching}, then it implies that
$\{a, p\} \in (\oddone)_{\modG}$.
To prove the lemma statement,
it suffices to show that an agent or a post that was {\it odd} in $\modG_1$ does not become
{\it even} in $H_1$. Recall that a maximum matching $M_1$ in $G_1$ which leaves $a_1$  unmatched is also
a maximum matching in $\modG_1$ as well as in $H_1$.
Let $a \in \A$ be such that $a \in (\oddone)_{\modG}$. This implies that there exists an odd length alternating path $T_1$
in $\modG_1$ beginning at an unmatched post $p$. The path $T_1$ cannot contain $a_1$ since $a_1$ is unmatched in $M_1$.
For contradiction,
assume that $a \in (\evenone)_H$. Then there exists an even length alternating path $T_2$
in $H_1$ starting at an unmatched agent. This path has to begin at $a_1$, otherwise it was
already present in $\modG_1$ contradicting the fact that $a \in (\oddone)_{\modG}$.
However, if the path begins at $a_1$, since $a_1$ in unmatched in $M_1$, we
get an augmenting path by joining $T_1$ and $T_2$ which contradicts the maximality of $M_1$ in $H_1$.

Now consider $p \in \p$ such that $p \in (\oddone)_{\modG}$ and let $T_1$ denote the odd length alternating
path starting at an unmatched agent. The path $T_1$ has to begin at $a_1$, if not, the same path is present in
$H_1$ and hence we are done. Now  assume for the sake of contradiction that
$p \in (\evenone)_H$. Let $T_2$ be the even length alternating path with respect to $M_1$ in $H_1$ from an unmatched post
$p'$. The path $T_2$ cannot contain $a_1$ since $a_1$ is unmatched and hence joining the two paths $T_1$ and $T_2$
we get an augmenting path in $\modG_1$ contradicting the maximality of $M_1$ in $\modG_1$.
\end{proof}
\begin{lemma}
\label{lem:choices-modG-H}
Let $a \in \A \setminus \{a_1\}$ such that $\tilde{M}(a)$ belongs to a non-sink component of $\modG_{\tilde{M}}$. Let $H$ be an instance such that
$H \succ G$ w.r.t. $a_1$. Then $choices_{H}(a) \subseteq choices_{\modG}(a)$.
\end{lemma}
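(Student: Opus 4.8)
The plan is to reduce the inclusion to three ingredients: that the pre-pruning edges incident on $a$ are the same in $H$ and in $\modG$, that Step~9 removes nothing incident on $a$ when the algorithm is run on $\modG$, and that any Step~4 removal on $\modG$ is mirrored on $H$. Once these are in hand, taking an arbitrary $p \in choices_H(a)$ and tracing it through the two pruning steps of $\modG$ yields $p \in choices_{\modG}(a)$ immediately.

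First I would place $a$ in the rank-1 partition of $\modG$. Since $\tilde M(a)$ lies in a non-sink component of $\modG_{\tilde M}$, Property~\ref{prop1} applied to $\modG$ gives $\tilde M(a) \in (\untwo)_{\modG}$; because in any maximum matching an unreachable post is matched to an unreachable vertex (Lemma~\ref{lemma:node-classification}(b)), its partner satisfies $a \in (\untwo)_{\modG}$. Claim~\ref{claim:no-del-Step9}, read for the run on $\modG$, then tells us that no edge incident on $a$ is deleted in Step~9 of the algorithm on $\modG$. Next I would match the pre-pruning edge sets: as $a \neq a_1$, the preference list of $a$ is identical in $G$, $H$, and $\modG$, so $f_H(a)=f_G(a)=f_{\modG}(a)$; for the $s$-sets, Theorem~\ref{lem:partition-invariant} gives $s_H(a)=s_G(a)$ while the (unlabelled) lemma showing $s_{\modG}(a)=s_G(a)$ handles the $\modG$ side. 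Hence $f_H(a)\cup s_H(a) = f_{\modG}(a)\cup s_{\modG}(a)$, so an edge incident on $a$ lies in the unpruned graph of $H$ exactly when it lies in the unpruned graph of $\modG$.

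Now take any $p \in choices_H(a)$. By Definition~\ref{def:choices}, $(a,p)$ survives both Step~4 and Step~9 of the algorithm on $H$; by the edge-set equality just established, $(a,p)$ is present in the unpruned graph of $\modG$ as well. By the first paragraph it survives Step~9 on $\modG$, so the only thing left to check is that it is not removed in Step~4 on $\modG$. Here I would argue by contradiction: if $(a,p)$ were deleted in Step~4 on $\modG$, then Lemma~\ref{lem:edge-deletion} forces $(a,p)$ to be deleted in Step~4 on $H$ as well, contradicting $p \in choices_H(a)$. Therefore $(a,p)$ survives both pruning steps on $\modG$, giving $p \in choices_{\modG}(a)$ and hence the desired inclusion.

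The step I expect to be the real crux is the Step~4 comparison. Lemma~\ref{lem:edge-deletion} transfers a Step~4 deletion from $\modG$ to $H$ only for an edge that genuinely exists in the unpruned graph of $H$, so I must invoke the edge-set equality (which in turn rests on $a \neq a_1$ together with Theorem~\ref{lem:partition-invariant}) before applying it; and I must be sure the non-sink hypothesis has correctly placed $a$ in $(\untwo)_{\modG}$ so that Step~9 is harmless. Everything else in the argument is bookkeeping about which pruning step could touch the edge $(a,p)$.
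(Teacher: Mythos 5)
Your proof is correct and follows essentially the same route as the paper's: it combines the invariance $f_H(a)\cup s_H(a) = f_{\modG}(a)\cup s_{\modG}(a)$ (via Theorem~\ref{lem:partition-invariant} and the unlabelled lemma for $\modG$), Claim~\ref{claim:no-del-Step9} to rule out Step~9 deletions on $\modG$ (using $a, \tilde{M}(a) \in (\untwo)_{\modG}$), and Lemma~\ref{lem:edge-deletion} to transfer Step~4 deletions from $\modG$ to $H$. Your version merely spells out some bookkeeping the paper leaves implicit (e.g., deducing $a \in (\untwo)_{\modG}$ from Lemma~\ref{lemma:node-classification}(b)), which is fine.
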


\begin{proof}
Recall that for any $a \in \A \setminus \{a_1\}$, $f_{\modG}(a) = f_G(a) = f_H(a)$ and
$s_{\modG}(a) = s_G(a) = s_H(a)$. We also know that, in any instance, for an agent $a$, $choices(a) \subseteq \{f(a) \cup s(a) \}$.
Thus, if for an agent $a \in \A \setminus \{a_1\}$, it were the case that $choices_{\modG}(a) = \{f_{\modG}(a) \cup s_{\modG}(a)\}$,
then the lemma statement holds trivially.
However due to deletion of edges in Step~4 and Step~9 of Algorithm~\ref{algo:pop-matching} when run on $\modG$, it
may be the case that $choices_{\modG}(a) \subset \{f_{\modG}(a) \cup s_{\modG}(a)\}$.
We note that since $M(a)$ belongs to a non-sink component it implies that both $\{a, M(a)\} \in (\untwo)_{\modG}$.
Therefore, by Claim~\ref{claim:no-del-Step9}, no edge incident on $a$ gets
deleted in Step~9. Further by Lemma~\ref{lem:edge-deletion}, it is clear that if an edge $(a,p)$
gets deleted in Step~4 of Algorithm~\ref{algo:pop-matching}
run in $\modG$, then the same edge gets deleted in Step~4 when run on $H$. This gives us the desired result that
$choices_{H}(a) \subseteq choices_{\modG}(a)$ and completes the proof of the lemma.
\end{proof}

\section{Single manipulative agent}
\label{sec:single-agent}
In this section we develop an efficient characterization of the conditions under which $a_1$ can falsify her
preference list.
We formulate the strategy of $a_1$ depending on whether $a_1 \in \A_s$ or $a_1 \in \A_{f/s}$.
Throughout, we assume that the true preference list of $a_1$ is denoted by $\mathcal{L} = P_1, \ldots, P_t, \ldots, P_l$
where $P_i$ denotes the set of $i$-th ranked posts of $a_1$. Further, $f_G(a_1) = P_1$ and $s_G(a_1) \subseteq P_t$.
We will use the modified instance $\modG$ to formulate our strategies.  


\subsection{ $\A_s$ agent}
Let $a_1 \in \A_s$  and let $M$ be any popular matching in $G$
and $\tilde{M}$ denote the corresponding popular matching in $\modG$ which matches $b$ to $\ell(b)$. It follows
from the definition of $\A_s$ that, $M(a_1) = \tilde{M}(a_1) \in s_G(a_1)$ and therefore $M(a_1) \in P_t$.
We first characterize whether $a_1$ can get {\em better always} using 
the graph $\modG$ and the switching graph $\modG_{\tilde{M}}$.

Our cheating strategy for $a_1$ (as shown in Figure~\ref{algo:better-always-sonly}) 
is simple: it checks if any of $a_1$'s $i$-th ranked posts $p \in P_i$
where $i = 2 \ldots t-1$, either belongs to a sink component in $\modG_{\tilde{M}}$ or
has a path to $\tilde{M}(a_1)$ in $\modG_{\tilde{M}}$. If there exists such a post $p$, our strategy
ensures that every popular matching in the falsified instance $H$ matches $a_1$ to $p$.
We denote by $\mathcal{L}_f$ the falsified preference list of $a_1$.
We now state the main theorem in this section.

\begin{figure}[h]
\fbox{
\begin{minipage}[t]{0.95\textwidth}
\begin{enumerate}
\item For $i = 2 \ldots t-1$ check if there exists a post $p \in P_i$ in $a_1$'s preference list such that
\begin{enumerate}
\item $p$ belongs to a sink component in $\modG_{\tilde{M}}$ or,
\item $p$ has a path  to $\tilde{M}(a_1)$ in $\modG_{\tilde{M}}$.
\end{enumerate}
\item If no post satisfies (a) or (b) above, then true preference list $\mathcal{L}$ is optimal for $a_1$.
\item Else let $p$ denote the most preferred post of $a_1$ satisfying one of the above two properties. Set
post $p$ as $a_1$'s rank-1 post in the falsified preference list.
\item To obtain the rank-2 post for $a_1$, let $a_2$ be some agent such that $\tilde{M}(a_2) \in f_G(a_1)$. Let $p' \in s_G(a_2)$.
Set $p'$ as the rank-2 post of $a_1$ in the falsified instance.
\item  $\mathcal{L}_f = p, p'$.
\end{enumerate}
\end{minipage}
}
\caption{Cheating strategy for $a_1 \in \A_s$. }
\label{algo:better-always-sonly}
\end{figure}
\begin{theorem}
\label{thm:main1}
Let $a \in \A_s$. Then there exists a cheating strategy for $a_1$ to get {\em better always} 
if and only if there exists a post $p$ ranked $2 \ldots t-1$ on $a_1$'s preference list
satisfying either

(a) $p$ belongs to a sink component in $\modG_{\tilde{M}}$ or,

(b) $p$ has a path to $\tilde{M}(a_1)$ in $\modG_{\tilde{M}}$.
\end{theorem}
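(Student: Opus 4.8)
The goal is to prove a biconditional characterizing when an $\A_s$ agent $a_1$ can cheat to get \emph{better always}, where the criterion is the existence of a post $p$ ranked strictly between $2$ and $t-1$ that either lies in a sink component of $\modG_{\tilde{M}}$ or has a path to $\tilde{M}(a_1)$ in $\modG_{\tilde{M}}$.

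The plan is to prove the two directions separately. For the \emph{if} direction, I would show that the explicit strategy $\mathcal{L}_f = p, p'$ constructed in Figure~\ref{algo:better-always-sonly} actually works, i.e. that the falsified instance $H$ it produces satisfies $H \succ G$ w.r.t. $a_1$, and in particular that every popular matching in $H$ matches $a_1$ to $p$, which is strictly better than $M(a_1) \in P_t$ since $p$ is ranked in $2, \ldots, t-1$. The key technical content is to analyze the switching graph $H_{\tilde{M}'}$ of the falsified instance. First I would argue, using Theorem~\ref{lem:partition-invariant} and the modified-instance machinery (Lemma~\ref{lem:edge-deletion}, Lemma~\ref{lem:choices-modG-H}), that the $s$-sets and the relevant partition structure of all other agents are preserved, so that $\modG_{\tilde{M}}$ is a faithful proxy for reasoning about $H$. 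The choice of the rank-$2$ post $p'$ from $s_G(a_2)$, where $\tilde{M}(a_2) \in f_G(a_1)$, is designed so that $a_1$ can never actually be matched to $p'$ in any popular matching of $H$ (it plays the role of the ``blocking'' second post illustrated in Examples~\ref{app:ex:sonly} and~\ref{ex:need-modified}); I would verify this by exhibiting a \tightpair in $H$ containing $p'$, mirroring Lemma~\ref{lem:tight-set-H}. Combined with the fact that case (a) gives $p$ a switching path to a sink and case (b) gives $p$ a path to $\tilde{M}(a_1)$ (which can be closed into a switching structure since $w(M(a_1),p) = +1$ for $p \in P_i$, $i<t$, as in Lemma~\ref{lem:sonly-not-in-cycle}), this forces $a_1$ to $p$ in every popular matching of $H$.

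For the \emph{only if} direction, I would argue by contraposition: assuming no post ranked $2,\ldots,t-1$ satisfies (a) or (b), I show that no falsification can make $a_1$ better always. By Theorem~\ref{thm:sonly-no-f-post}, an $\A_s$ agent can never secure a post in $f_G(a_1)$, so the best $a_1$ could hope for is a post strictly preferred to $P_t$ but not in $P_1$, i.e. some $p \in P_i$ with $2 \le i \le t-1$. For $a_1$ to be matched to such a $p$ in some popular matching of a falsified $H$, there must (via Property~\ref{prop3} applied in $H$, transported back to $\modG_{\tilde{M}}$ using the invariance lemmas) exist either a switching path or a switching cycle through the edge $(\tilde{M}(a_1), p)$. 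A switching cycle through this edge corresponds exactly to a path from $p$ back to $\tilde{M}(a_1)$, which is condition (b); a switching path corresponds to $p$ reaching a sink, i.e. lying in a sink component, which is condition (a). Thus the negation of (a) and (b) rules out any popular matching of $H$ placing $a_1$ on a post better than $P_t$, so $H \not\succ G$.

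The main obstacle I anticipate is the careful transfer of switching-graph structure between the modified truthful instance $\modG_{\tilde{M}}$ and the falsified instance $H$. The biconditional is phrased entirely in terms of $\modG_{\tilde{M}}$, yet the actual matchings live in $H$, where $a_1$'s preference list (and hence the edges incident to $a_1$ in the switching graph) has changed. The delicate point is to show that paths and components \emph{not through $a_1$} are identical in the two switching graphs — which follows from Theorem~\ref{lem:partition-invariant} guaranteeing $s_H(a) = s_G(a)$ and the partition of $\p$ into even/odd/unreachable being preserved — while correctly accounting for the single modified vertex $a_1$ and the deliberately unreachable second post $p'$. Getting the edge-weight bookkeeping right (so that the constructed path in $H$ has total weight $0$ and ends at a genuine sink, making it a bona fide switching path) is where I expect the argument to require the most care, and is precisely why the modified instance $\modG$ was introduced in Section~\ref{sec:modified-instance}: to prevent deleted rank-$1$ edges from silently reappearing in $H$ and invalidating the correspondence.
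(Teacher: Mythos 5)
Your forward (``if'') direction is essentially the paper's own: the same two-post falsified list $\mathcal{L}_f = p, p'$ from Figure~\ref{algo:better-always-sonly}, the same construction of a popular matching of $H$ matching $a_1$ to $p$ by applying a weight $-1$ path of $\modG_{\tilde{M}}$ after unmatching $\tilde{M}(a_1)$, and the same tight-pair argument (Lemma~\ref{lem:sonly-not-in-cycle} plus Lemma~\ref{lem:tight-set-H}) showing $a_1$ can never receive $p'$, so that every popular matching of $H$ must give her $p$. One small slip there: the edge $(M(a_1),p)$ you assign weight $+1$ does not exist in $G_M$ or $\modG_{\tilde{M}}$, since $p \notin f_G(a_1) \cup s_G(a_1)$; the $\pm 1$ bookkeeping has to be done inside $H$, where $p$ has become $a_1$'s declared rank-1 post. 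Your appeal to Theorem~\ref{thm:sonly-no-f-post} to exclude posts of $P_1$ in the converse is also exactly what the paper relies on.

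The genuine gap is in your ``only if'' direction. You propose to take a popular matching of a falsified instance $H$ that gives $a_1$ a post $p$ with $2 \le i \le t-1$, apply Property~\ref{prop3} \emph{inside $H$}, and then ``transport'' the resulting switching path or cycle through the edge $(\tilde{M}(a_1),p)$ back to $\modG_{\tilde{M}}$, asserting that by Theorem~\ref{lem:partition-invariant} the two switching graphs agree away from $a_1$. This step is unjustified, and it is precisely the crux. First, Property~\ref{prop3} in $H$ is stated relative to a reference popular matching \emph{of $H$}, and $M$ (equivalently $\tilde{M}$) need not be popular in $H$ for an arbitrary falsification: $M(a_1) \in P_t$ need not belong to $f_H(a_1) \cup s_H(a_1)$, so ``a switching path or cycle of $H$ through $(\tilde{M}(a_1),p)$'' is not even well defined. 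Second, Theorem~\ref{lem:partition-invariant} preserves only the rank-1 partition and the sets $s(a)$; the switching graph additionally depends on the Step-9 pruning of Algorithm~\ref{algo:pop-matching}, which in $H$ is computed with respect to a popular matching of $H$ and can delete a different set of edges. The paper never proves the two switching graphs coincide off $a_1$; it proves only the one-sided containment $choices_H(a) \subseteq choices_{\modG}(a)$, and only for agents matched into non-sink components (Lemma~\ref{lem:choices-modG-H}). That is exactly why the paper's sufficiency argument avoids $H$'s switching graph altogether: from the failure of (a) and (b) it forms the tight pair $(\A_q,\p_q)$ of Lemma~\ref{lem:tight-sets} in $\modG$, observes $a_1 \notin \A_q$ (otherwise $q$ would have a path to $\tilde{M}(a_1)$) and $b \notin \A_q$, and then uses the containment of choices to conclude that a popular matching of $H$ assigning $q$ to $a_1$ would leave some agent of $\A_q$ with no post in $choices_H$ --- a Hall-type starvation contradiction needing no correspondence between switching structures. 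To repair your argument you would either have to prove the missing (and, in this generality, doubtful) transfer claim, or replace it by the tight-pair counting argument as the paper does.
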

\begin{proof}
We break down the proof into necessity and sufficiency.

{\it Necessity:} Assume that a post $p$ satisfying one of the two properties of Theorem~\ref{thm:main1} exists.
Let $\mathcal{L}_f = p, p'$ be the falsified preference list for $a_1$
as returned by Step~5 of Figure~\ref{algo:better-always-sonly}. Let
$H$ denote the instance where $a_1$ submits $\mathcal{L}_f$ and the
rest of the agents are truthful.
We begin by noting that $s_H(a_1) = p'$. This is because $p' \in s_G(a_2)$ and hence $p' \in (\evenone)_G$, therefore
$p' \in (\evenone)_H$.
We show the following hold:
\begin{itemize}
\item {\bf There exists a popular matching $M'$ in $H$ such that $M'(a_1) = p$ :}
Let $M_1 = \tilde{M} \setminus \{(a_1, \tilde{M}(a_1)), (b, \tilde{M}(b))\}$, thus the post $\tilde{M}(a_1)$ is unmatched
in $M_1$.
If
$p$ had a path to $\tilde{M}(a_1)$, then let $T$ denote a path from $p$ to $\tilde{M}(a_1)$ in the graph $\modG_{\tilde{M}}$.
Else if $p$ belongs to a sink component $\X$ of $\modG_{\tilde{M}}$, then
let $T$ denote a path from $p$ to a sink in $\X$.
We note that the path $T$ does not contain $\tilde{M}(b)$ since $\tilde{M}(b) = \ell(b)$ which
does not have any incoming edge.
In either case, apply the path $T$ to $M_1$ to get another matching $M_2$, that is, $M_2 = M_1 \cdot T$.
Finally $M' = M_2 \cup (a_1, p)$.
Since $p$ is a post ranked $2, \ldots, t-1$ in $a_1$ true preference list,
and the rank of posts in $s_G(a_1)$ is exactly $t$,
it implies that $p \in (\oddone \cup \unone)_G$.
Therefore we can conclude that that $p \in (\oddone)_{\modG}$.
Further, in either case when the path $T$ ends in a sink vertex or it ends in $\tilde{M}(a_1)$, the end point of the path is
a post which belongs to $(\evenone)_{\modG}$. Thus the path $T$ has weight $w(T) = -1$ (refer Table~\ref{tab:edge-weights}).

We prove that $M'$ is popular $H$. Note that for every
$a \in \A \setminus \{a_1\}$, we have $M'(a) \in \{f_{\modG}(a) \cup s_{\modG}(a)\}$ which implies that
$M'(a) \in \{f_H(a) \cup s_H(a)\}$.  Also, $M'(a_1) = p$ and note that $f_H(a_1) = \{p\}$.
Finally, it remains to show that $M'$ is a maximum matching on rank-1 edges of $H$. To
see this, note that, $w(T) = -1$, therefore the number of rank-1 edges in $M_2$ is exactly one less than
the number of rank-1 edges in $\tilde{M}$. However, since $a_1$ gets matched to her rank-1 post
in $M'$, the number of rank-1 edges in $\tilde{M}$ and $M'$ are exactly the same, thus proving that
$M'$ is in fact a maximum matching in $H_1$. Thus, $M'$ is popular in $H$.

\item {\bf Every popular  matching in $H$ matches $a_1$ to $p$ :} For the purpose
of proving this part we will work with the graph $G$ and the switching graph $G_M$ corresponding
to a popular matching $M$ in $G$.
For contradiction assume that there
exists a popular matching $M''$ in $H$ such that $M''(a_1) = p'$.
Note that
the rank-2 post $p'$ of $a_1$ is chosen as follows. Let $a_2$ be an agent such that
$\tilde{M}(a_2) \in f_G(a_1)$. Then $p' \in s_G(a_2)$.
Recall that $\tilde{M}$ is a popular matching in $\modG$ which is obtained from a popular matching $M$ in $G$.
Let $M(a_2) = \tilde{M}(a_2) = q$. Note that since $q \in f_G(a_1)$,
by Lemma~\ref{lem:sonly-not-in-cycle}, $q$ belongs to a non-sink component in $G_M$. Further, by Lemma~\ref{lem:tight-set-H},
there exists a set $\A_{q}$ and $\p_{q}$ such that $|\A_{q}| = |\p_{q}|$, $a_1 \notin \A_{q}$
and for every $a \in \A_{q}$, we have $choices_H(a) \subseteq \A_{q}$.
We show that the post $p'$ also belongs to $\p_q$ and therefore if $M''(a_1)$ matches $a_1$
to $p'$ then there exists at least one agent $a \in \A_q$ who does  not have a post to be matched in $choices_H(a)$.
Thus, $M''$ cannot be a popular matching in $H$.
It remains to prove that $p' \in \p_q$.
Note that $q = M(a_2) \in f_G(a_2)$ and $p' \in s_G(a_2)$. If $G_M$ contains the edge $(q,p')$,
then we are done since by definition of $\p_q$, it is clear that $p' \in \p_q$. The edge $(q,p')$ can be absent in $G_M$
only if the edge $(a_2, p')$ gets deleted in Step~9 of Algorithm~\ref{algo:pop-matching}. Note that
$(a_2, p')$ cannot get deleted in Step~4 since only rank-1 edges get deleted in Step~4.
Since $q$ belongs to
a non-sink component in $G_M$, it implies that $q \in (\untwo)_G$. Therefore $M(q) =  a_2$ also belongs to $(\untwo)_G$.
Thus by Claim~\ref{claim:no-del-Step9},
the edge $(a_2, p')$ does not get deleted in Step~9 of Algorithm~\ref{algo:pop-matching} and hence $p' \in \p_{q}$.
This completes the proof.
\end{itemize}

{\it Sufficiency:}
To prove that our strategy in Figure~\ref{algo:better-always-sonly}
is optimal, we note that when $a_1$ was truthful, she got matched to her $t$-th ranked
post in every popular matching in $G$. Using our strategy either she gets matched to her $k$-th ranked post
in every popular matching in $H$ where $k < t$ or we declare that true preference list is optimal in which case
she remains matched to her $t$-th ranked post where $k = t$.
Then there exists no instance $H$ such that a popular matching in $H$ matches $a_1$ to a post $q'$ which
$a_1$ strictly prefers to her $k$-th ranked post.

For contradiction assume that there exists such an instance $H$ obtained by falsifying the preference list
of $a_1$ alone. Let $M'$ be  some popular matching of $H$ such that $M'(a_1) = q$ and $a_1$
strictly prefers $q$ to her $k$-th ranked post.
Since our strategy in Figure~\ref{algo:better-always-sonly} did not find $q$, the post
$q$ belongs to a non-sink component $\Y$ in  $\modG_{\tilde{M}}$ and further there exists
no path from $q$ to $\tilde{M}(a_1)$ in $\modG_{\tilde{M}}$.
Now consider the two sets $\p_{q}$ and $\A_{q}$  as defined by Lemma~\ref{lem:tight-sets}.
We know  that $|\A_{q}| = |\p_{q}|$.
Further, for every $a \in \A_{q}$, we have $\{choices_{\modG}(a)\} \subseteq \p_{q}$.
Note that, $a_1 \notin \A_{q'}$, otherwise $\tilde{M}(a_1) \in \p_{q'}$ which implies that there exists
a  path from $q$ to $\tilde{M}(a_1)$, a contradiction.
Further, note that $\ell(b) \notin \p_q$ and therefore, $b \notin \A_q$.
Thus, for every $a \in \A_{q}$ we have
$\{choices_H(a)\} \subseteq \p_{q'}$.  Therefore, if $M'$ matches $a_1$ to $q'$, there exists at least one agent $a \in \A_{q'}$ who
does not have a post to be matched in $choices_H(a)$ and hence
$M'$ is not a popular matching in $H$.

This finishes the proof of the theorem.

\end{proof}

\subsection { $\A_{f/s}$ agent}
Let $a_1 \in \A_{f/s}$ when she submits her true preference list. In order to get {\em better always}, the goal of $a_1$ is
to falsify her preference list such that every popular matching in the falsified instance $H$ matches
$a_1$ to posts in $P_1$.

Let $M$ be a popular matching in $G$ such that $M(a_1) = p$ and $p \in f_G(a_1)$. 
Let $\tilde{M}$ denote the corresponding popular matching in $\modG$ which matches $b$ to $\ell(b)$.
Consider the switching graph $\modG_{\tilde{M}}$.
Our strategy for $a_1$ to get better always (as described in Figure~\ref{algo:better-always-fands}) is to
search for an {\em even} post $p'$ in $G_1$ which belongs to a non-sink component of $\modG_{\tilde{M}}$ and further the post $p'$
does not have a path $T$ to $\tilde{M}(a_1)$ in $\modG_{\tilde{M}}$ where $w(T) = +1$.
We prove the correctness and optimality of our strategy using the following theorem.
\begin{figure}[h]
\fbox{
\begin{minipage}[t]{0.95\textwidth}
\begin{enumerate}
\item For every $p' \in (\evenone)_G \cap \p$ check if
\begin{enumerate}
\item $p'$ belongs to a non-sink component, say $\Y_1$,  of $\modG_{\tilde{M}}$ and,
\item $p'$ does not have a path $T$ to $\tilde{M}(a_1)$ in $\modG_{\tilde{M}}$ such that $w(T) = +1$.
\end{enumerate}
\item If no post satisfies both properties, declare true preference list $\mathcal{L}$ is optimal for $a_1$.
\item Else set $M(a_1) = p$ and $p'$ as the rank-1 and rank-2 posts respectively in the falsified preference list of $a_1$.
\item  $\mathcal{L}_f = p, p'$.
\end{enumerate}
\end{minipage}
}
\caption{Cheating strategy for $a_1 \in \A_{f/s}$ to get better always.}
\label{algo:better-always-fands}
\end{figure}

\begin{theorem}
\label{thm:main2}
Let $a_1 \in \A_{f/s}$. There exists a cheating strategy for $a_1$ to get {\em better always}
if and only if there exists a post $p'$  in $(\evenone)_{G}$ satisfying the following two properties

(a) $p'$ belongs to a non-sink component, say $\Y_1$, of $\modG_{\tilde{M}}$, and

(b) there exists no path $T$ from $p'$ to $\tilde{M}(a_1)$ in $\modG_{\tilde{M}}$ such that $w(T) = +1$.
\end{theorem}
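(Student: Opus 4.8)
The plan is to prove the biconditional of Theorem~\ref{thm:main2} by the two implications separately, closely paralleling the proof of Theorem~\ref{thm:main1} but adapting to the goal of an $\A_{f/s}$ agent: forcing \emph{every} popular matching of the falsified instance $H$ to match $a_1$ to a true rank-1 post. Throughout I fix a popular matching $M$ of $G$ with $M(a_1)=p\in f_G(a_1)$ (which exists since $a_1\in\A_{f/s}$), its lift $\tilde{M}$ to $\modG$ with $\tilde{M}(b)=\ell(b)$, and the falsified list $\mathcal{L}_f=p,p'$ from Figure~\ref{algo:better-always-fands}. A preliminary observation drives both directions and clarifies condition~(b): assign each post a potential $\phi(\cdot)$ equal to $0$ on $(\oddone)_{\modG}$ and $-1$ on $(\evenone)_{\modG}$ (recall that in $\modG_1$ every vertex is odd or even, so there are no unreachable posts). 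Then Table~\ref{tab:edge-weights} shows $w(p_i,p_j)=\phi(p_j)-\phi(p_i)$ for every edge of $\modG_{\tilde{M}}$, so a path's weight depends only on its endpoints. Since $p=\tilde{M}(a_1)\in(\oddone)_{\modG}$ by Lemma~\ref{lem:f-in-odd1} and every candidate $p'\in(\evenone)_{\modG}$, any path from $p'$ to $\tilde{M}(a_1)$ has weight exactly $\phi(p)-\phi(p')=+1$. Hence condition~(b) is equivalent to ``no directed path from $p'$ to $\tilde{M}(a_1)$ at all'', i.e. to $a_1\notin\A_{p'}$ for the tight set used below.

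For the direction ``condition implies a winning strategy'', I would first certify that $H$ admits a popular matching by verifying that $M$ restricted to $\A$ is popular in $H$: since $H_1\subseteq G_1$ and a maximum matching of $G_1$ leaving $a_1$ unmatched stays maximum in $H_1$, the matching $M$ remains rank-1 maximal in $H$ and every agent is matched within $f_H\cup s_H$, so Theorem~\ref{thm:pop-mat} applies and $M(a_1)=p$ is a true rank-1 post. Next I show every popular matching $M''$ of $H$ matches $a_1$ to $p$: the only alternative is $M''(a_1)=p'$. Because $p'$ lies in a non-sink component of $\modG_{\tilde{M}}$, Lemma~\ref{lem:tight-sets} (applied in $\modG$) yields sets $\A_{p'},\p_{p'}$ with $|\A_{p'}|=|\p_{p'}|$, $p'\in\p_{p'}$, and $choices_{\modG}(a)\subseteq\p_{p'}$ for all $a\in\A_{p'}$. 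The potential observation and condition~(b) give $a_1\notin\A_{p'}$, and $b\notin\A_{p'}$ because $\ell(b)$ has no incoming edge; Lemma~\ref{lem:choices-modG-H} then transfers this to $choices_H(a)\subseteq\p_{p'}$ for every $a\in\A_{p'}$. If $a_1$ occupied $p'\in\p_{p'}$, the $|\A_{p'}|=|\p_{p'}|$ agents of $\A_{p'}$ would have to be matched inside $\p_{p'}\setminus\{p'\}$, a pigeonhole contradiction to popularity. Hence $M''(a_1)=p$ and $H\succ G$.

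For the converse (optimality), I would take any valid $H$ with $H\succ G$ and exhibit a post satisfying (a) and (b). Since a maximum matching of $G_1$ leaving $a_1$ unmatched is maximum in $H_1$, the agent $a_1$ is even in $H_1$, so it has a genuine even second choice; take $p'$ to be a post of $s_H(a_1)$. By Theorem~\ref{lem:partition-invariant} the even posts coincide in $G_1$ and $H_1$, so $p'\in(\evenone)_G$, and $p'\notin f_G(a_1)$ because $f_G(a_1)\subseteq(\oddone)_G$; thus matching $a_1$ to $p'$ is a true non-rank-1 outcome. I would then argue that $p'$ must satisfy (a) and (b), for otherwise the tight-set obstruction from the first direction disappears and $a_1$ can be demoted onto $p'$ in some popular matching of $H$: if $p'$ lay in a sink component the demotion is realized through a switching path reaching a sink, and if a weight-$+1$ path from $p'$ to $\tilde{M}(a_1)$ existed it is realized by closing that path through the edge $(\tilde{M}(a_1),p')$ into a switching cycle. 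Either outcome contradicts $H\succ G$.

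The main obstacle is precisely this last step. The two conditions are phrased in the \emph{true} modified switching graph $\modG_{\tilde{M}}$, whereas ``$a_1$ is demoted'' is a statement about popular matchings of the \emph{falsified} $H$, where the crucial edge $(M''(a_1),p')$ with $p'\in s_H(a_1)$ lives. I must therefore transport the path/cycle constructions from $\modG_{\tilde{M}}$ into genuine popular matchings of $H$, using Theorem~\ref{lem:partition-invariant} to keep the $(\oddone,\evenone)$ partition (hence the edge weights) invariant, Lemma~\ref{lem:edge-deletion} to ensure no needed edge is pruned in $H$, and Lemma~\ref{lem:choices-modG-H} together with rank-1 maximality of $H_1$ to confirm the resulting matching is popular. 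Reconciling the possibly different partner $M''(a_1)$ in $H$ with $\tilde{M}(a_1)=p$, so that the $\modG_{\tilde{M}}$-path indeed completes the demotion of $a_1$, is the most delicate point of the argument.
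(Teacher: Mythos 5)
Your forward implication (a post $p'$ satisfying (a) and (b) yields a successful falsification) is correct and is essentially the paper's argument: the tight pair $(\A_{p'},\p_{p'})$ from Lemma~\ref{lem:tight-sets}, the exclusions $a_1, b \notin \A_{p'}$, the transfer $choices_H(a)\subseteq choices_{\modG}(a)\subseteq\p_{p'}$ via Lemma~\ref{lem:choices-modG-H}, and the pigeonhole contradiction. Your potential-function observation ($\phi\equiv 0$ on $(\oddone)_{\modG}$, $\phi\equiv -1$ on $(\evenone)_{\modG}$, and $w(p_i,p_j)=\phi(p_j)-\phi(p_i)$ on every edge), which makes explicit that condition (b) is equivalent to the nonexistence of \emph{any} path from $p'$ to $\tilde{M}(a_1)$, is a useful rendering of what the paper uses only tacitly through Table~\ref{tab:edge-weights}.

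The converse direction, however, has a genuine gap, and it is exactly the step you defer as ``the main obstacle.'' Given $H\succ G$ and $q'\in s_H(a_1)$ failing (a) or (b), one must exhibit a popular matching of $H$ that matches $a_1$ to $q'$, and neither of your sketched constructions achieves this. (i) In the sink-component case, applying a switching path $T$ from $q'$ to a sink and then moving $a_1$ onto the vacated $q'$ leaves the post $p=\tilde{M}(a_1)$ unmatched; but $p\in f_G(a_1)\subseteq(\oddone)_G$, so by Theorem~\ref{lem:partition-invariant} $p$ is not even in $H_1$ either, hence every maximum matching of $H_1$ matches $p$ (Lemma~\ref{lemma:node-classification}), and your constructed matching cannot be popular in $H$. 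The paper repairs this with a compensating agent $a_2$ with $p\in f_G(a_2)$ and $\tilde{M}(a_2)\in s_G(a_2)$, who is promoted onto $p$ to restore the rank-1 count; your sketch contains no such agent. (ii) In the $+1$-path case, your ``switching cycle'' closes through the edge $(\tilde{M}(a_1),q')$, but that edge in general does not exist in $\modG_{\tilde{M}}$: out-edges of $\tilde{M}(a_1)$ go only to $choices_{\modG}(a_1)\subseteq f_G(a_1)\cup s_G(a_1)$, whereas $q'\in s_H(a_1)$ is merely some even post chosen by the falsified list. The paper never closes a cycle in the switching graph; it works with matchings directly: remove $(a_1,\tilde{M}(a_1))$ and $(b,\ell(b))$ from $\tilde{M}$, apply the path $T$ from $q'$ to the now-unmatched post $\tilde{M}(a_1)$, and add the $H$-edge $(a_1,q')$; the weight $w(T)=+1$ exactly offsets $a_1$'s demotion, so rank-1 maximality in $H_1$ is preserved and popularity follows from Theorem~\ref{thm:pop-mat}. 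Carrying out these two matching constructions (together with the check that all agents remain matched inside $f_H\cup s_H$) is the substance of the paper's proof of this direction; as written, your proposal establishes only the forward implication.
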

\begin{proof}
We break down the proof into necessity and sufficiency.

{\it Necessity:}
Assume that there exists a post $p'$ satisfying both the properties of Theorem~\ref{thm:main2} exists
and let $H$
be the instance obtained when $a_1$ submits $\mathcal{L}_f = p, p'$ as output
by Step~4 of Figure~\ref{algo:better-always-fands}. We show that every popular matching in $H$ matches $a_1$ to $p$.
It is easy to see that $M = \tilde{M} \setminus \{(b, \tilde{M}(b))\}$ is  popular in $H$.
Recall that, by choice, $M$ is in fact a popular matching in $G$ such that $M(a) = p$ and $p \in f_G(a_1)$.
Since $M$ is a maximum matching on rank-1 edges of $G$,
it continues to be a maximum matching on rank-1 edges of $H$. Further, for every $a \in \A \setminus \{a_1\}$, we have
$M(a) \in \{f_G(a) \cup s_G(a)\}$ which implies that $M(a) \in \{f_H(a) \cup s_H(a)\}$. Thus $M$ is a popular matching in $H$.

We now show that every popular matching of $H$ matches $a_1$ to $p$. Assume not.
Then let $M'$ be a popular matching in  $H$ such that  $M'(a_1) = p'$.
By the choice of $p'$, the post $p'$ belongs to a non-sink component say $\Y_1$
of $\modG_{\tilde{M}}$ and $p'$ does not have a  path to $\tilde{M}(a_1) = p$. Now let us define \tightpair
$\p_{p'}$ and $\A_{p'}$ as in Lemma~\ref{lem:tight-sets}. Thus,
we have $|\p_{p'}| = |\A_{p'}|$ and every $a \in \A_{p'}$ satisfies $choices_{\modG}(a) \subseteq \p_{p'}$.
Since $p'$ does not have a path to $\tilde{M}(a_1)$ in $\modG_{\tilde{M}}$, it is clear that $\tilde{M}(a_1) \notin \p_p'$. Therefore
$a_1 \notin \A_{p'}$. Further, $\tilde{M}(b) \notin \p_p'$ since $\tilde{M}(b) = \ell(b)$ does not have any incoming edges.
Thus, for every $a \in \A_{p'}$, we conclude that  $choices_H(a)  \subseteq \p_{p'}$.
Now if $M'(a_1) = p'$ then there exists at least one
agent $a \in \A_{p'}$ which does not  have a post to be matched in $choices_H(a)$.
Thus, $M'$ cannot be a popular matching in $H$.

{\it Sufficiency:}
To prove the other direction, assume that no post satisfying the two properties of Theorem~\ref{thm:main2}
exists. And for contradiction assume that there exists an instance $H$
where every popular matching in $H$ matches $a_1$ to a post in $f_G(a_1)$.
Let  $q' \in s_H(a_1)$. First note that $q' \notin f_G(a_1)$
because otherwise $q' \in (\oddone)_G$ and then $q' \notin (\evenone)_H \cap \p$.
Since our algorithm did not find
$q'$, it must be the case that $q'$ either belongs to a sink component of $\modG_{\tilde{M}}$
or if $q'$ belongs to a non-sink component of $\modG_{\tilde{M}}$, then $q'$ has a positive weight
path to $\tilde{M}(a_1)$.
In each case we construct a popular matching $M'$ in $H$ such that $M'(a_1) = q'$.
This gives us the desired contradiction.
Now note that since $\tilde{M}(a_1) \in f_G(a_1)$ and $f_G(a_1) \subseteq (\oddone)_G$, there is at least one agent, say $a_2$,
such that $\tilde{M}(a_2) = M(a_2) \in s_G(a_2)$.

\begin{itemize}
\item $q'$ belongs to a sink component $\X_i$ of $\modG_{\tilde{M}}$: Since $q'$ belongs to a sink component $\X_i$ there
exists a path $T$ starting at $q'$
which ends in a sink in $\X_i$. Now assume  that $\tilde{M}(a_1) \notin \X_i$ or if $\tilde{M}(a_1) \in \X_i$ then,
$q'$ does not have a path to $\tilde{M}(a_1)$.
In this case, consider the matching $M_1 = \tilde{M} \cdot T$ in which
the post $q'$ is unmatched. Let $M_2 = M_1 \setminus \{(a_1, \tilde{M}(a_1))\}$. Finally $M' = \{(a_1, q'), (a_2, \tilde{M}(a_1))\} \cup  M_2$.

We first note that $w(T) = 0$. This is because $q' \in s_H(a_1)$ implies $q' \in (\evenone)_H \cap \p = (\evenone)_{\modG} \cap \p$
and the end point of the path $T$ is a sink vertex in $\modG_{\tilde{M}}$ which again belongs to $(\evenone)_{\modG} \cap \p$.
Thus from Table~\ref{tab:edge-weights}, it is clear that $w(T) = 0$.
Let $\A_T = \cup_{p \in T} \{\tilde{M}(p)\}$ denote the set of agents matched to posts in $T$.
Since $q'$ did not have a directed path to $\tilde{M}(a_1)$, the agent $a_1 \notin \A_T$.
Further, $b \notin \A_T$ since $\tilde{M}(b) = \ell(b)$ does not have an incoming edge.
Thus for every agent
$a \in \A_T$, we have $M'(a) \in \{f_{\modG}(a) \cup s_{\modG}(a)\}$ which implies that $M'(a) \in \{f_H(a) \cup s_H(a)\}$.
Further, $M'(a_2) \in f_{\modG}(a_2) = f_H(a_2)$ and $M'(a_1) \in s_H(a_1)$.
Therefore, for every $a \in \A$, $M'(a) \in \{ f_H(a) \cup s_H(a)\}$.
Finally, since $w(T) = 0$ and $a_1$ and $a_2$ compensate for the rank-1 edges amongst themselves,
it is clear that the number of rank-1 edges in $M'$ is the same as number of rank-1 edges in $\tilde{M}$.
Thus, $M'$ is
a maximum matching on rank-1 edges of $H$.
Thus $M'$ is a popular matching in $H$ such that $M'(a_1) = q'$ which gives us the desired contradiction.

The case when $q'$ has a path to $\tilde{M}(a_1)$ is handled below.

\item $q'$ belongs to a non-sink component $\Y_j$ of $\modG_M$:
Since $q'$ belongs to a non-sink component $\Y_j$,
it implies that $q'$ has a directed path $T$ to $\tilde{M}(a_1)$ such that $w(T) = +1$.
Otherwise our strategy in Figure~\ref{algo:better-always-fands}
would have found $q'$.
The case when $q'$ belongs to a sink component, let
$T$ denote the path from $q'$ to $\tilde{M}(a_1)$.
Since  $q' \in (\evenone)_{\modG}$
and $\tilde{M}(a_1) \in (\oddone)_{\modG}$, using Table~\ref{tab:edge-weights},
it is clear that $w(T) = +1$.

We obtain $M'$ as follows:
let $M_1 = \tilde{M} \setminus \{(a_1, \tilde{M}(a_1)), (b, \tilde{M}(b))\}$. This leaves the post $\tilde{M}(a_1)$ unmatched in $M_1$.
Let $M_2 = M_1 \cdot T$ and finally let $M' = M_2 \cup (a_1, q')$.
Using the same arguments as above
it is possible to show that for every $a \in \A$, $M'(a) \in \{f_H(a) \cup s_H(a)\}$.
We note that since $w(T) = +1$ and $a_1$ no longer remains matched to one of her rank-1 posts,
the number of rank-1 edges in $M'$ and $\tilde{M}$ is the same.
Thus, $M'$ is a maximum matching on rank-1 edges in $H$. Therefore, $M'$ is popular in $H$
and $M'(a_1) \in s_H(a_1)$ which gives us the required contradiction.
\end{itemize}

\REM{
We mention that although the strategy is decided with respect to a particular popular matching $M$ in $G$,
the checks done are independent of the popular matching.
This is because by Lemma~\ref{lem:bad1},  if there exists a path $T$ between two vertices $p$ and
$q$  in
a non-sink component of $G_M$, then for any popular matching $M'$ and the corresponding switching graph $G_{M'}$
we know that there exists a path $T'$ in $G_{M'}$ such that $w(T') = w(T)$.
This completes the proof of the theorem.
}

\end{proof}

Using Theorem~\ref{thm:main1} and Theorem~\ref{thm:main2} we conclude the following.

\begin{theorem}
\label{thm:final-result}
The optimal falsified preference list for a single manipulative agent to 
get {\em better always} can be computed in $O(\sqrt{n}{m})$ time if preference
lists contain ties and in time $O(m+n)$ time if preference lists are all strict.
\end{theorem}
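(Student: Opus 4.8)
The plan is to assemble the bound from three ingredients that are already in place: the classification of $a_1$ into $\A_f$, $\A_s$, or $\A_{f/s}$; the construction of the modified instance $\modG$ together with its switching graph $\modG_{\tilde{M}}$; and the two case-specific strategies of Figure~\ref{algo:better-always-sonly} and Figure~\ref{algo:better-always-fands}, whose correctness is guaranteed by Theorem~\ref{thm:main1} and Theorem~\ref{thm:main2}. First I would argue that the cost is dominated entirely by the preprocessing, and that once $\modG_{\tilde{M}}$ is in hand every check performed by the two strategies reduces to a constant number of linear-time graph traversals and partition lookups.

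For the classification, I would compute a popular matching $M$ in $G$ by Algorithm~\ref{algo:pop-matching} and then the set $PopPairs$; by Theorem~\ref{thm:pop-pairs} this takes $O(\sqrt{n}m)$ time when lists contain ties, and by the simpler strict-list characterization of Abraham~et~al.\ (in which each agent has a single rank-1 edge, so the rank-1 graph admits a greedily computable maximum matching) it can be done in $O(m+n)$ time for strict lists. Given $PopPairs$, each agent is placed into $\A_f$, $\A_s$, or $\A_{f/s}$ in $O(m+n)$ total time, as already observed after Equation~(\ref{eqn:pop-pairs}). If $a_1\in\A_f$ we output $\mathcal{L}$ and stop. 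Otherwise I would build $\modG$ by locating the unreachable posts of $G_1$ (a byproduct of the Dulmage--Mendelsohn classification, Lemma~\ref{lemma:node-classification}) and adjoining the dummy agent $b$, and then construct $\modG_{\tilde{M}}$; as noted just after its definition, the switching graph is built in $O(\sqrt{n}m)$ time with ties and in $O(m+n)$ time for strict lists.

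The observation that makes the strategy evaluation cheap — and which I expect to be the crux of the running-time argument — is that the edge weights of Table~\ref{tab:edge-weights} are a potential difference. Assigning potential $0$ to every post of $(\oddone)_{\modG}$ and potential $1$ to every post of $(\evenone)_{\modG}$ (recall $\modG_1$ has no unreachable vertices, so only this $2\times 2$ block is relevant), one checks directly from Property~\ref{prop-table-weights} that $w(p_i,p_j)$ equals the potential of $p_i$ minus the potential of $p_j$ on every edge; hence the weight of any directed path depends only on its endpoints. Consequently the condition in Figure~\ref{algo:better-always-fands} that $p'$ have no weight-$+1$ path to $\tilde{M}(a_1)$ collapses to a plain reachability test, since $p'\in(\evenone)_{\modG}$ and $\tilde{M}(a_1)\in(\oddone)_{\modG}$ force every such path to have weight $+1$; the path condition of Figure~\ref{algo:better-always-sonly} is likewise pure reachability. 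Moreover, by Property~\ref{prop1}, belonging to a sink (resp.\ non-sink) component is the same as lying in $(\eventwo)_{\modG}$ (resp.\ $(\untwo)_{\modG}$), a partition lookup. Thus I would run a single reverse search from $\tilde{M}(a_1)$ to mark all posts that reach it, combine this with the partition labels, and then either scan the posts of ranks $2,\dots,t-1$ of $\mathcal{L}$ in preference order to select the most preferred qualifying post (the $\A_s$ case), or scan $(\evenone)_G\cap\p$ for any qualifying post (the $\A_{f/s}$ case). Each such step is $O(m+n)$ on a switching graph with $O(n)$ vertices and $O(m)$ edges.

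Putting the pieces together, the running time is dominated by the popular-matching/popular-pairs computation and the switching-graph construction, yielding $O(\sqrt{n}m)$ with ties and $O(m+n)$ for strict lists, as claimed. Beyond the linear-time-postprocessing point above, the one subtlety I would be careful about is that the strategies are phrased relative to a single chosen $M$ (hence a single $\tilde{M}$): I would note that correctness and optimality for that fixed choice are exactly what Theorems~\ref{thm:main1} and~\ref{thm:main2} establish, and that the reachability and component-membership data read off $\modG_{\tilde{M}}$ are matching-independent because, by Property~\ref{prop3}, every popular matching is obtained from $M$ by applying vertex-disjoint switching paths and cycles. This ensures the list computed is optimal in the true sense and not merely optimal against one popular matching.
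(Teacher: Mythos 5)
Your proposal is correct and takes essentially the same approach as the paper's proof: the running time is dominated by computing the popular pairs and constructing the switching graph of $\modG$ ($O(\sqrt{n}m)$ with ties, $O(m+n)$ for strict lists via McDermid--Irving), after which the checks in Figures~\ref{algo:better-always-sonly} and~\ref{algo:better-always-fands} take time linear in the size of the switching graph. Your potential-function observation --- that the weights of Table~\ref{tab:edge-weights} form a potential difference, so the path-weight conditions collapse to plain reachability --- is a nice explicit justification of the linear-time checks, which the paper merely asserts.
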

\begin{proof}
The main steps of our strategy are (i) to compute the set of popular pairs,
(ii) to construct the switching graph, (iii) run the algorithm given by Figure~\ref{algo:better-always-sonly}
or Figure~\ref{algo:better-always-fands} as appropriate for the single manipulative agent.
We note that we use the modified graph $\modG$ for computing our strategies and 
let $\tilde{n}$ and $\tilde{m}$ denote the vertices and edges in $\modG$ respectively. Clearly,
$\tilde{n} = n+1$ and $\tilde{m}  < m+n = O(m)$.
Once the switching graph is constructed, we observe that the algorithms in Figure~\ref{algo:better-always-sonly}
and Figure~\ref{algo:better-always-fands} have checks which can be done in time which is linear in the size of
the switching graph. Thus the steps (i) and (ii) defined above decide the complexity of
our cheating strategy. In case of ties, we have shown that both the steps can be computed in $O(\sqrt{n}m)$
time. In case of strict lists, using the  switching graph given by McDermid and Irving~\cite{MI11},
both the steps can be computed in $O(m+n)$ time.
Thus we have the desired result.
\end{proof}

{\bf Remark:} In each case we constructed a falsified preference list for $a_1$
which was strict and of length exactly two. However, by appending the rest of the posts in $\p$
at the end of $a_1$'s preference list, there is no change in the popular matchings that the
instance $H$ admits. Thus, we conclude that, if an agent can manipulate to get {\em better always}
she can achieve the same when preference lists are required to be complete.

\section{A characterization of equilibrium}
\label{sec:equil}
Here we consider the set of agents, their preference lists and the popular
matchings algorithm as a complete information game. That is, knowing the true preference
lists of all the agents and that the central authority chooses an arbitrary
popular matching, every agent chooses a preference list for herself. This preference
list is then submitted to the central authority. The goal of every agent is to
get {\em better always}.
An equilibrium of the game is a set of preference lists, one for each agent,
such that no single agent can improve her situation by deviating from her
equilibrium preference list \cite{Osborne}.

We now show a necessary and sufficient condition for the true preference lists of the agents
to be an equilibrium of the above game. Let $G = (\A \cup \p, E)$ denote the instance where ranks on the
edges represent true preferences of the agents and let $M$ be a popular matching in $G$.
Let $A = \A_{f} \cup \A_{s} \cup \A_{f/s}$ denote the partition of agents with respect to 
the popular pairs in $G$.
Let $\modG = (\tilde{\A} \cup \p, \tilde{E})$ denote the graph as defined in Section~\ref{sec:modified-instance} and let $\tilde{M}$ denote the
corresponding popular matching which matches $b$ to $\ell(b)$. Note that the
agent $b$ is dummy and is not a part of the game.
Let $\modG_{\tilde{M}}$ denote the corresponding switching graph. 
Let $\modA = \modA_{f} \cup \modA_{s} \cup \modA_{f/s}$ denote the partition of agents with respect to 
the popular pairs in $\modG$.
We first state the following two lemmas.

\begin{lemma}
\label{lem:eq1}
If every connected component of $\tilde{G}_{\tilde{M}}$ is a sink component, then the set of true preference lists 
of agents $a \in \A$ is an equilibrium of the game defined by $G$.
\end{lemma}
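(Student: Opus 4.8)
The plan is to show that under the hypothesis no agent in any of the three classes of the partition $\A = \A_f \cup \A_s \cup \A_{f/s}$ has a profitable deviation, and then conclude equilibrium. For $a \in \A_f$ this is immediate: every popular matching already matches $a$ to one of her true rank-$1$ posts, so the realized post is her first choice, the ``better always'' goal is unattainable, and $a$ has no incentive to lie. For $a \in \A_{f/s}$ I would invoke Theorem~\ref{thm:main2} directly: a deviation exists only if some post $p' \in (\evenone)_G \cap \p$ lies in a \emph{non-sink} component of $\modG_{\tilde{M}}$; since by hypothesis every component is a sink component, condition (a) of Theorem~\ref{thm:main2} fails for every post, so no $\A_{f/s}$ agent can deviate.

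The substantive step is to rule out deviations by $\A_s$ agents, which I would do by proving that the hypothesis forces $\A_s = \emptyset$. Suppose $a_1 \in \A_s$ and fix $q \in f_G(a_1)$. First I would pin down the partition memberships in the modified instance. By Lemma~\ref{lem:f-in-odd1}, $f_G(a_1) \subseteq (\oddone)_G$, so $q \in (\oddone)_G$ and $a_1 \in (\evenone)_G$; since passing from $G_1$ to $\modG_1$ only turns the posts that were unreachable in $G_1$ into odd vertices and leaves every existing even alternating path intact, we get $q \in (\oddone)_{\modG}$, $a_1 \in (\evenone)_{\modG}$, and $\tilde{M}(a_1) = M(a_1) \in s_G(a_1) \subseteq (\evenone)_{\modG}$. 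Next I would verify that the edge $(\tilde{M}(a_1), q)$ survives into $\modG_{\tilde{M}}$: it is present before pruning because $q \in f_{\modG}(a_1)$; it is not removed in Step~4 since it joins an even agent to an odd post; and it is not removed in Step~9 because the hypothesis together with Property~\ref{prop1} gives $(\untwo)_{\modG} = \emptyset$, making Step~9 vacuous (cf. Claim~\ref{claim:no-del-Step9}). By Table~\ref{tab:edge-weights} this edge has weight $+1$.

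Now the hypothesis does the work. Since every component is a sink component, $q$ lies in a sink component, so by Property~\ref{prop1} it has a directed path $T_1$ to a sink, and by Property~\ref{prop0} that sink lies in $(\evenone)_{\modG}$. A path from an odd vertex to an even vertex has weight $-1$ (a short induction on the partition labels along the path using Table~\ref{tab:edge-weights}, exactly as in the proof of Lemma~\ref{lem:sonly-not-in-cycle}), so the path $T$ obtained by prepending the edge $(\tilde{M}(a_1), q)$ to $T_1$ is a switching path of weight $0$ ending in a sink. Applying it, $\tilde{M} \cdot T$ is by Property~\ref{prop3} a popular matching $M'$ of $\modG$ with $M'(a_1) = q$ and $M'(b) = \ell(b)$; its restriction to $\A$ is maximum on the rank-$1$ edges of $G_1$ and matches every agent inside $f_G \cup s_G$, so by Theorem~\ref{thm:pop-mat} it is popular in $G$ and matches $a_1$ to the rank-$1$ post $q$. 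This contradicts $a_1 \in \A_s$, hence $\A_s = \emptyset$.

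Combining the three cases, no agent can deviate to get better always, which is precisely the assertion that the true preference lists are an equilibrium. I expect the main obstacle to be the bookkeeping of the second paragraph: transferring the partition memberships of $q$, $a_1$, and $\tilde{M}(a_1)$ from $G$ to the modified instance and confirming that the pivotal edge $(\tilde{M}(a_1), q)$ is not pruned. This is exactly the point where the distinction between $G_M$ and $\modG_{\tilde{M}}$ (the phenomenon isolated in Example~\ref{ex:need-modified}) is essential, so working in $\modG_{\tilde{M}}$ rather than $G_M$ is what makes the sink-component hypothesis bite.
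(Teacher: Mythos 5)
Your proposal is correct in substance, and it handles the $\A_f$ and $\A_{f/s}$ cases exactly as the paper does (no incentive for $\A_f$; Theorem~\ref{thm:main2} plus the absence of non-sink components for $\A_{f/s}$). Where you genuinely diverge is the $\A_s$ case. The paper argues in two steps: first $\modA_s = \emptyset$, by applying Lemma~\ref{lem:sonly-not-in-cycle} inside $\modG$ (an $\modA_s$ agent would force a non-sink component of $\modG_{\tilde{M}}$); then a separate bridge showing $\modA_s = \emptyset$ implies $\A_s = \emptyset$. That bridge is the delicate part of the paper's proof: a popular matching $M'$ of $\modG$ witnessing $a' \in \modA_f \cup \modA_{f/s}$ may match the dummy agent $b$ to a real post (one of the posts unreachable in $G_1$), so the paper must run a displacement argument, rematching some agent $a''$ with $M'(a'') \in s_G(a'')$ onto $M'(b)$, before $b$ can be deleted to land back in $G$. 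You avoid this bridge entirely: by constructing the witness matching yourself as $\tilde{M} \cdot T$ for a switching path $T$ starting at $\tilde{M}(a_1)$, and observing that $\ell(b)$ has no incoming edges in $\modG_{\tilde{M}}$, you guarantee $b$ stays matched to $\ell(b)$, so the restriction to $\A$ is popular in $G$ directly by Theorem~\ref{thm:pop-mat}. This replaces the paper's black-box use of Lemma~\ref{lem:sonly-not-in-cycle} plus its displacement lemma with one explicit construction, at the cost of redoing the weight and pruning bookkeeping inside $\modG_{\tilde{M}}$ --- which you do correctly, including the key observation that the hypothesis forces $(\untwo)_{\modG} = \emptyset$ and makes Step~9 vacuous.

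One small gap: when you prepend the edge $(\tilde{M}(a_1), q)$ to the path $T_1$ from $q$ to a sink, $T_1$ could itself pass through $\tilde{M}(a_1)$, in which case the concatenation is not a simple path and ``applying'' it is undefined. The fix is one sentence: if $q$ has a path to $\tilde{M}(a_1)$ in $\modG_{\tilde{M}}$, then $(\tilde{M}(a_1), q)$ lies on a cycle, every cycle is a switching cycle by Property~\ref{prop2}, and applying that cycle by Property~\ref{prop3} yields a popular matching of $\modG$ with $a_1$ matched to $q$, giving the same contradiction. (The paper's own proof of Lemma~\ref{lem:sonly-not-in-cycle} sidesteps this by disposing of the cycle case first; your writeup should do the same.)
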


\begin{proof}
We claim that since every connected component is a sink component, every $a \in \modA$ is such that $a \in {\modA}_f \cup {\modA}_{f/s}$.
Assume for contradiction that, there exists an agent $a \in {\modA}_s$. Then by Lemma~\ref{lem:sonly-not-in-cycle}, we know that
$\tilde{M}(a)$ belongs to a non-sink component of $\modG_{\tilde{M}}$. However, there are no non-sink components, therefore $a \notin \modA_s$
and hence $\modA_s = \phi$.
We claim that this also implies
that $\A_s = \phi$. Thus every $\A = \A_f \cup \A_{f/s}$.
If $a \in \A_f$, she does not deviate from her true preference list assuming
that the rest of the agents are truthful. On the other hand, if $a \in \A_{f/s}$, from Theorem~\ref{thm:main2}, $a$
can get {\em better always} if and only if there exists a post belonging to a non-sink component of $G_M$ with additional properties.
However, there is no non-sink component. Thus, the set of true preference lists is an equilibrium.

It remains to show that if $\modA_s = \phi$ then $\A_s = \phi$.
Assume not. Then there exists an $a' \in \A$ such that $a' \in \A_s$. However, $a' \in \modA_f \cup \modA_{f/s}$.
Therefore, there exists a popular matching $M'$ in $\modG$ such that $M'(a') \in f_{\modG}(a') = f_{G}(a')$. Now $M'$ has to
match $b$ to one of her rank-1 posts, otherwise if $M'(b) = \ell(b)$, then $M_1 = M' \setminus \{(b, \ell(b))\}$ is a popular
matching in $G$, a contradiction to the fact that $a' \in \A_s$. Now, recall by the definition of the dummy agent $b$,
the set of rank-1 posts for $b$ were the {\em unreachable} posts in the graph $G_1$. Therefore, there must exists
one agent say $a'' \in \A$ such that $M'(a'') \in s_{\modG}(a'') = s_{G}(a'')$. Consider the matching
$M_1 = M' \setminus \{(b, M'(b))\}$ which leaves $M'(b)$ unmatched. Finally, consider
$M_2 = M_1 \cup (a'', M'(b))$. It is easy to see that $M_2$ is a popular matching in $G$ such that $M_2(a') \in f_{G}(a')$,
a contradiction to the fact that $a' \in \A_s$.
\end{proof}

\begin{lemma}
\label{lem:eq2}
If there exists at least two non-sink components in $\tilde{G}_{\tilde{M}}$, then the set of true preference lists 
of agents $a \in \A$ is  not an equilibrium of the game defined by $G$.
\end{lemma}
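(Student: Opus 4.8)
To show that the true lists are not an equilibrium it suffices to produce one agent who can profitably deviate, and the natural candidate is an $\A_{f/s}$ agent: she is sometimes matched to a non-rank-1 post in $G$, so locking herself into her rank-1 posts in every popular matching of the falsified instance is a strict improvement. The plan is to extract such an agent from one non-sink component and then use the \emph{second} non-sink component to supply the ``poisoned'' rank-2 post required by Theorem~\ref{thm:main2}.

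The engine of the proof is the structural claim that \emph{every non-sink component of $\modG_{\tilde{M}}$ contains a post of $(\evenone)_G\cap\p$ and hosts an agent of $\A_{f/s}$ whose true rank-1 post lies in it.} I would prove this by looking at a terminal strongly connected component $S$ of a non-sink component (one exists because a sink-free finite digraph contains cycles). Reading the weights off Table~\ref{tab:edge-weights}, an edge $(p_i,p_j)$ has weight $\phi(p_i)-\phi(p_j)$ with $\phi(\oddone)=0$ and $\phi(\evenone)=1$, so every cycle has weight $0$ and a weight-$(+1)$ edge is exactly an $\evenone\!\to\!\oddone$ edge. The heart of the matter is that $S$ cannot be ``monochromatic'': if every post of $S$ were in $\oddone$, the (even) agent on such a post treats it as rank-1, and by Claim~\ref{claim:no-del-Step9} its edge to its genuine $s$-post survives, so that edge either points to an $\evenone$ post (contradicting all-odd) or leaves $S$ (contradicting terminality), while an $s$-post equal to a last resort would be an unmatched sink (contradicting non-sink); the all-even case is handled symmetrically through the $+1$ edge from an even non-rank-1 post to the agent's odd rank-1 post. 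Hence $S$ mixes $\oddone$ and $\evenone$ posts and carries a $+1$ edge $(p_{\mathrm{e}},p_{\mathrm{o}})$ on a cycle; applying that switching cycle (Property~\ref{prop3}) promotes $a=\tilde{M}(p_{\mathrm{e}})$ from the non-rank-1 post $p_{\mathrm{e}}$ to her rank-1 post $p_{\mathrm{o}}$, witnessing $a\in\A_{f/s}$, and $p_{\mathrm{e}}\in(\evenone)_G$ is the required even post.

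Granting the claim, the lemma falls out quickly. Apply it to $\Y_1$ to obtain $a_1\in\A_{f/s}$ together with a popular matching in which $a_1$ occupies a rank-1 post $p\in f_G(a_1)$ that lies in $\Y_1$; take $\tilde{M}$ to be the $\modG$-lift of this matching, as in the hypothesis of Theorem~\ref{thm:main2}. Because the vertex set $\untwo$, and hence the partition into components, is independent of the chosen popular matching (Lemma~\ref{lemma:node-classification}), $\Y_1$ and $\Y_2$ are still the two non-sink components of $\modG_{\tilde{M}}$. Apply the claim to $\Y_2$ to get a genuine post $p'\in(\evenone)_G\cap\Y_2$. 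Since $\Y_1$ and $\Y_2$ are distinct weakly connected components, there is no directed path whatsoever from $p'$ to $\tilde{M}(a_1)\in\Y_1$, so in particular none of weight $+1$; thus $p'$ meets both hypotheses (a) and (b) of Theorem~\ref{thm:main2}. That theorem then certifies a falsification of $a_1$'s list under which every popular matching matches her to one of her true rank-1 posts -- a \emph{better always} outcome -- so the true lists are not an equilibrium.

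The main obstacle is precisely the ``monochromatic $S$'' step, and within it the possibility of a terminal component built entirely from even rank-1 posts (which would carry only $\A_f$ agents and hence no deviator). The leverage for ruling this out is the modified instance: the dummy agent $b$ renders every vertex of $\modG_1$ either odd or even, while the last-resort posts and the fact that Step~4 deletes only $\oddone$--$\oddone$ edges let me trace the required even post back along an undeleted rank-1 alternating path to an unmatched post, which is a sink in $\modG_{\tilde{M}}$; this would force the component to be a sink component after all. Making this rescue airtight is the one delicate point; the remainder is bookkeeping with Table~\ref{tab:edge-weights}, Claim~\ref{claim:no-del-Step9}, and Property~\ref{prop3}.
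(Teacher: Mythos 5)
You prove the lemma correctly, and at the top level your route coincides with the paper's: both arguments reduce the statement to the two structural claims that every non-sink component of $\modG_{\tilde{M}}$ contains (i) an agent of $\A_{f/s}$ and (ii) a post of $(\evenone)_G$, and both then let the $\A_{f/s}$ agent found in $\Y_1$ deviate by taking the even post found in $\Y_2$ as her falsified rank-2 post; since distinct weak components admit no connecting path whatsoever, conditions (a) and (b) of Theorem~\ref{thm:main2} hold and the deviation succeeds. Where you differ is in how the two claims are proved, and there your write-up is actually \emph{more} complete than the paper's. For (ii) the paper argues directly: if $\Y$ contains a post $p \in (\oddone)_{\modG}$, then $a = \tilde{M}(p)$ lies in $(\untwo)_{\modG} \cap (\evenone)_{\modG}$, so by Claim~\ref{claim:no-del-Step9} the edge from $a$ to $q \in s_{\modG}(a)$ survives both deletion steps, putting the even post $q$ in $\Y$; if $\Y$ has no odd post, (ii) is trivial. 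For (i), however, the paper only \emph{asserts} that some $\evenone \oddone$ edge of $\Y$ lies on a cycle, saying this can be verified from Table~\ref{tab:edge-weights}; no argument is supplied. Your terminal-SCC analysis is exactly a proof of that assertion, and the sub-case you flag as delicate --- a terminal SCC whose posts are all even and all matched along rank-1 edges --- is precisely the configuration the paper's one-liner glosses over. Your rescue does go through: the agents sitting on such posts are all $\oddone$ agents, so after Step~4 their switching-graph edges lead only to even rank-1 posts, which terminality traps inside $S$; walking backwards along the rank-1 alternating path guaranteed by their oddness in $\modG_1$ therefore never leaves $S$, and it terminates at a post unmatched on rank-1 edges, which is either unmatched in $\tilde{M}$ (a sink, contradicting non-sink) or matched along a non-rank-1 edge (contradicting the sub-case hypothesis). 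Two small points would tighten your argument: the matching-independence of $\Y_1, \Y_2$ is best justified by noting that the pruned graph $G''$ is itself matching-independent (Lemma~\ref{lemma:node-classification}(a) applied to the graph of Step~4), so the switching graphs of all popular matchings have the same weak components as vertex sets; and your cycle argument certifies $a \in \A_{f/s}$ with respect to $\modG$, so you should add that the switching cycle avoids $\ell(b)$ (which has no incoming edge), whence the matchings involved restrict to popular matchings of $G$ and $a \in \A_{f/s}$ with respect to $G$ as well.
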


\begin{proof}

Let $\Y_1$ and $\Y_2$ be two non-sink components in $\modG_{\tilde{M}}$. We first claim that any non-sink component
there exists an $a \in \A$ such that $a \in \A_{f/s}$. So let $a_1 \in \Y_1$ be such that $a_1 \in \A_{f/s}$.
Further, we claim that any non-sink component contains
a post $p \in \p$ such that $p \in (\evenone)_G$.
Let $p_2 \in \Y_2$ such that $p_2 \in (\evenone)_G$.
Assuming these two claims, it is easy to see that the set of true preference lists is not an equilibrium
because $a_1$ can falsify her preference list by choosing $p_2$ as her rank-2 post and
ensuring that every popular matching in the falsified instance matches $a_1$ to one of her true rank-1 posts.

It remains to show that the the two claims hold. First let us prove that, given a non-sink component in $\modG_{\tilde{M}}$,
there exists a $p \in \p$ such that $p \in (\evenone)_G$. Consider a non-sink component $\Y$ containing a post in $p \in \p$
such that $p \in (\oddone \cup \unone)_{\modG}$. Since $p$ belongs to a non-sink component, therefore,
$p \in (\untwo)_{\modG}$ and therefore $\tilde{M}(p) = a \in (\untwo)_{\modG}$. Therefore the edge $(a, q)$ where
$q \in s_{\modG}(a)$ does not get deleted and hence $q \in \Y$. However, note that since $q \in s_{\modG}(a)$,
therefore $q \in (\evenone)_{\modG} \cap \p = (\evenone)_{G} \cap \p$. Thus, the claim holds.

To show that in a non-sink component $\Y$ of $\modG_{\tilde{M}}$, there exists an agent $a \in \A_{f/s}$,
we show that there exists at least one $\evenone \oddone$ edge in $\Y$ which belongs to a cycle. This claim
can be  verified using the Table~\ref{tab:edge-weights}.

\end{proof}
Finally, we are left with the case when $\tilde{G}_{\tilde{M}}$ contains exactly one non-sink component and zero or more sink components.
In this case if there exists an agent $a \in \A_{f/s}$ such that $\tilde{M}(a)$ belongs to a sink component of $\tilde{G}_{\tilde{M}}$, we conclude
that the set of true preferences is not an equilibrium. This is because $a$ can falsify her preference list
by choosing her $s(a)$ contained in the non-sink component. In the remaining case, assume that there is no
agent $a \in \A_{f/s}$ such that $\tilde{M}(a)$ belongs to a sink component of $\tilde{G}_{\tilde{M}}$.
Then for every $a \in \A_s$ such that $\tilde{M}(a)$ belongs to the non-sink component in $\tilde{G}_{\tilde{M}}$, 
we can verify using our strategy for a single manipulative
agent whether true preference list is optimal for $a$. This can be done in time 
proportional to the size of the preference list of $a$.
Thus, given the switching graph, it is clear that in linear time we can verify whether the set of true preference lists
is an equilibrium strategy for the above defined game.
We therefore conclude the following:
\begin{theorem}
There exists an $O(\sqrt{n}m)$ time algorithm to decide whether true preference lists of the agents
are an equilibrium of the game defined above when preference lists contain ties and an $O(m+n)$ time
algorithm for the same when preference lists are all strict.
\end{theorem}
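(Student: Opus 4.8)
The plan is to assemble the decision procedure from the trichotomy established by Lemma~\ref{lem:eq1}, Lemma~\ref{lem:eq2}, and the single--non-sink-component analysis preceding the theorem, and then to bound the running time by charging every step to quantities already accounted for earlier in the paper. First I would run the precomputation exactly once: build the modified instance $\modG$, compute a popular matching $\tilde{M}$ with $\tilde{M}(b)=\ell(b)$, compute the set of popular pairs so as to obtain the partition $\A = \A_f \cup \A_s \cup \A_{f/s}$, and construct the switching graph $\modG_{\tilde{M}}$. By Theorem~\ref{thm:pop-pairs} and the remarks following Algorithm~\ref{algo:pop-matching}, all of this costs $O(\sqrt{n}m)$ with ties, and $O(m+n)$ for strict lists using the McDermid--Irving switching graph, exactly as invoked in the proof of Theorem~\ref{thm:final-result}. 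On top of this I would compute the weakly connected components of $\modG_{\tilde{M}}$ and tag each post with its component together with a bit recording whether that component contains an unmatched post (i.e.\ is a sink component); this is linear in the size of $\modG_{\tilde{M}}$.

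The decision then follows the case split. If $\modG_{\tilde{M}}$ has no non-sink component, Lemma~\ref{lem:eq1} certifies that the true lists form an equilibrium; if it has at least two non-sink components, Lemma~\ref{lem:eq2} certifies that they do not. The only remaining case is exactly one non-sink component $\Y$: here I would first scan the agents and report ``not an equilibrium'' if some $a \in \A_{f/s}$ has $\tilde{M}(a)$ in a sink component (such an $a$ deviates by re-ranking a post of $s(a)$ lying in $\Y$, as in Theorem~\ref{thm:main2}), and otherwise invoke, for each $a \in \A_s$ with $\tilde{M}(a) \in \Y$, the single-agent test of Theorem~\ref{thm:main1}, returning ``not an equilibrium'' the moment one tested agent has a profitable deviation and ``equilibrium'' if none does. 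Correctness of each branch is immediate from the cited statements; the genuine content of the theorem is the time bound.

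The main obstacle is the running-time accounting for this last case, since the naive reading of Theorem~\ref{thm:main1} invokes, for each of the (say $k$) tested agents, a reachability query ``does some post $p$ ranked $2,\dots,t-1$ on $a$'s list have a path to $\tilde{M}(a)$ in $\modG_{\tilde{M}}$?'', and a fresh graph search per agent would cost $\Omega\!\big(k\cdot|\modG_{\tilde{M}}|\big)$. To keep the total linear I would fold the reachability into the precomputation. Condition (a) of Theorem~\ref{thm:main1} (``$p$ lies in a sink component'') is an $O(1)$ lookup per post once the component tags are stored, so detecting any qualifying agent via (a) is a single $O(m)$ scan. For condition (b) I would exploit that there is a \emph{single} non-sink component: every candidate $p$ relevant to (b) lies in $\Y$, as does every target $\tilde{M}(a)$, so it suffices to compute the strongly connected components of $\Y$ and the reachability relation of its condensation once and answer each query ``$p$ reaches $\tilde{M}(a)$'' by a condensation lookup rather than a per-agent traversal. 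Organising this so the reachability information is produced within the switching-graph construction budget is the delicate step; with it in place, each agent's test scans her preference list with $O(1)$ work per entry, so the third case costs $O\!\big(\sum_{a}|\mathrm{pref}(a)|\big)=O(m)$ overall. The whole procedure is then dominated by the precomputation, giving $O(\sqrt{n}m)$ with ties and $O(m+n)$ for strict lists, as claimed.
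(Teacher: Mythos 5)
Your decision procedure is structurally identical to the paper's: the same precomputation (build $\modG$, the popular matching $\tilde{M}$ with $\tilde{M}(b)=\ell(b)$, the popular pairs and hence the partition $\A_f \cup \A_s \cup \A_{f/s}$, and the switching graph $\modG_{\tilde{M}}$), the same trichotomy on the number of non-sink components resolved by Lemma~\ref{lem:eq1} and Lemma~\ref{lem:eq2}, and in the remaining case the same two sub-tests (first an $\A_{f/s}$ agent whose $\tilde{M}$-partner lies in a sink component, then the Theorem~\ref{thm:main1} test applied to each $a \in \A_s$ with $\tilde{M}(a)$ in the unique non-sink component). So you have reproduced the paper's argument, including its case analysis and its correctness justifications.

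The one place where you go beyond the paper is the time accounting for the reachability queries in condition (b) of Theorem~\ref{thm:main1}, and there your proposed fix does not actually close the gap you correctly identify. Answering each query ``does $p$ reach $\tilde{M}(a)$'' by a lookup presupposes the reachability relation of the condensation of the non-sink component, i.e., the transitive closure of a DAG; this is not computable in time linear in the size of $\modG_{\tilde{M}}$ (its output alone can be quadratic in the number of vertices), so it does not fit in the $O(\sqrt{n}m)$ budget, let alone $O(m+n)$ --- which is presumably why you labelled it ``the delicate step'' rather than a solved one. You should know that the paper is no more rigorous at exactly this point: it simply asserts that the per-agent verification ``can be done in time proportional to the size of the preference list of $a$,'' offering no mechanism for answering condition (b) in constant time per queried post. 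For strict lists the assertion is credible, since in the McDermid--Irving switching graph every vertex has out-degree at most one (as $f(a)$ and $s(a)$ are singletons), so each component is a functional graph in which such reachability questions are trivial; with ties the out-degrees are unbounded and the difficulty is genuine. In short: same approach as the paper, with an honest flag on a time-analysis step that neither your write-up nor the paper's own proof actually discharges for the case of ties.
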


\subsection*{ Conclusion} In this paper we presented cheating strategies for a single manipulative agent to get {\em better always}.
We also studied the equilibrium of a non-cooperative game with all agents. It would be interesting to study how two
or more agents co-operate and falsify their preference lists in order to get {\em better always}. We leave this as an open problem.
Another contribution of the paper is the switching graph characterization of the popular matchings problem with ties. McDermid and Irving~\cite{MI11}
have used their characterization in case of strict lists to give efficient algorithms for the {\em optimal} popular matchings
problem ~\cite{MI11} ~\cite{KN08}. It would be useful to exploit the characterization developed here and design efficient algorithms for the {\em optimal}
popular matchings problem with ties allowed. We leave that as another open question.

\noindent {\bf Acknowledgment:} The author is grateful to Prof. Vijaya Ramachandran for useful discussions on the problem
and also to an anonymous reviewer for the useful comments which improved the presentation considerably.
\bibliographystyle{abbrv}
\bibliography{references}

\begin{thebibliography}{10}

\bibitem{AIKM07}
D.~J. Abraham, R.~W. Irving, T.~Kavitha, and K.~Mehlhorn.
\newblock Popular matchings.
\newblock {\em SIAM Journal on Computing}, 37(4):1030--1045, 2007.

\bibitem{DL92}
P.~Dagum and M.~Luby.
\newblock Approximating the permanent of graphs with large factors.
\newblock {\em Theor. Comput. Sci.}, 102(2):283--305, 1992.

\bibitem{GS62}
D.~Gale and L.~Shapley.
\newblock College admissions and the stability of marriage.
\newblock {\em American Mathematical Monthly}, 69:9--14, 1962.

\bibitem{Gar75}
P.~G\"{a}rdenfors.
\newblock Match making: assignments based on bilateral preferences.
\newblock {\em Behavioural Sciences}, 20:166--173, 1975.

\bibitem{Huang07}
C.-C. Huang.
\newblock Cheating to get better roommates in a random stable matching.
\newblock In {\em Proceedings of 24th Annual Symposium on Theoretical Aspects
  of Computer Science}, pages 453--464, 2007.

\bibitem{HuangK11}
C.-C. Huang and T.~Kavitha.
\newblock Near-popular matchings in the roommates problem.
\newblock In {\em Proceedings of the 19th Annual European Symposium on
  Algorithms}, pages 167--179, 2011.

\bibitem{IKMMP04}
R.~W. Irving, T.~Kavitha, K.~Mehlhorn, D.~Michail, and K.~Paluch.
\newblock Rank-maximal matchings.
\newblock {\em ACM Transactions on Algorithms}, 2(4):602--610, 2006.

\bibitem{Kavitha12}
T.~Kavitha.
\newblock Popularity vs maximum cardinality in the stable marriage setting.
\newblock In {\em Proceedings of the 23rd ACM-SIAM Symposium on Discrete
  Algorithms}, pages 123--134, 2012.

\bibitem{KMN11}
T.~Kavitha, J.~Mestre, and M.~Nasre.
\newblock Popular mixed matchings.
\newblock {\em Theoretical Computer Science}, 412(24):2679--2690, 2011.

\bibitem{KN08}
T.~Kavitha and M.~Nasre.
\newblock Note: Optimal popular matchings.
\newblock {\em Discrete Applied Mathematics}, 157(14):3181--3186, 2009.

\bibitem{Mah06}
M.~Mahdian.
\newblock Random popular matchings.
\newblock In {\em Proceedings of 7th ACM Conference on Electronic Commerce},
  pages 238--242, 2006.

\bibitem{McC08}
R.~M. McCutchen.
\newblock The least-unpopularity-factor and least-unpopularity-margin criteria
  for matching problems with one-sided preferences.
\newblock In {\em Proceedings of the 15th Latin American Symposium on
  Theoretical Informatics}, pages 593--604, 2008.

\bibitem{MI11}
E.~McDermid and R.~W. Irving.
\newblock Popular matchings: structure and algorithms.
\newblock {\em Journal of Combinatorial Optimization}, 22(3):339--358, 2011.

\bibitem{Osborne}
M.~J. Osborne and A.~Rubinstein.
\newblock {\em A course in game theory}.
\newblock MIT Press, 1994.

\bibitem{GGL95new}
W.~R. Pulleyblank.
\newblock {\em Handbook of {C}ombinatorics ({V}ol. 1)}, chapter {M}atchings and
  {E}xtensions, pages 179--232.
\newblock MIT Press, Cambridge, MA, USA, 1995.

\bibitem{TST01}
C.-P. Teo, J.~Sethuraman, and W.-P. Tan.
\newblock Gale-shapley stable marriage problem revisited: Strategic issues and
  applications.
\newblock {\em Management Science}, 47(9):1252--1267, 2001.

\end{thebibliography}


\end{document}